\documentclass[11pt,a4paper,final]{article}
\usepackage[utf8]{inputenc}
\usepackage{amsmath}
\usepackage{amsfonts}
\usepackage{amssymb}
\usepackage{graphicx}
\usepackage{xcolor}
\usepackage[left=1in,right=1in,top=1in,bottom=1in]{geometry}

\newcommand{\qed}{\hfill$\rule{2mm}{3mm}$}
\newenvironment{proof}{\par{\noindent \bf Proof }}{\qed \par}

\usepackage{natbib}
\usepackage{amssymb}
\newcommand{\xhdr}[1]{\paragraph{ \textbf{#1.}}}

\newcommand{\dd}[1]{\frac{\partial}{\partial #1}}

\usepackage{amsmath}
\usepackage{natbib}
\usepackage{amsfonts}
\usepackage{amssymb}
\usepackage{graphicx}
\usepackage{subcaption}
\usepackage{multirow}
\usepackage{wrapfig}
\usepackage{url}
\usepackage{color}

\newtheorem{theorem}{Theorem}
\newtheorem{proposition}{Proposition}
\newtheorem{definition}{Definition}
\newtheorem{lemma}{Lemma}

\newtheorem{property}{Property}

\newcommand{\reals}{\mathbb{R}}

\newcommand{\cD}{\mathcal{D}}


\newcommand{\hl}[1]{\textcolor{black}{#1}}

\begin{document}
\title{Allocating Opportunities in a Dynamic Model of Intergenerational Mobility}

\author{
   \makebox[.45\textwidth]{Hoda Heidari}\\
   Carnegie Mellon University\\
   \url{hheidari@cmu.edu} \\
   \and
   \makebox[.45\textwidth]{Jon Kleinberg}\\
   Cornell University\\
   \url{kleinberg@cornell.edu} \\
}

\date{}

\maketitle

\begin{abstract}
Opportunities such as higher education can promote
{\em intergenerational mobility}, leading individuals to achieve
levels of socioeconomic status above that of their parents.
We develop a dynamic model for allocating such opportunities in 
a society that exhibits bottlenecks in mobility;
the problem of optimal allocation reflects a trade-off between
the benefits conferred by the opportunities in the current generation 
and the potential to elevate the socioeconomic status of 
recipients, shaping the composition of future generations in ways
that can benefit further from the opportunities.
We show how 
optimal allocations in our model arise as solutions to continuous
optimization problems over multiple generations, and we find in
general that these optimal solutions can favor recipients of
low socioeconomic status over slightly higher-performing individuals
of high socioeconomic status --- a form of socioeconomic affirmative
action that the society in our model discovers in the pursuit of
purely payoff-maximizing goals.
We characterize how the structure of the model can lead
to either temporary or persistent affirmative action,
and we consider extensions of the model with more complex processes
modulating the movement between different levels of
socioeconomic status.
\end{abstract}

\section{Introduction}

Intergenerational mobility --- 
the extent to which an individual's
socioeconomic status differs from the status of their
prior generations of family members ---
has emerged as a central notion in our understanding of inequality.
A large amount of empirical work has gone into estimating 
the extent of mobility for different subsets of society;
while many of the effects are complex and challenging to measure,
two broad and fairly robust principles emerge from this work. 
First, socioeconomic status is persistent across
generations: an individual's socioeconomic status is strongly
dependent on parental status.
As \citet{lee-solon-mobility-survey} write in the opening to their survey of this topic,
\textit{``Over the past two decades, a large body of research has documented
that the intergenerational transmission of economic status in the United States is much stronger than earlier sociological and economic
analyses had suggested''}. 
Second, certain types of opportunities can serve as
strong catalysts for socioeconomic mobility;
a canonical example is higher education, which has the potential to raise
an individual's socioeconomic status (and, by the previous principle,
that of their current or future children as well).
As \citet{chetty-mobility-trends} write,
\textit{``The fact that the college attendance is a good proxy for income
mobility is intuitive given the strong association between higher
education and subsequent earnings''}.

An important question from a social planning perspective is thus the choice of policy for allocating opportunities to people of different levels of socioeconomic status.
(Again, we can think of access to higher education as a
running example in this discussion.)
Many goals can motivate the choice of policy, including
the reduction of socioeconomic inequality and the prioritization of opportunities to those most in need.
Such goals are often viewed as operating in tension with 
the aim of maximizing the achievable payoff from the available opportunities,
which would seem to suggest targeting the opportunities
based only on the anticipated performance of the recipient, 
not their socioeconomic status.
In this view, society is implicitly being asked to choose between these goals;
this consideration forms a central ingredient in the 
informal discourse and debate around the allocation of opportunity.
But through all of this, a challenging question remains: to what
extent is the tension between these goals genuine, and to what extent
can they be viewed as at least partially in alignment?

\hl{
A large body of work in economics compares various allocation policies in terms of the above seemingly-competing criteria --- typically in simplified settings in which only two generations are considered. The literature includes seminal work by Nobel Laureate Garry Becker with Nigel Tomes~\citep{becker1986human} and by Glenn Loury~\citep{loury1981intergenerational}. 
In multigenerational settings, however, deriving the optimal policy becomes exceedingly challenging, and it has been highlighted as a class of open questions in this literature. For example, in his work on models of college admissions and intergenerational mobility, \citet{durlauf2008affirmative} notes:
\textit{``
A college admissions rule has intergenerational effects because it not only influences the human capital of the next generation of adults, but also affects the initial human capital of the generation after next. [...] Efficiency in student allocation [in this case] is far more complicated than before. I am unaware of any simple way of describing efficiency conditions for college assignment rules analogous to [the above setting].''}
In this work, we address this challenge and the associated open questions
concerning the behavior of multigenerational models.
A key ingredient in our progress on these questions is the development
of methods for working with a class of Markov Decision Processes that operate over
continuous states and continuous actions.
%
Our analysis of multigenerational models enables us to 
investigate the apparent tension between efficiency 
and fairness considerations in allocating opportunities.
}

\xhdr{Allocating Opportunities in a Payoff-Maximizing Society}
We work with a simple mathematical model
representing a purely payoff-maximizing society, operating over
multiple generations. \hl{As we discuss briefly in Section~\ref{sec:related-brief} and at more length in Appendix~\ref{sec:related}, our model is grounded in the types of models proposed
in economic theory work on these problems.}
The society must decide how to allocate opportunities in each generation across a population heterogeneous in its 
socioeconomic status.
The payoff to the society is the total performance of everyone who receives the opportunities,
summed (with discounting) over all generations.
Although the set-up of the model is highly streamlined, the analysis of
the model becomes quite subtle since society
must solve a continuous-valued dynamic programming problem over multiple generations.

What we find from the model is that the optimal solution will in general
tend to offer opportunities to individuals of lower socioeconomic status
over comparable individuals of higher socioeconomic status, even when
these competing individuals are predicted to have a slightly better performance from receiving the opportunity.
This is not arising because the optimal solution has any a priori
interest in reducing socioeconomic inequality (although such goals are important in their own right~\citep{forde2004taking}); rather it is strictly
trying to maximize payoff over multiple generations.
But given two individuals of equal predicted performance, the
one with lower socioeconomic status confers an added benefit to the
payoff function: their success would grow the size of the socioeconomically advantaged class, resulting in higher payoffs in future generations.
Because the difference in payoff contributions between these two individuals is strictly positive, the same decision would be optimal even if the individual of lower socioeconomic status had a slightly lower predicted performance from receiving the opportunity. The optimal solution should still favor the candidate with lower status in this case.

In other words, the society in this model discovers a form of
socioeconomic affirmative action in allocating opportunities,
based purely on payoff-maximizing motives.
The model thus offers a view of a system in which
reducing inequality is compatible with direct payoff maximization.
In this sense, our results belong to a genre of analyses
(popularized by Page \citep{page-difference-book} and others)
asserting that policies and interventions that we think of as motivated
by equity concerns, can also be motivated by purely performance-maximizing considerations: {\em even if} society only cares about performance, not equity, it should still (at least in the underlying models) 
undertake these policies.
\hl{In addition to providing a purely utilitarian motivation for socioeconomic affirmative action, our model provides novel insights regarding the shape and extent of effective  affirmative action policies by specifying the way in which criteria for receiving the opportunity should be adjusted based on socioeconomic status to maximize society's performance across multiple generations.}

We now give a rough overview of the model and results; 
a complete description of the model is provided in the following section.

\xhdr{A Model for Allocating Opportunities}
We consider a population that is partitioned into two groups
of different socioeconomic status:
$D$ (disadvantaged), consisting of a $\phi_0$ fraction of
the population, and $A$ (advantaged), 
consisting of a $\phi_1 = 1 - \phi_0$ fraction of the population.
Each agent $i$ (from either group)
has an ability $a_i$ drawn uniformly at random from
the interval $[0,1]$,

Society has the ability to offer an opportunity to an $\alpha$ fraction
of the population. \hl{Note that the parameter $\alpha$ specifies the inherent limitation on the amount of opportunities available. Since opportunities are limited, the society has to wrestle with the question of how to allocate them.}
An individual $i$ in group $D$ who is offered the opportunity has
a probability $\sigma a_i$ of succeeding at it, for a parameter 
$0 < \sigma < 1$.
An individual $i$ in group $A$ who is offered the opportunity has
a probability $\sigma a_i + \tau$ of succeeding at it, for the same
$\sigma$ and an additional parameter $0 < \tau \leq 1 - \sigma$ 
reflecting the advantage.
We will refer to the above quantities as the {\em success probabilities} of the agents. Success probabilities reflect various levels of performance when agents are offered the opportunity.

Anyone in group $D$ who is offered the opportunity and succeeds at it
moves up to group $A$.
Each individual is then replaced by one offspring of the same
socioeconomic status and the process continues to the next generation.
In the general form of the model, there is also
some probability that an individual's offspring does not perfectly
inherit their socioeconomic status.
The payoff to society is the number of individuals who succeed at
the opportunity summed over all generations, with the generation $t$
steps into future multiplied by $\gamma^t$ for a discount factor 
$0 < \gamma < 1$.

\xhdr{Summary of Results}
In any given generation, society's policy will consist of a threshold for group $D$ and a (possibly different) threshold for group $A$:
the opportunity is given to every individual whose success probability
is above the threshold for their group.
The optimal policy is given by a dynamic program over the continuous
set of all possible choices for the population composition $(\phi_0, \phi_1)$ as state variables.
We solve the most basic version of the model analytically. 
We computationally solve more complex versions of the model by discretizing the state space, then applying standard dynamic programming solutions for finite decision processes.

If the problem of allocating the opportunity only spanned a single
generation, then the payoff-maximizing policy would use the same threshold for both groups.
But given the discounting sum over multiple generations, we find that
society's optimal policy can, in general, use a lower threshold for group $D$ than for group $A$.
The difference in thresholds is a form of socioeconomic affirmative action,
and it arises due to the intuition discussed above:
boosting the number of individuals from group $D$ who receive the 
opportunity will increase the number of available candidates from 
group $A$ in future generations, each of whom provides a (discounted) payoff in future generations via their enhanced performance.
Finding the correct trade-off in allocating opportunity
thus involves a delicate balance 
between immediate and future utility.

\begin{figure*}[t!]
    \centering
    \begin{subfigure}[b]{0.3\textwidth}
        \includegraphics[width=\textwidth]{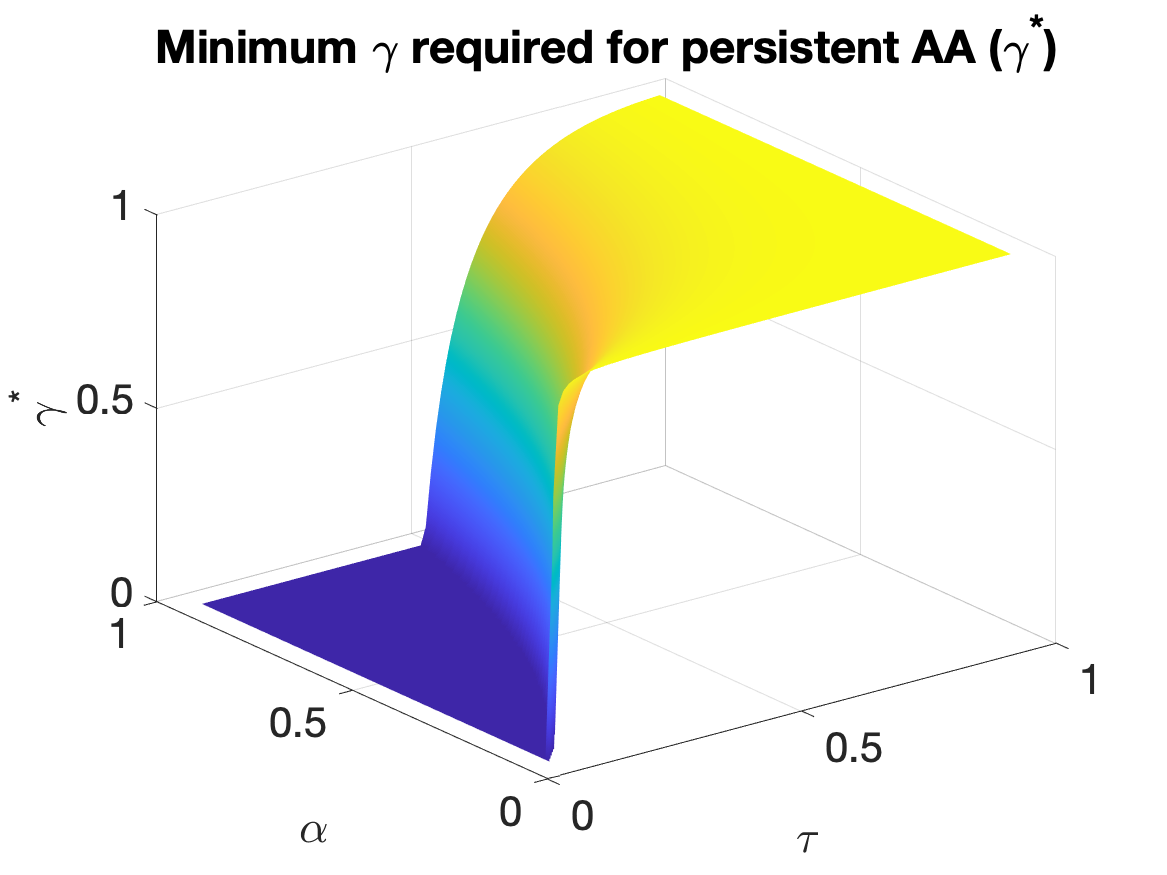}
        \caption{ }
        \label{fig:gamma_star}
    \end{subfigure}
    \begin{subfigure}[b]{0.3\textwidth}
        \includegraphics[width=\textwidth]{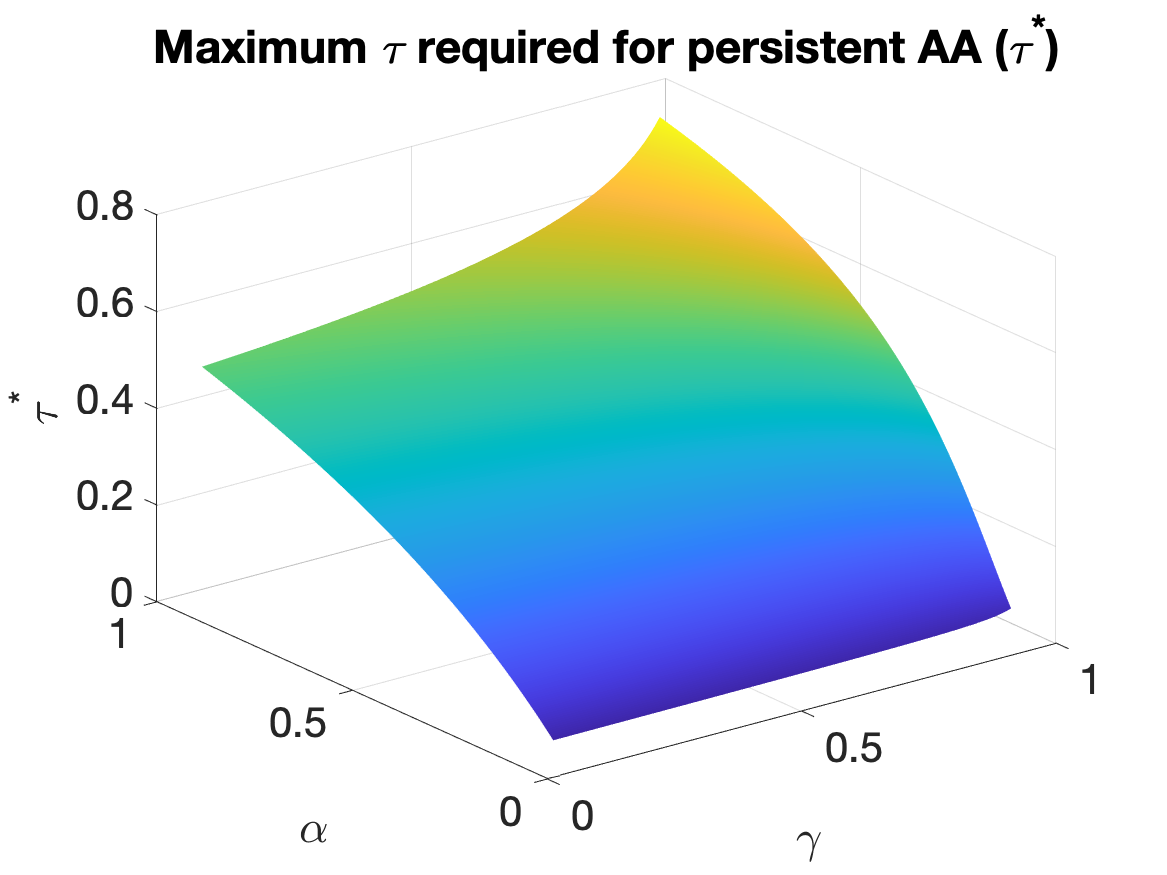}
        \caption{ }
        \label{fig:tau_star}
    \end{subfigure}
    \begin{subfigure}[b]{0.3\textwidth}
        \includegraphics[width=\textwidth]{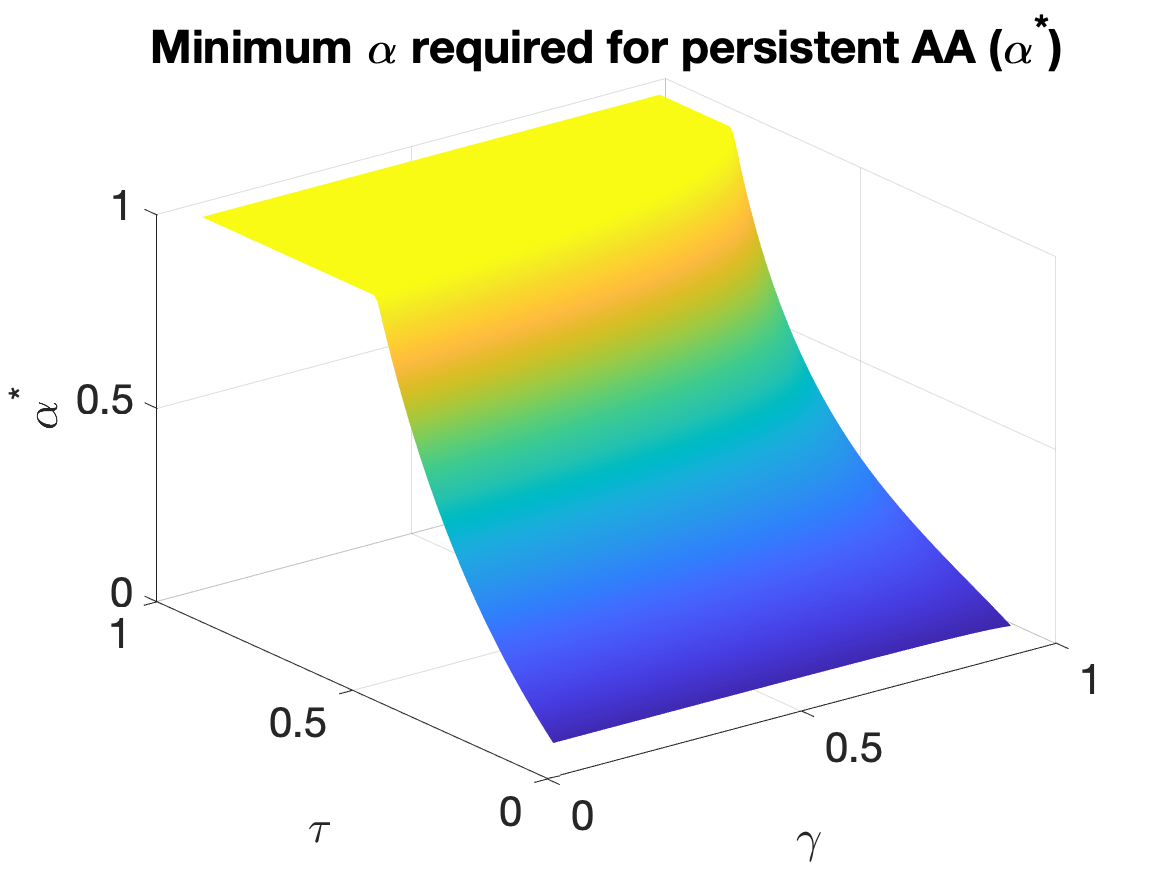}
        \caption{ }
        \label{fig:alpha_star}
    \end{subfigure}
   \caption{
Visualizing the triples $(\alpha, \tau, \gamma)$ for which the
optimal policy uses persistent affirmative actions.
(We set $\sigma = 1 - \tau$ for these plots.)
Points that lie above the surfaces in panels (a) and (c), and below
the surface in panel (b), correspond to parameter values yielding
persistent affirmative action.
(\ref{fig:gamma_star}) When $1-\frac{\alpha\sigma}{\tau}<0$, any $\gamma>0$ suffices for persistent affirmative action and eventually moving the entire population to group $A$. When $1-\frac{\alpha\sigma}{\tau}>0$, there exists some $\gamma < 1$ (and hence a finite level of patience $\left(\frac{\gamma}{1-\gamma}\right))$ that suffices for persistent affirmative action. 
   (\ref{fig:tau_star}) When $\tau$ is sufficiently large, the optimal policy does not use persistent affirmative action; this is because for a large $\tau$, the extent of affirmative action required to pick up the best performing members of $D$ is large---which in turn significantly reduces the immediate payoff. For any given value of $\alpha$, there exists a sufficiently small $\tau$ that guarantees persistent affirmative action.
   (\ref{fig:alpha_star}) When $\alpha$ is small relative to $\tau$, the optimal policy does not use persistent affirmative action; this is because the cost of picking the best performing members of $D$ is very high and a small $A$ group suffices for filling the available opportunities. Note that for some values of $\tau$, no matter how large $\alpha$ is, the optimal policy never employs persistent affirmative action.
   }\label{fig:star}
\end{figure*}

Whether socioeconomic affirmative action is employed by the optimal 
solution --- and the extent to which it is employed --- depends 
on the fraction $\phi_0$ of individuals from group $D$;
in the most basic model, the amount of affirmative action decreases monotonically as $\phi_0$ is reduced.
The extent of affirmative action is also determined
by the amount of opportunity available ($\alpha$), the dependence
of success on ability and socioeconomic status ($\sigma$ and $\tau$),
and society's patience in trading off immediate payoff in return for payoff from future generations ($\gamma$).
We characterize the optimal solution in this respect as a function
of these parameters, finding that for some regions of the parameter
space, the society employs {\em temporary affirmative action},
reducing the size of group $D$ to a given level before equalizing
thresholds in subsequent generations;
in other parts of the parameter space, the society employs
{\em persistent affirmative action}, in which the threshold for
group $D$ is strictly lower in every generation and the 
size of group $D$ converges to 0 over time.

Figure \ref{fig:star} provides some ways of describing the 
regions of parameter space in which the optimal solution
uses persistent affirmative action.
As the partitions of the space there make apparent, the interactions
among the key parameters is fairly subtle.
First, persistent affirmative action is promoted by large values of $\alpha$
and small values of $\tau$, since these make it easier to include 
high-performing members of group $D$ without a large difference in thresholds; 
and it is promoted by larger values of $\gamma$,
indicating greater concern for the payoffs in future generations.
One might have suspected that persistent affirmative action would
only be realized in the optimal solution in the limit as society's
patience (essentially $\gamma / (1 - \gamma)$) goes to infinity;
but in fact, a sufficiently large finite amount of patience is sufficient for the optimal policy to use persistent affirmative action.

In our model, we include a probabilistic background process by which
individuals can also move between groups $A$ and $D$; this reflects the idea that there are many mechanisms operating simultaneously
for socioeconomic mobility, and we are studying only one of these mechanisms via the opportunity under consideration.
The most basic version posits a single probability $p$ that each
individual independently loses their group membership and re-samples
it from the current distribution of group sizes.
We also consider a version of the model in which this probability of loss
of group membership is different for groups $A$ and $D$;
in this case, we are only able to solve the model computationally,
and these computational results reveal interesting non-monotonicities
in the amount of affirmative action employed
as a function of the relative size of group $D$ ($\phi_0$).

\hl{
\xhdr{Utilitarianism, Prioritarianism, and the Desert Principle} Our simple mathematical model allows us to represent and distinguish among several distinct worldviews toward allocation policies (see, e.g., \citep{sep-egalitarianism} for further discussion of these views): 
(1) a \emph{utilitarian} view, which generally favors slightly lower-ability members of $A$ to comparable, but slightly higher ability members of $D$ in pursuit of maximizing social utility and productivity (recall that membership in $A$ confers a boost in success probability); 
(2) a \emph{prioritarian} view, which evaluates a policy according to its impact on the well-being of the worse-off members of society. Our model can capture the priority view through large discount factors (recall that as the society's patience increases, it effectively increases the priority assigned to the disadvantaged group members), or by adjusting the welfare function; 
(3) a \emph{desert-principle} view, which advocates for allocating opportunities based on some notion of deservingness. Deservingness in this view is often defined in terms of the contributions people make to the social utility. Hence success probability in our model is arguably the closest match to individual desert. 
With that definition for desert, desert-based principles would allocate opportunities myopically in each generation. As our analysis illustrates, such policies often fail to maximize the social utility in the long-run.
}

\xhdr{Limitations and Interpretations}
Our model is designed to incorporate the basic points we just mentioned
in as simplified a fashion as possible; as such, it is important
to note some of its key limitations.
First, it is intended to model the effect of a single opportunity, and it treats other forms of mobility probabilistically in the background.
It also assumes that the fundamental parameters ($\alpha, \sigma, \tau, \gamma$) are constant over all generations \hl{as well as over individuals within one generation}.
It treats an individual's group membership ($A$ and $D$) and
ability as a complete description of their performance, rather than including any dependence on the group membership of the individual's parent. (That is, an individual in group $A$ performs the same in the model regardless of whether their parent belonged to group $A$ or $D$.)
All of these would be interesting restrictions to relax in an
extension of the model.
Second, much of the past theoretical work on intergenerational mobility focuses
on an issue that we do not consider here: the strategic considerations
faced by parents as they decide how much to consume in the present
generation and how much to pass on to their children.
Our interest instead has been in the optimization problem faced by
a social planner in allocating opportunities, treating the behavior of the agents as fixed and simple. Here too, it would be interesting to explore models that address these issues in combination.
Finally, because our focus is on intergenerational mobility in
a socioeconomic sense, we do not model discrimination based on race, ethnicity, or gender, and the role of race- or gender-based affirmative action in combatting these effects.
The model is instead concerned with
\emph{socio-economic} or \emph{class-based}~\citep{malamud1995class,kahlenberg1996class} affirmative action.
That said, the ingredients here could be combined with models of statistical or taste-based discrimination on these attributes to better understand their interaction \hl{(as outlined in Section~\ref{sec:conclusion})}.

The simplicity of our model, however, does allow us to make a correspondingly fundamental point: that even a purely payoff-maximizing society can discover affirmative action policies from first principles as
it seeks to optimize the allocation of opportunities over multiple 
generations.
Moreover, the optimal allocation policy is deeply connected
to dynamic programming over the generations;
the society is essentially attempting to ``steer'' the balance
of group $A$ and group $D$ over time, making sure not to turn things
too abruptly (giving up present benefit) or too gradually (giving up
future benefit).
This idea that society is searching for a way to turn optimally toward a better outcome is not specific to our model; it is an image 
that has arisen in qualitative discourse over several centuries.
It can be seen in a quote popularized by Martin Luther King,
that ``the arc of the moral universe is long, but it bends toward justice''
\citep{cohen-arc-of-moral-universe}.
Interestingly, the original form of this quote, by the American minister
Theodore Parker in 1853, has an even more abstractly mathematical flavor:
``I do not pretend to understand the moral universe; the arc is a long one, my eye reaches but little ways. I cannot calculate the curve
and complete the figure by the experience of sight; I can divine 
it by conscience. And from what I see I am sure it bends towards justice''
\citep{parker-arc-of-moral-universe}.
It is a curiously apt image for the way in which our optimal solutions 
gradually turn through the state space to
reshape the distribution of socioeconomic groups,
and it can be seen as added motivation for the issues at the heart of the model.

\subsection*{Related Work}\label{sec:related-brief}

Here, we briefly mention several lines of scholarship that are closely related to our work. See Appendix~\ref{sec:related} for a more in-depth discussion.

\xhdr{Long-term Implications of Fair ML} Several recent articles study the long-term impact of ML-based decision-making and fairness interventions on society, including the enforcement of statistical parity in hiring \citep{hu2018short}, and responses by individuals and populations to an ML-based decision rule \citep{liu2018delayed,mouzannar2019fair,kannan2018downstream}.
\citet{liu2018delayed}, for example, study the conditions under which the choices of a myopic profit-maximizing institution (e.g., a bank lending money to individuals) work in the interest of the disadvantaged group.
%
\cite{dong2018strategic,hu2018disparate,milli2018social} address \emph{strategic classification} where the goal is to design classifiers robust to strategic manipulation. 
This body of research focuses on strategic responses by agents being evaluated; in contrast, we assume idealized prediction so as to focus on
the perspective of a social planner who seeks optimal allocation of opportunities over time.

\xhdr{Intergenerational Income Mobility}
A substantial literature in economics studies how higher inequality results in lower income mobility across generations (see, e.g.,~\citep{becker1979equilibrium,maoz1999intergenerational,corak2013income}). The precise measurements of inequality and mobility significantly influence the strength of this effect (see, e.g.,~\citep{solon1992intergenerational,piketty2000theories}). 
Theoretical models of income mobility have studied utility-maximizing parents deciding how much of their capital to consume and how much of it to invest in their offspring~\citep{becker1986human,becker1979equilibrium,loury1981intergenerational,solon1999intergenerational}. We deliberately set aside parental strategic considerations and focus instead on deriving the optimal policy that maximizes the discounted payoff over generations.

\xhdr{Affirmative Action Policies}
A rich body of work in economics investigates statistical discrimination~\citep{arrow1973theory} and the role of affirmative action in redressing it \citep{fang2011theories}.
%
Outcome-based policies including affirmative action targets have long been proposed and implemented as {temporary} remedies to eliminate group-level inequalities. 
%
Race-based affirmative action and socioeconomic affirmative action
can be viewed as distinct categories of intervenions
\citep{kahlenberg1996class,carnevale2013socioeconomic},
with the former addressing long-term effects of racial bias and
the latter facilitating access
for economically disadvantaged individuals \citep{reardon2017can}.
While the relationship between them is complex and contested in the
literature \citep{kane1998racial,reardon2006implications,gaertner2013considering}, they can co-exist without necessarily competing.

\hl{
\xhdr{Affirmative Action in College Admissions and Comparison with \citep{durlauf2008affirmative}}
\citet{durlauf2008affirmative} provides a model to compare the equality and efficiency of affirmative action policies with those of meritocratic policies in the context of admission rules to public universities.
In his concluding remarks, Durlauf poses the key question our work sets out to answer: how do multigenerational considerations impact the optimal allocation policy? 
As we address this open question that he poses, we follow his model in many respects, although we depart from it in a few key areas.
\footnote{For a more detailed comparison, see Section~\ref{app:durlauf}.} Similar to our work, \citeauthor{durlauf2008affirmative} provides a condition under which efficiency and equality considerations are aligned, but he focuses on settings where a diverse body of students on college campuses improves the human capital development for all of them.  \citeauthor{durlauf2008affirmative} assumes the policymaker aims to maximize the level of human capital among \emph{the next generation} of adult citizens. He restricts attention to two generations only, and instead of solving for the optimal policy, he compares the meritocratic policy with affirmative action in terms of the average human capital of the next generation. In contrast, we model the dynamics of human-capital development across \emph{multiple generations} and derive the \emph{optimal allocation policy}. Moreover, groups in model corresponds to \emph{socio-economic tiers}, whereas \citeauthor{durlauf2008affirmative} defines them in terms of \emph{race}. 
}

\hl{
In both models, generations are represented as follows: in each time step, a new member is born into each family/dynasty, and he/she replaces the current member of the family in the next generation. 
The \emph{initial human capital} in Durlauf's model corresponds to our notion of \emph{success probability}. \emph{Adult human capital} in his model is determined by college attendance, and it roughly maps to our notion of \emph{success} (i.e., whether the individual succeeds if given the opportunity.) 
In both models, an admission rule maps a student's human capital and group membership into a binary outcome indicating whether the student is given the opportunity. In both models, the admission rule may vary across time and generations;
the level of state expenditures on education is assumed constant across generations ($\alpha$ is fixed in our model);
the only output of universities is human capital (our objective function is made up of the percentage of the population who succeed in each generation);
and finally, the initial human capital is accurately measured for every student (we assume ability and success probability are perfectly observable.)
}

\section{A Dynamic Model} 

\subsection*{Agents, Circumstances, Abilities, \& Generations}
We consider a model of a society that consists of a continuum of agents in 
two different sets of socioeconomic circumstances --- a {\em disadvantaged}
circumstance $D$ and an {\em advantaged} circumstance $A$.
These circumstances are (probabilistically) inherited from one generation to the next,
but we can try to increase the number of agents in the advantaged
circumstance in future generations by offering opportunities to disadvantaged
agents in the current generation.
This comes with a trade-off, however, since a competing option is
to offer these opportunities to advantaged agents in the current generation.
Our goal is to model this trade-off. 
We say that an agent $i$ has circumstance $c_i = 0$ 
if they are disadvantaged ($i \in D$), and 
circumstance $c_i = 1$ if they are advantaged ($i \in A$).
Each agent $i$ also has an ability $a_i$, which is a real number in $[0,1]$.

Time advances in discrete periods, beginning with period $t = 0$.
We think of these as generations. Consider an agent $i$ who has circumstance $c^{\text{init}}_i$ at the beginning of time $t$. Depending on whether $i$ receives the opportunity, his/her circumstance may change to $c_i^{\text{post}}$. 
At the end of time step $t$, $i$ produces a new agent $i'$ in generation $t+1$.
This new agent $i'$ has an ability $a_{i'}$ drawn uniformly at random from $[0,1]$.
With some fixed probability (specified below) $i'$ inherits the ex-post circumstance of $i$ (so $c_{i'} = c^{\text{post}}_i$), otherwise, it takes on a circumstance randomly selected from the background distribution of circumstances within the population in generation $t$. 
More specifically, in a given period $t$, let $\phi_j(t)$ denote the fraction of agents who 
have circumstance $j$, for $j=0,1$.
If $c^{\text{post}}_i = 0$, then with a fixed probability $1-p_D$, $i'$ inherits circumstance $D$, and with fixed probability $p_D$, it receives a circumstance randomly selected from the background distribution $(\phi_0(t),\phi_1(t))$. 
Similarly, If $c^{\text{post}}_i = 1$, then with a fixed probability $1-p_A$ , $i'$ inherits circumstance $A$, and with probability $p_A$, it receives a circumstance randomly selected from the background distribution $(\phi_0(t),\phi_1(t))$.


The movement probabilities, $p_A$ and $p_D$, capture all  processes---other than the opportunity we seek to allocate optimally---through which individuals can change their circumstance from their parental inheritance. For example, in the college admissions example, while our model focuses on how admission decisions can reshape circumstances over generations, there are many other forces and processes that impact the evolution of circumstances within society (e.g., number of jobs in the economy, training opportunities outside college, or pure luck). The movement probabilities summarize and capture all these alternative upward or downward movement possibilities.

\subsection*{Opportunities and Payoffs}
We consider the problem of performing an intervention in this society,
which consists of offering an {\em opportunity} to a subset of the population.
We only have the resources to offer the opportunity to an $\alpha$
fraction of the population.
An agent who is offered the opportunity has some probability of
succeeding at it, as a function of their ability and circumstances
that we specify below.
Succeeding at the opportunity confers two benefits on society:
\begin{enumerate}
\item[(i)] it produces an immediate payoff/reward to the society in the form of productivity;
\item[(ii)] if the agent is disadvantaged, it moves them (and subsequently their future generations) into the advantaged group.
\end{enumerate}
The central problem to be solved in the model, as we will see below,
is how to balance the immediate gains from (i) against the
long-term gains from (ii) over multiple generations.

In particular, if an agent of ability $a_i \in [0,1]$ 
and circumstance $c_i \in \{0,1\}$ is offered the opportunity,
their probability of succeeding at it is
$a_i \sigma + c_i \tau,$ 
where $\sigma, \tau > 0$ and
$\sigma + \tau \leq 1$.
Note that since $c_i \in \{0,1\}$, this
simply means that $\tau$ gets added to the success probability
of all agents whose circumstance is equal to $1$.

Our payoff (or reward) $r(t)$ in period $t$ is 
simply the fraction of the population that both
receives the opportunity and succeeds at it.
Our total payoff is a discounted sum of payoffs over all periods,
with discount factor $0< \gamma < 1$; that is, the total payoff
$r$ is equal to $\sum_{t = 0}^\infty \gamma^t r(t)$. 
As noted earlier, agents with circumstance $0$ who receive the
opportunity and succeed at it will produce offspring who (are more likely to) have
circumstance $1$; this matters for the payoff because the total payoff
$r = \sum_{t = 0}^\infty \gamma^t r(t)$ 
depends on the fraction of agents with each type of circumstance
in all time periods.

\subsection*{Thresholds and Interventions}
The way we allocate the opportunity at time $t$ is to 
set a threshold $\theta_j(t)$ for agents with circumstance $j$,
for $j \in \{0,1\}$, and to offer the opportunity to all agents 
$i$ with circumstance $j$ whose \textbf{success probability}, $a_i \sigma + c_i \tau$, is at least $\theta_j(t)$. That is,
$$a_i \sigma + c_i \tau \geq \theta_j.$$
We will sometimes write the threshold $\theta_j$ and the population
fraction with each circumstance $\phi_j$ 
without the explicit dependence ``$(t)$'' 
when we are considering a single fixed time period.

Agents of circumstance $0$ make up a $\phi_0$ fraction
of the population, and a $1 - \frac{\theta_0}{\sigma}$ fraction of them 
receive the opportunity, for a total fraction of the population
equal to $\phi_0 \times (1 - \frac{\theta_0}{\sigma})$.
Similarly, agents of circumstance $1$ make up a $\phi_1$ fraction
of the population, and a $1 - \frac{\theta_1 - \tau}{\sigma}$ fraction of them 
receive the opportunity, for a total fraction of the population
equal to $\phi_1 \times  (1 - \frac{\theta_1 - \tau}{\sigma})$.
The sum of these two fractions must add up to $\alpha$ --- the
portion of the population to whom we can offer the opportunity:
{\scriptsize
\begin{eqnarray}
&&\forall 0 \leq\theta_0 \leq \sigma \text{ and } \tau \leq \theta_1\leq \sigma+\tau: \text{   }\phi_0 \times  \left(1 - \frac{\theta_0}{\sigma}\right) + \phi_1 \times  \left(1 - \frac{\theta_1 - \tau}{\sigma}\right) = \alpha \nonumber \\
&\Leftrightarrow &\forall 0 \leq\theta_0 \leq \sigma \text{ and } \tau \leq \theta_1\leq \sigma+\tau: \text{   }
\phi_0 \theta_0 + \phi_1 \theta_1 = \sigma (1 - \alpha)+ \phi_1\tau.
\label{eq:capacity-constraint}
\end{eqnarray}
}
This also shows how our choice of thresholds is a one-dimensional
problem in the single variable $\theta_0$ (or equivalent 
in the single variable $\theta_1$), since after setting one of the
two thresholds, the other is determined by this equation. 
More precisely, we have that:
\begin{equation}\label{eq:theta_1}
\begin{cases}
\theta_1 = \frac{\sigma(1-\alpha) + (1-\phi_0)\tau - \phi_0 \theta_0}{1-\phi_0} \quad \forall \phi_0<1\\
\theta_1 \in [\tau,\sigma+\tau] \quad \text{ for } \phi_0 = 1.
\end{cases}
\end{equation}
Note that if $\phi_0=1$, $\theta_1$ can take on any value in $[\tau,\sigma+\tau]$, the threshold will not affect how opportunities are allocated. The same holds for $\phi_0=0$ and any $\phi_0 \in [0,\sigma]$.

\subsection*{Dynamics in a Single Period}
Recall that our payoff $r(t)$
in period $t$ is simply the fraction of the population
that both receives the opportunity and succeeds at it.
We can decompose this as follows.
\begin{itemize}
\item 
Agents of circumstance $0$ make up a $\phi_0$ fraction
of the population, and a $1 - \frac{\theta_0}{\sigma}$ fraction of them 
receive the opportunity, for a total fraction of the population
equal to $\phi_0 \left(1 - \frac{\theta_0}{\sigma} \right)$.
Not all of these agents succeed at the opportunity;
the average success probability in this group is 
$\frac12 (\sigma + \theta_0)$, so the expected
quantity that succeeds is 
$$ \phi_0 \left(1 - \frac{\theta_0}{\sigma}\right) \frac{(\sigma + \theta_0)}{2} =
  \frac{\phi_0}{2\sigma} \left(\sigma^2 - \theta_0^2 \right).$$
\item 
Agents of circumstance $1$ make up a $\phi_1$ fraction
of the population, and a $1 - \frac{\theta_1 - \tau}{\sigma}$ fraction of them 
receive the opportunity, for a total fraction of the population
equal to $\phi_1 \left(1 - \frac{\theta_1 - \tau}{\sigma} \right)$.
Again, not all of these agents succeed at the opportunity;
the average success probability in this group is 
$\frac12 (\sigma+\tau + \theta_1)$, so the expected
quantity that succeeds is 
$$\phi_1 \left(1 - \frac{\theta_1 - \tau}{\sigma} \right) \frac{(\sigma+\tau + \theta_1)}{2} = 
\frac{\phi_1}{2\sigma} \left((\sigma+\tau)^2 - \theta_1^2\right).
$$
\end{itemize}
The total payoff in period $t$ is the sum of these two terms:
\begin{equation}
r(t) =   \frac{\phi_0(t)}{2\sigma} \left(\sigma^2 - \theta_0(t)^2 \right)
+ \frac{\phi_1(t)}{2\sigma} \left((\sigma+\tau)^2 - \theta_1(t)^2\right).
\label{eq:period-t-payoff}
\end{equation}

\subsection*{Dynamics over Multiple Periods}
If we were just optimizing the payoff in this single time period,
then we'd have a single-variable optimization problem in the variable
$\theta_0$ (or equivalently in $\theta_1$), with the objective function
given by (\ref{eq:period-t-payoff}).
But since there is also the set of discounted payoffs in future time periods,
we also need to look at the effect of our decisions
on the quantities $\phi_0(.)$ and $\phi_1(.)$ in future periods.

If $p_A = p_D = 0$, $\phi_1(t+1)$ grows relative to $\phi_i(t)$ depending on 
the fraction of the population that transitions from circumstance $0$
to circumstance $1$ by succeeding at the opportunity. Thus we have
\begin{equation}
\phi_0(t+1) = \phi_0(t) 
  - \frac{\phi_0(t)}{2\sigma} \left(\sigma^2 - \theta_0(t)^2 \right).
\label{eq:period-t-transition}
\end{equation}
More generally when $p_A$ or $p_D$ are non-zero, let's define $\phi^{\text{post}}_0 (t)  = \phi_0(t) 
  - \frac{\phi_0(t)}{2\sigma} \left(\sigma^2 - \theta_0(t)^2 \right)$. (For simplicity, we drop ``$(t)$'' and simply use $\phi^{\text{post}}_0$ in the remainder of this section). We have:
\begin{equation}
\phi_0(t+1)  = \phi^{\text{post}}_0  (1-p_D) + \phi^{\text{post}}_0  p_D \phi^{\text{post}}_0  + \left(1-\phi^{\text{post}}_0\right) p_A \phi^{\text{post}}_0,
\label{eq:period-t-transition-general}
\end{equation}
It is easy to see that:
\begin{proposition}
If $p_A = p_D$, then $\phi_0(t+1) = \phi_0(t) 
  - \frac{\phi_0(t)}{2\sigma} \left(\sigma^2 - \theta_0(t)^2 \right)$.
\end{proposition}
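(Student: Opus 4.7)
The proposition asserts that when the two background-mobility probabilities coincide, the resampling process is conservative with respect to the size of group $D$, so the general transition~(\ref{eq:period-t-transition-general}) reduces to the opportunity-only transition~(\ref{eq:period-t-transition}). My plan is to specialize~(\ref{eq:period-t-transition-general}) to the common value $p := p_A = p_D$ and verify the cancellation by direct algebra.

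Concretely, I would first write out the three terms of~(\ref{eq:period-t-transition-general}) with the common value $p$ substituted in. I would then group the second and third terms — these represent all offspring who, regardless of parent's ex-post circumstance, resample from the background distribution and happen to draw circumstance $0$. Factoring $p\,\phi^{\text{post}}_0$ out of these two terms leaves $\phi^{\text{post}}_0 + (1-\phi^{\text{post}}_0) = 1$ inside the brackets, so the pair collapses to $p\,\phi^{\text{post}}_0$. Combined with the first term $\phi^{\text{post}}_0(1-p)$, the $p$-dependence vanishes and I am left with $\phi_0(t+1) = \phi^{\text{post}}_0$. Substituting the definition $\phi^{\text{post}}_0 = \phi_0(t) - \frac{\phi_0(t)}{2\sigma}(\sigma^2 - \theta_0(t)^2)$ then yields the stated identity.

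No step presents a genuine obstacle — the argument is essentially a one-line simplification. The only conceptual point worth flagging is the intuition behind why the cancellation occurs: when $p_A = p_D$, the total fraction of the population routed through the background distribution equals $p$ irrespective of circumstance, so the resampling step draws from and returns to the \emph{same} composition $\phi^{\text{post}}_0$, leaving it invariant. It is precisely this symmetry between the two groups, not the size of $p$, that makes the proposition hold.
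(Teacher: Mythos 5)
Your proposal is correct and follows exactly the paper's own argument: substitute the common value $p$ into~(\ref{eq:period-t-transition-general}), factor $p\,\phi^{\text{post}}_0$ out of the two resampling terms so that $\phi^{\text{post}}_0 + (1-\phi^{\text{post}}_0)=1$ collapses them, and conclude $\phi_0(t+1)=\phi^{\text{post}}_0$. Nothing is missing.
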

\begin{proof}
Suppose $p_A = p_D = p$. Then we can re-write (\ref{eq:period-t-transition-general}) as follows:
\begin{eqnarray*}
\phi_0(t+1)  &=& \phi^{\text{post}}_0 (1-p) + \phi^{\text{post}}_0 p \phi^{\text{post}}_0 + \left(1-\phi^{\text{post}}_0 \right) p \phi^{\text{post}}_0 \\
&=& \phi^{\text{post}}_0 (1-p) + \phi^{\text{post}}_0 p \left( \phi^{\text{post}}_0  + \left(1-\phi^{\text{post}}_0 \right) \right) \\
&=& \phi^{\text{post}}_0 (1-p) + \phi^{\text{post}}_0 p  \\
&=& \phi^{\text{post}}_0 = \phi_0(t)  - \frac{\phi_0(t)}{2\sigma} \left(\sigma^2 - \theta_0(t)^2 \right)
\end{eqnarray*}
where in the last line, we replace $\phi^{\text{post}}_0$ with its definition.
\end{proof} 
The above proposition shows that with respect to dynamics and optimal policy, settings in which $p_A = p_D$, are essentially equivalent to settings in which $p_A = p_D = 0$.
  
In summary, the full problem is to choose thresholds $\theta_0(t), \theta_1(t)$
for each time period $t$ so as to maximize the infinite sum 
$r = \sum_{t = 0}^\infty \gamma^t r(t)$.
Each term $r(t)$ depends not just on the chosen thresholds 
but also on the fractions of agents with each type of circumstance
$\phi_0(t), \phi_1(t)$, which evolve according to the recurrence in 
Equation (\ref{eq:period-t-transition-general}).
Note that the intuitive trade-off between $\theta_0$ and $\theta_1$
shows up in the formulation of 
(\ref{eq:period-t-payoff}) and (\ref{eq:period-t-transition}):
lowering $\theta_0$ and lowering $\theta_1$ have different effects,
both in period $t$ and in future time periods.

\section{Theoretical Analysis for $p_D \geq p_A$}

In this section, we focus on settings in which $p_D \geq p_A$ and characterize the optimal policy to maximize the discounted payoff over generations. (We only provide the analysis for settings of $p_D = p_A$ but the extension to $p_D>p_A$ is straightforward). We cast the problem as deriving the infinite-time-horizon optimal policy in a continuous state- and action-space (Markov) decision process.
We characterize the optimal threshold and value function for every state $\phi_0 \in [0,1]$. Importantly, we show that there exists a tipping point $\phi^*_0$ that splits the state space into two distinct regions: states at which the optimal threshold uses strict affirmative action, and states at which the optimal policy consists of imposing equally high thresholds on both $A$ and $D$ groups.

\subsection*{The Decision Process}\label{sec:DP}

Given $\alpha, \sigma, \tau$, and $\gamma$, we define a decision process $\cD_{\alpha, \sigma, \tau,\gamma} =(\Phi, \Theta, S, R)$ (or $\cD$ for short) with a continuous state space $\Phi=[0,1]$, action space $\Theta=[0,\sigma]$, state transition $S: \Phi \times \Theta \rightarrow \Phi$, and reward function $R: \Phi \times \Theta \rightarrow [0,1]$. Each state $\phi_0 \in \Phi$ corresponds to a particular fraction of disadvantaged individuals within the population. For instance, the states $0$ (or $1$) represents a society in which no one (or everyone) belongs to $D$. 

The set of thresholds admissible in each state $\phi_0$ is denoted by $\Theta_{\phi_0}$.  $\Theta_{\phi_0}$ consist of all thresholds $0 \leq \theta_0 \leq \sigma$ that can satisfy the capacity constraint, $\alpha$. In other words, for any $ \in \theta_0\Theta_{\phi_0}$ if we impose the threshold $\theta_0 \in [0,\sigma]$ on group $D$, we can find a threshold $\theta_1 \in [0,\sigma+\tau]$ for group $A$ such that exactly $\alpha$ fraction of the overall population receives the opportunity. This capacity constraint translates into two conditions on $\Theta_{\phi_0}$: 
\begin{enumerate}
\item A threshold $\theta_0 \in \Theta_{\phi_0}$ should not give the opportunity to \emph{more than} $\alpha$ fraction of the population. Formally,
$\forall \theta_0 \in \Theta_{\phi_0}:  \phi_0 \left(1-\frac{\theta_0}{\sigma}\right) \leq \alpha$, which is equivalent to:
\begin{equation}\label{eq:action_lb}
\forall \theta_0 \in \Theta_{\phi_0}: \quad \theta_0 \geq \sigma \left( 1 - \frac{\alpha}{\phi_0} \right).
\end{equation}
\item A threshold $\theta_0 \in \Theta_{\phi_0}$ should not waste opportunities, that is, it should give the opportunity to \emph{at least} $\alpha$ fraction of the overall population. Formally,
$\forall \theta_0 \in \Theta_{\phi_0}:   \phi_0 \left(1-\frac{\theta_0}{\sigma}\right) + \phi_1 \geq \alpha.$
Replacing $\phi_1$ with $1-\phi_0$ and rearranging terms, the above is equivalent to
\begin{equation}\label{eq:action:ub}
\forall \theta_0 \in \Theta_{\phi_0}: \quad  \theta_0 \leq \frac{\sigma(1-\alpha)}{\phi_0}.
\end{equation}
\end{enumerate}
Figure~\ref{fig:actions} illustrates the actions satisfying conditions (\ref{eq:action_lb}) and (\ref{eq:action:ub}) for every state $\phi_0 \in [0,1]$.

\begin{figure}
\centering
    \includegraphics[width=0.4\textwidth]{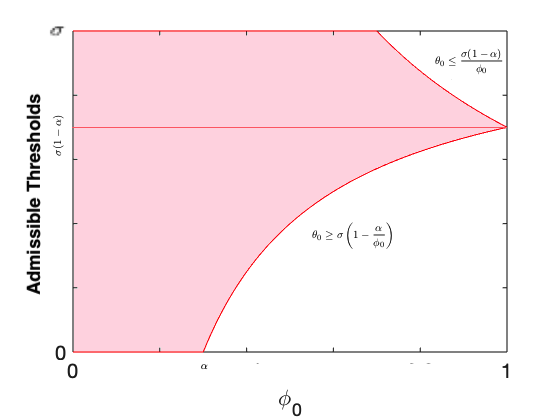}
    \caption{
       The admissible thresholds for every state of the decision process, $\cD$. The x-axis specifies the state $\phi_0$, and the y-axis highlights (in light red) the admissible thresholds, $\Theta_{\phi_0}$, at every state, $\phi_0$.
    } \label{fig:actions}
\end{figure}


The state transition function, $S: \Phi \times \Theta \rightarrow \Phi$, specifies the state transitioned to for every (state, admissible threshold) pair. Formally,
\begin{equation}\label{eq:state}
\forall \phi_0 \in \Phi, \forall \theta_0 \in \Theta_{\phi_0}: \quad  S(\phi_0, \theta_0) = \phi_0 - \frac{\phi_0}{2\sigma} (\sigma^2 - \theta_0^2).
\end{equation}
The reward function, $R: \Phi \times \Theta \rightarrow [0,1]$, is defined as follows: $R(\phi_0, \theta_0)$ denotes the immediate reward/payoff of imposing threshold $\theta_0$ at state $\phi_0$. Formally,
$$\forall \phi_0 \in \Phi, \forall \theta_0 \in \Theta_{\phi_0}: \quad   R(\phi_0, \theta_0) = \frac{\phi_0}{2\sigma} (\sigma^2 - \theta_0^2)
+ \frac{\phi_1}{2\sigma} \left((\sigma+\tau)^2 - \theta_1^2\right).$$
Replacing $\phi_1$ with $1-\phi_0$ and $\theta_1$ with the right hand side of (\ref{eq:theta_1}), we obtain the following equivalent expression for $R$:
{\small
\begin{equation}\label{eq:reward}
R(\phi_0, \theta_0) =  
\begin{cases}
\frac{1}{2\sigma} (\sigma^2 - \theta_0^2) \quad \text{ for } \phi_0 = 1 \text{, otherwise: }\\
\frac{\phi_0}{2\sigma} (\sigma^2 - \theta_0^2)
+ \frac{1-\phi_0}{2\sigma} \left((\sigma+\tau)^2 - \left( \frac{\sigma(1-\alpha) + (1-\phi_0)\tau - \phi_0 \theta_0}{1-\phi_0}\right)^2\right) 
\end{cases}
\end{equation}
}

\subsection*{Characterization of the Optimal Policy} 
Next, we illustrate and characterize the optimal policy for the decision process $\cD$ defined above. (For further information and references on continuous-state decision processes, see Appendix~\ref{app:continuous}.)

A deterministic policy $\pi$ for a decision process $\cD$ is a mapping $\pi: \Phi \rightarrow \Theta$
such that $\pi(\phi_0)$ prescribes the threshold at state $\phi_0$. The value $V_{\pi}(\phi_0)$ of a state $\phi_0$ under policy $\pi$ is the discounted reward of executing policy $\pi$ on $\cD$ starting with initial state $\phi_0$. A policy $\pi$ is optimal if its value function $V_\pi$ satisfies Bellman Optimality---defined recursively as follows:
\begin{equation}\label{eq:bellman}
V_{\pi}(\phi_0) = \max_{\theta_0} R(\phi_0, \theta_0) + \gamma V_{\pi}(S(\phi_0, \theta_0)).
\end{equation}
We establish in Appendix~\ref{app:value} that for our decision process $\cD$, the value function satisfying the above functional equation is \emph{unique}, \emph{continuous}, 
and \emph{differentiable}. 
(We prove these facts utilizing tools from recursive analysis and dynamic programming~\citep{stokey1989recursive,cotter2006non}). 
For simplicity, from this point on we refer to this unique optimal value function as $V(.)$ and drop the subscript $\pi$.

Let the correspondence $\Pi^*_0: \Phi \rightarrow 2^\Theta$ denote all optimal policies for $\cD$. More precisely, for any $0 \leq \phi_0 \leq 1$, the set $\Pi^*_0(\phi_0)$ contains all optimal threshold values at $\phi_0$.  
Figure~\ref{fig:policy_brief} illustrates $\Pi^*_0$ for a sample setting of the parameters $\alpha, \sigma, \tau, \delta$. (See Figure~\ref{fig:policy} in the Appendix for more instances.) Figure~\ref{fig:value} in the Appendix illustrates the value function $V$. Note that the value function is consistently decreasing and concave.

 \begin{figure*}[t!]
\centering
\begin{subfigure}[t]{0.31\textwidth}
  \centering
          \includegraphics[width=\textwidth]{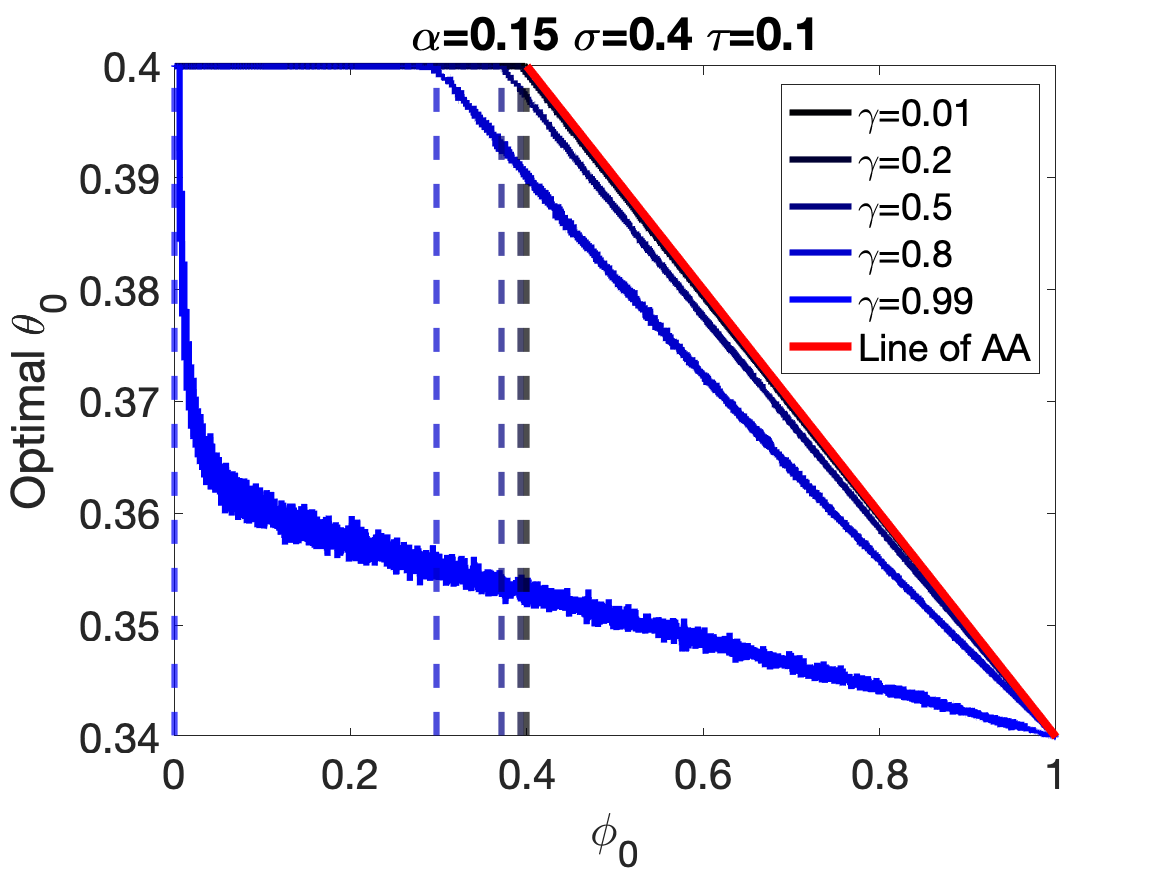}
   \caption{The optimal $\theta_0$ at every state $0 \leq \phi_0 \leq 1$. \textbf{The optimal threshold decreases with $\phi_0$}. The dashed lines indicate the tipping points, $\phi^*_0$, below which $\sigma$ is the only optimal threshold, and above it $\sigma$ is not optimal.}\label{fig:policy_brief}
\end{subfigure}\hspace{2mm}
\begin{subfigure}[t]{0.31\textwidth}
  \centering
          \includegraphics[width=\textwidth]{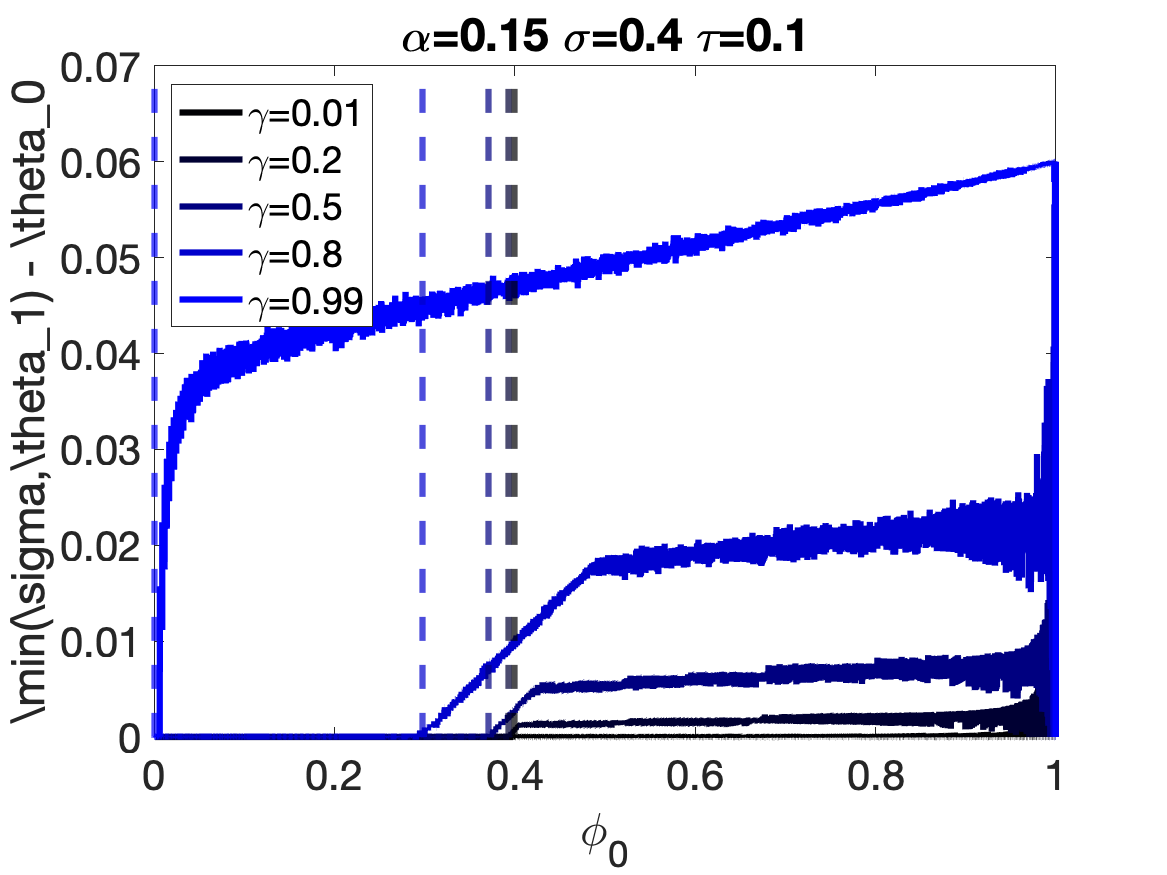}
        \caption{The difference between $\theta_0$ and $\theta_1$ at every state $0 \leq \phi_0 \leq 1$. The dashed lines specify the tipping points. Strict affirmative action is employed beyond $\phi^*_0$ only and \textbf{the extent of affirmative action is increasing in $\phi_0$}.}\label{fig:AA_brief}
\end{subfigure}\hspace{2mm}
\begin{subfigure}[t]{0.31\textwidth}
  \centering
          \includegraphics[width=\textwidth]{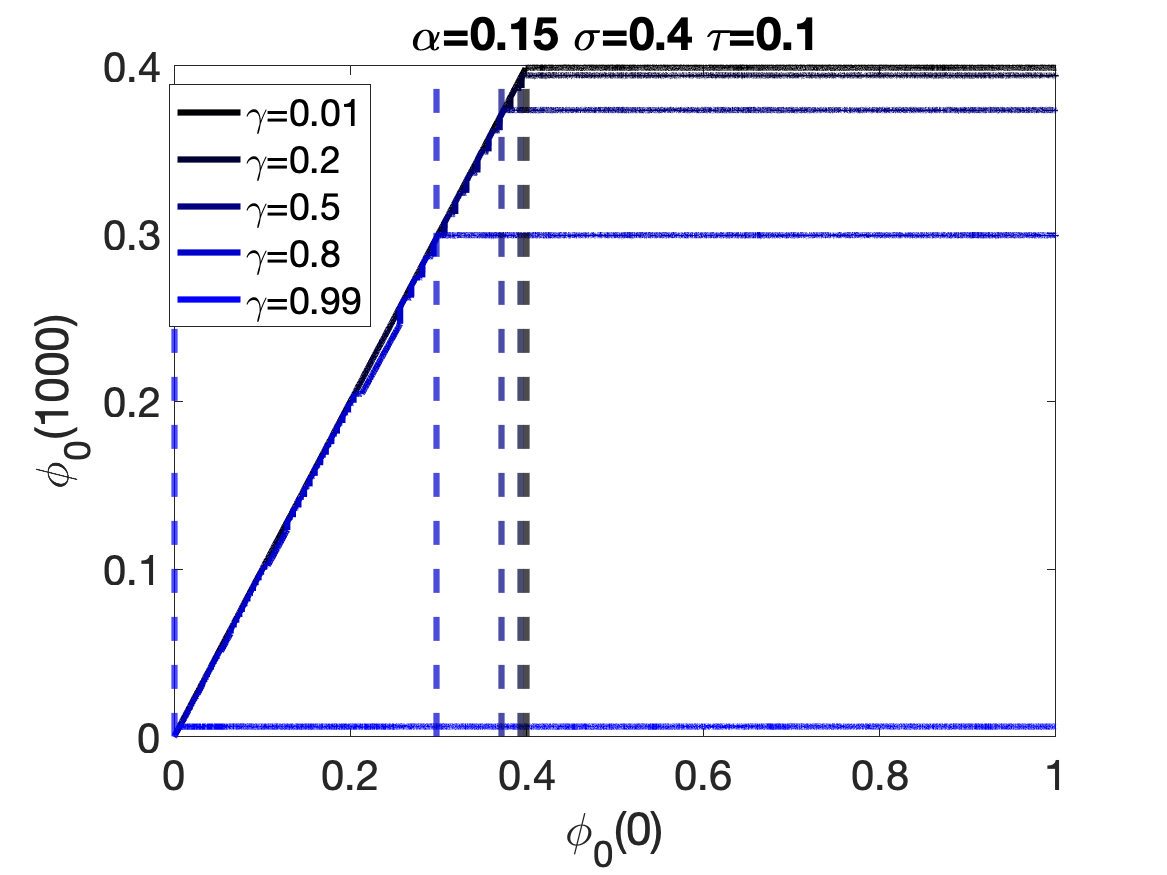}
        \caption{The state to which the optimal policy converges given the initial state $\phi_0$. The dashed lines specify the tipping point, $\phi^*_0$. Note that \textbf{the optimal policy never shrinks the size of group $D$ to a value less than $\phi^*_0$}.}\label{fig:convergence_brief}
\end{subfigure}
\caption{Illustration of the (a) optimal policy, (b) extent of affirmative action, and (c) absorbing state.}
\end{figure*}

We say that the optimal policy, $\Pi^*_0$, uses affirmative action at a state $\phi_0$ if at $\phi_0$, it imposes a lower threshold on group $D$ compared to $A$. 
\begin{definition}[(Strict) Affirmative Action]
The optimal policy uses (strict) affirmative action at state $\phi_0$ if for all $\theta_0 \in \Pi^*_0(\phi_0)$ and all $\theta_1 \in \Pi^*_1(\phi_0)$, $\theta_0 < \theta_1$.
\end{definition}
If the inequality above is not strict (i.e., $\theta_0 \leq \theta_1$), we say the optimal policy uses \emph{weak} affirmative action.
%
Figure~\ref{fig:AA_brief} shows the extent of affirmative action for a sample setting of the parameters $\alpha, \sigma, \tau, \delta$. (See Figure~\ref{fig:AA} in the Appendix for more instances).

Our main result (Theorem~\ref{thm:main}) characterizes the states at which the optimal policy employs affirmative action. In particular, it establishes the existence of a tipping point $0 \leq \phi^*_0 \leq 1-\alpha$ that designate the region of affirmative action: At any state $\phi_0 \leq \phi^*_0$, the optimal policy assigns similar high threshold to both the advantaged and disadvantaged groups. In contract, at any state $\phi_0 \leq \phi^*_0$ the optimal policy imposes a strictly lower threshold on group $D$ compared to $A$.
\begin{theorem}\label{thm:main}
Given $\alpha, \sigma, \tau, \gamma$ and the decision process $\cD$, let 
\begin{equation}\label{eq:phi_star}
\phi^*_0 =  \max \left\{ 0, \min \left\{ 1-\alpha , 1 - \frac{\alpha \sigma}{2\tau} \left( 1 + \sqrt{1 + \frac{2\tau\gamma}{1-\gamma}} \right) \right\} \right\}.
\end{equation} 
\begin{itemize}
\item For any state $\phi_0 \leq \phi^*_0$, there exists $\theta_0 \in \Pi^*_0(\phi_0)$ and $\theta_1 \in \Pi^*_1(\phi_0)$ such that $\theta_0 = \theta_1$.
\item For any state $\phi_0 > \phi^*_0$, for all $\theta_0 \in \Pi^*_0(\phi_0)$ and all $\theta_1 \in \Pi^*_1(\phi_0)$ such that $\theta_0 < \theta_1$.
\end{itemize}
\end{theorem}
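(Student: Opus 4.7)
The plan is to leverage the Bellman optimality equation (\ref{eq:bellman}) in a guess-and-verify scheme, treating the corner $\theta_0=\sigma$ (the upper boundary of the admissible action set) as the candidate ``no-affirmative-action'' policy. Differentiating $R(\phi_0,\theta_0)+\gamma V(S(\phi_0,\theta_0))$ with respect to $\theta_0$ and using (\ref{eq:theta_1}) to get $\partial\theta_1/\partial\theta_0=-\phi_0/\phi_1$, the derivatives of (\ref{eq:reward}) and (\ref{eq:state}) combine into the compact form
$$\frac{\partial}{\partial\theta_0}\bigl[R+\gamma V(S)\bigr] \;=\; \frac{\phi_0}{\sigma}\bigl(\theta_1-\theta_0+\gamma V'(S)\,\theta_0\bigr).$$
Since $V$ has already been shown to be decreasing (Appendix~\ref{app:value}), $V'<0$, so the stationarity condition at any \emph{interior} maximum rearranges to $\theta_1=\theta_0(1-\gamma V'(S))>\theta_0$. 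Thus as soon as we know the optimum is interior ($\theta_0<\sigma$), strict affirmative action $\theta_0<\theta_1$ follows for free.

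The work is therefore to determine when the corner $\theta_0=\sigma$ binds. Conjecturing that $\theta_0=\sigma$ is optimal throughout $[0,\phi^*_0]$ makes the dynamics trivial, $S(\phi_0,\sigma)=\phi_0$, so the Bellman equation collapses to a pointwise identity $V(\phi_0)=R(\phi_0,\sigma)/(1-\gamma)$. Substituting $\theta_1=\tau+\sigma(1-\alpha/\phi_1)$ at $\theta_0=\sigma$ and simplifying,
$$V(\phi_0)=\frac{1}{1-\gamma}\Bigl[\alpha(\sigma+\tau)-\frac{\alpha^2\sigma}{2(1-\phi_0)}\Bigr], \qquad V'(\phi_0)=-\frac{\alpha^2\sigma}{2(1-\gamma)(1-\phi_0)^2}.$$
The corner optimality condition $\partial_{\theta_0}[R+\gamma V(S)]|_{\theta_0=\sigma}\ge 0$, using $\theta_1-\sigma=\tau-\sigma\alpha/\phi_1$ and the $V'$ above, reduces --- after multiplying by $\sigma\phi_1^2$ --- to the quadratic
$$\tau\phi_1^2-\alpha\sigma\phi_1-\frac{\gamma\alpha^2\sigma^2}{2(1-\gamma)}\;\ge\; 0.$$
Its unique positive root is $\phi_1^*=\tfrac{\alpha\sigma}{2\tau}\bigl(1+\sqrt{1+2\tau\gamma/(1-\gamma)}\bigr)$, and the inequality holds iff $\phi_1\ge\phi_1^*$, i.e.\ $\phi_0\le 1-\phi_1^*$. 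The outer clamps in (\ref{eq:phi_star}) then record the admissibility requirement $\phi_0\le 1-\alpha$ (from (\ref{eq:action:ub}), needed for $\theta_0=\sigma$ to be feasible) and the sign constraint $\phi^*_0\ge 0$. For $\phi_0>\phi^*_0$ the corner gradient is strictly negative, so $\theta_0=\sigma$ is strictly suboptimal; the maximum is interior and strict affirmative action is delivered by the first paragraph.

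The main obstacle I foresee is making the guess-and-verify argument watertight: I need the candidate value function --- the explicit closed form above on $[0,\phi^*_0]$ spliced to whatever is produced by interior-optimal play on $(\phi^*_0,1]$ --- to coincide with the \emph{unique} fixed point of the Bellman operator guaranteed in Appendix~\ref{app:value}. Concavity of $V$ (which ensures the FOC identifies a maximum, not a minimum) together with the contraction property of the discounted Bellman operator should close this. A secondary subtlety is the theorem's phrasing ``$\theta_0=\theta_1$'' when the corner $\theta_0=\sigma$ binds: at that corner no agent in $D$ receives the opportunity, so the policy is unchanged by notionally raising $\theta_0$ up to $\theta_1$, and the asserted existence of an optimal pair with $\theta_0=\theta_1$ should be read in that effective sense.
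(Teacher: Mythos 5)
Your proposal is correct and follows essentially the same route as the paper: you derive the closed-form value function $V(\phi_0)=R(\phi_0,\sigma)/(1-\gamma)$ under the absorbing corner policy $\theta_0=\sigma$, obtain the tipping point from the first-order condition as the positive root of the same quadratic $\tau\phi_1^2-\alpha\sigma\phi_1-\gamma\alpha^2\sigma^2/(2(1-\gamma))=0$, and establish strict affirmative action beyond $\phi^*_0$ from the stationarity identity $\theta_1=\theta_0(1-\gamma V'(S))$ together with $V'<0$, which is exactly the paper's Lemma~\ref{lem:phi_0_star} and Proposition~\ref{prop:strict-AA}. The only difference is cosmetic: your single gradient expression $\frac{\phi_0}{\sigma}\left(\theta_1-\theta_0+\gamma V'(S)\,\theta_0\right)$ subsumes the paper's separate exchange argument for weak affirmative action (Lemma~\ref{lem:AA-leq}) and its explicit equal-threshold line (Lemma~\ref{lem:AA-line}), and your closing caveats (splicing the candidate value function to the unique Bellman fixed point; reading $\theta_0=\theta_1$ in the effective sense at the corner) match the paper's own treatment.
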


We prove Theorem~\ref{thm:main} by establishing a series of Lemmas.
Lemma~\ref{lem:phi_0_star} determines the largest state $\phi^*_0 \in [0,1]$ below which the optimal policy consists of applying the high threshold of $\sigma$ to group $D$. Clearly, below such point, the optimal policy does not use affirmative action.  
Next, we investigate states larger than $\phi^*_0$. 
For every state $\phi_0 > \phi^*_0$, Lemma~\ref{lem:AA-line} identifies the set of thresholds that exhibit affirmative action. 
Lemma~\ref{lem:AA-leq} establishes that the optimal policy uses weak affirmative action beyond $\phi^*_0$. That is, it shows that for any state $\phi_0 > \phi^*_0$, it is never optimal to impose a strictly higher threshold on D compared to A.
Proposition~\ref{prop:strict-AA} shows that beyond $\phi^*_0$, the optimal policy in fact uses \emph{strict} affirmative action. That is, at every state $\phi_0 > \phi^*_0$, the optimal policy imposes a strictly lower threshold on D compared to A. 

Lemma~\ref{lem:phi_0_star} determines the state $\phi^*_0 \in [0,1]$ up to which $\sigma$ is an optimal threshold. Note that if $\sigma$ is an optimal threshold at a state $\phi_0$, the optimal policy does not use affirmative action at $\phi_0$. To see this, note that when $\sigma$ is optimal, any action $\theta_0 > \sigma$ is also optimal (they all pick a 0-fraction of group $D$ and are, therefore, effectively equivalent). 
(All omitted proof can be found in Appendix~\ref{app:omitted_proofs}.)

\begin{lemma}[The Tipping Point]\label{lem:phi_0_star}
For any state $\phi_0 < \phi^*_0$, $\sigma \in \Pi^*_0(\phi_0)$.
\end{lemma}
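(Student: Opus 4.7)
The plan is to verify the Bellman equation (\ref{eq:bellman}) directly at the candidate action $\theta_0 = \sigma$, exploiting the fact that $\sigma$ is an \emph{absorbing} action: no one in group $D$ receives the opportunity, so by (\ref{eq:state}) we have $S(\phi_0, \sigma) = \phi_0$ and the state is fixed. Assuming $\sigma$ is optimal throughout $[0, \phi^*_0]$, the value function on that interval is then the discounted perpetuity $V(\phi_0) = R(\phi_0, \sigma)/(1-\gamma)$, and we must check this guess is self-consistent.

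Concretely, I would first plug $\theta_0 = \sigma$ into (\ref{eq:reward}) and (\ref{eq:theta_1}) to obtain $\theta_1 = \tau + \sigma(1-\alpha-\phi_0)/(1-\phi_0)$ (admissible whenever $\phi_0 \leq 1-\alpha$) and $R(\phi_0, \sigma) = \alpha(\sigma+\tau) - \sigma\alpha^2/[2(1-\phi_0)]$, yielding $V'(\phi_0) = -\sigma\alpha^2/[2(1-\gamma)(1-\phi_0)^2]$ on this interval. Next I would differentiate the Bellman objective $R(\phi_0, \theta_0) + \gamma V(S(\phi_0, \theta_0))$ in $\theta_0$: the chain rule together with $\partial\theta_1/\partial\theta_0 = -\phi_0/(1-\phi_0)$ gives $\partial R/\partial\theta_0 = (\phi_0/\sigma)(\theta_1 - \theta_0)$, while $\partial S/\partial\theta_0 = \phi_0\theta_0/\sigma$. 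Evaluating at $\theta_0 = \sigma$ and substituting the formula for $V'$ reduces the derivative to
\[
\frac{\phi_0}{\sigma}\left[\tau - \frac{\sigma\alpha}{1-\phi_0} - \frac{\gamma\sigma^2\alpha^2}{2(1-\gamma)(1-\phi_0)^2}\right].
\]
Since $\sigma$ is the upper endpoint of $\Theta_{\phi_0}$ (as $\sigma(1-\alpha)/\phi_0 \geq \sigma$ whenever $\phi_0 \leq 1-\alpha$), it is optimal at this boundary precisely when the above derivative is non-negative. Multiplying by $(1-\phi_0)^2/\phi_0$ yields a quadratic inequality in $x = 1-\phi_0$ whose positive root via the quadratic formula recovers exactly $1-\phi^*_0$ from (\ref{eq:phi_star}); thus the boundary FOC holds iff $\phi_0 \leq \phi^*_0$.

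The main obstacle will be promoting this boundary first-order condition to a \emph{global} optimality statement over the entire admissible set $\Theta_{\phi_0}$. I would close this gap by establishing concavity of the Bellman objective in $\theta_0$: concavity of $R(\phi_0, \cdot)$ is immediate from its quadratic form (a sum of two concave quadratics after substituting (\ref{eq:theta_1})), and since $S(\phi_0, \theta_0) \leq \phi_0 \leq \phi^*_0$ the composition $V \circ S$ depends only on the explicit formula for $V$ on $[0, \phi^*_0]$, which lets me verify concavity (or at least monotonicity / unimodality) by direct second-derivative computation. Concavity then upgrades the boundary FOC to global optimality, certifying that the guess $V(\phi_0) = R(\phi_0, \sigma)/(1-\gamma)$ is a fixed point of the Bellman operator on $[0, \phi^*_0]$, and hence that $\sigma \in \Pi^*_0(\phi_0)$ for every $\phi_0 < \phi^*_0$.
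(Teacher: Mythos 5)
Your proposal is correct and follows essentially the same route as the paper's proof: both exploit that $\theta_0=\sigma$ is absorbing to guess $V(\phi_0)=R(\phi_0,\sigma)/(1-\gamma)$, differentiate the Bellman objective $R(\phi_0,\theta_0)+\gamma V(S(\phi_0,\theta_0))$ at $\theta_0=\sigma$, and solve the resulting quadratic in $1-\phi_0$ to recover $\phi^*_0$. If anything, you are somewhat more careful than the paper, which imposes the first-order condition as an \emph{equality} only at the largest such state and leaves implicit both the boundary inequality for $\phi_0<\phi^*_0$ and the concavity argument needed to promote the first-order condition to global optimality over $\Theta_{\phi_0}$.
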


The following Lemma characterizes the region of affirmative action in the state-action space.
\begin{lemma}[Region of Affirmative Action]\label{lem:AA-line}
For a state $\phi_0^* < \phi_0 <1$, the threshold $\theta_0 \in \Pi^*_0(\phi_0)$ uses affirmative action if and only if $\theta_0 \leq \sigma(1-\alpha)+\tau(1-\phi_0)$.
\end{lemma}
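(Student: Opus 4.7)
The plan is to prove this by a direct algebraic manipulation using the capacity constraint, which is what ties $\theta_0$ to $\theta_1$. Note that the hypothesis $\phi_0 < 1$ guarantees $1 - \phi_0 > 0$, so equation~(\ref{eq:theta_1}) gives a well-defined single value
\[
\theta_1 \;=\; \frac{\sigma(1-\alpha) + (1-\phi_0)\tau - \phi_0 \theta_0}{1-\phi_0}
\]
as an affine decreasing function of $\theta_0$. The definition of affirmative action, unwound, says that $\theta_0$ uses (weak) affirmative action at $\phi_0$ exactly when $\theta_0 \le \theta_1$. So the task reduces to showing that this inequality is equivalent to $\theta_0 \le \sigma(1-\alpha) + \tau(1-\phi_0)$.

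The first step is to substitute the expression above for $\theta_1$ into $\theta_0 \le \theta_1$. Because $1 - \phi_0 > 0$, I can clear denominators without flipping the inequality, yielding
\[
(1-\phi_0)\,\theta_0 \;\le\; \sigma(1-\alpha) + (1-\phi_0)\tau - \phi_0 \theta_0 .
\]
Moving the $\phi_0 \theta_0$ term to the left and combining, the $\theta_0$ coefficients collapse to $1$, leaving exactly
\[
\theta_0 \;\le\; \sigma(1-\alpha) + \tau(1-\phi_0),
\]
which is the stated condition. Since each step is reversible, I also obtain the converse, establishing the ``if and only if''.

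This argument is essentially just algebra on the capacity constraint~(\ref{eq:capacity-constraint}), so there is no real obstacle; the only thing worth flagging is that optimality of $\theta_0$ (i.e.\ $\theta_0 \in \Pi^*_0(\phi_0)$) plays no role in the equivalence itself. Optimality matters only insofar as the statement restricts attention to optimal thresholds, but the characterization of the region where affirmative action occurs is a property of the feasibility structure (the capacity constraint) and not of the value function. In particular, the line $\theta_0 = \sigma(1-\alpha) + \tau(1-\phi_0)$ is exactly the ``line of affirmative action'' depicted as the dotted red line in Figure~\ref{fig:policy_brief}, and Lemma~\ref{lem:AA-leq} will then need to argue that the optimal policy in fact lies on or below this line for $\phi_0 > \phi_0^*$.
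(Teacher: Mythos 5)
Your proof is correct and follows essentially the same route as the paper's: both arguments are direct algebra on the capacity constraint, with the paper introducing the common threshold $\theta = \sigma(1-\alpha)+\tau(1-\phi_0)$ that equalizes the two groups and observing $\theta_0<\theta \Leftrightarrow \theta_1>\theta$, while you substitute the explicit formula for $\theta_1$ and clear denominators --- the same equivalence either way. Your closing observation that optimality plays no role and that the characterization is purely a feasibility fact is accurate and matches how the paper uses the lemma.
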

\begin{proof}
Let $\theta$ be the threshold that if applied to both D and A at $\phi_0$, allocates opportunities exactly. We have that
$$\phi_0 \left(1-\frac{\theta}{\sigma} \right) + (1-\phi_0) \left(1-\frac{\theta-\tau}{\sigma} \right) = \alpha$$
or equivalently,
$$\theta = \sigma(1-\alpha)+\tau(1-\phi_0).$$
Note that $\theta_0 < \theta$ if and only if $\theta_1 > \theta$---otherwise the capacity constraints would not be maintained. Therefore, for any $\theta_0 < \theta$, $\theta_1>\theta_0$, which implies affirmative action.
\end{proof}

The following Lemma shows that beyond $\phi^*_0$, the optimal threshold for the disadvantaged is never higher than that for the advantaged.
\begin{lemma}[Weak Affirmative Action]\label{lem:AA-leq}
Consider a state $\phi_0 >\phi^*_0$.
For all $\theta_0 \in \Pi^*_0(\phi_0)$ and all $\theta_1 \in \Pi^*_1(\phi_0)$, $\theta_0 \leq \theta_1$.
\end{lemma}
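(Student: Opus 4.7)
The plan is to argue by contradiction with a one-step exchange: if some optimal policy at $\phi_0 > \phi_0^*$ has $\theta_0 > \theta_1$, then the equalized-threshold policy on the affirmative-action line from Lemma \ref{lem:AA-line} strictly dominates it in Bellman value. Concretely, suppose $(\theta_0, \theta_1)$ is a paired optimum at $\phi_0$ with $\theta_0 > \theta_1$; by Lemma \ref{lem:AA-line} this is the same as $\theta_0 > \bar\theta$, where $\bar\theta := \sigma(1-\alpha) + \tau(1-\phi_0)$. The capacity constraint (\ref{eq:capacity-constraint}) rewrites as $\bar\theta = \phi_0 \theta_0 + \phi_1 \theta_1$, so $\theta_1 < \bar\theta < \theta_0$. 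Thus $\bar\theta$ lies strictly below $\theta_0$, which was already below the upper bound of $\Theta_{\phi_0}$; and the lower-bound condition reduces to $\bar\theta - \sigma(1-\alpha/\phi_0) = (1-\phi_0)(\sigma\alpha/\phi_0 + \tau) \geq 0$ (with $\bar\theta \geq 0$ trivial), so the equalized action $(\bar\theta,\bar\theta)$ is a legitimate alternative at $\phi_0$.

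Next, I compare the Bellman values of the two policies. For the immediate reward, the variance identity $\phi_0 \theta_0^2 + \phi_1 \theta_1^2 - (\phi_0 \theta_0 + \phi_1 \theta_1)^2 = \phi_0 \phi_1(\theta_0-\theta_1)^2$ applied inside (\ref{eq:reward}) yields
\[
R(\phi_0,\bar\theta) - R(\phi_0,\theta_0) \;=\; \frac{\phi_0 \phi_1}{2\sigma}(\theta_0-\theta_1)^2 \;>\; 0,
\]
so the equalized policy strictly beats $(\theta_0,\theta_1)$ in the current generation. For the next-period state, the transition (\ref{eq:state}) depends quadratically on $\theta_0$, so since $\bar\theta < \theta_0$,
\[
S(\phi_0,\bar\theta) - S(\phi_0,\theta_0) \;=\; \frac{\phi_0}{2\sigma}(\bar\theta^2 - \theta_0^2) \;<\; 0,
\]
i.e., the equalized policy leaves a strictly smaller disadvantaged fraction in the next generation.

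Combining the two requires only that $V$ be non-increasing: this is asserted in the discussion around Figure \ref{fig:policy_brief} and established formally in Appendix \ref{app:value} alongside the uniqueness, continuity, and differentiability of $V$. With $V$ non-increasing, the second inequality gives $V(S(\phi_0,\bar\theta)) \geq V(S(\phi_0,\theta_0))$, and adding $\gamma$ times this to the first strict inequality yields
\[
R(\phi_0,\bar\theta) + \gamma V(S(\phi_0,\bar\theta)) \;>\; R(\phi_0,\theta_0) + \gamma V(S(\phi_0,\theta_0)) \;=\; V(\phi_0),
\]
contradicting the Bellman optimality of $\theta_0$. Hence no paired optimum has $\theta_0 > \theta_1$; because $\theta_1$ is a strictly decreasing affine function of $\theta_0$ via (\ref{eq:theta_1}), ruling out the paired violation immediately upgrades to the cross-pair statement $\theta_0 \leq \theta_1$ for every $\theta_0 \in \Pi^*_0(\phi_0)$ and every $\theta_1 \in \Pi^*_1(\phi_0)$, as required. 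The main obstacle is not the exchange computation (which is a clean algebraic identity) but rather the monotonicity input for $V$; an honest presentation must either cite Appendix \ref{app:value} or first show that the Bellman operator preserves the cone of non-increasing continuous functions and invoke uniqueness of the fixed point.
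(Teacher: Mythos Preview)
Your proof is correct and follows essentially the same exchange argument as the paper: assume $\theta_0 > \theta_1$, replace the pair by the equalized threshold $\bar\theta = \sigma(1-\alpha)+\tau(1-\phi_0)$, and observe that this both raises the immediate reward and lowers the next-period state, contradicting optimality via the monotonicity of $V$. Your version is more explicit than the paper's---you verify feasibility of $\bar\theta$, quantify the reward gain via the variance identity $\phi_0\theta_0^2+\phi_1\theta_1^2-\bar\theta^2=\phi_0\phi_1(\theta_0-\theta_1)^2$, and address the cross-pair upgrade---but the underlying idea is identical.
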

\begin{proof}
Suppose not and there exists $\theta_0 \in \Pi^*_0(\phi_0)$ and $\theta_1 \in \Pi^*_1(\phi_0)$ such that $\theta_0 > \theta_1$. If we lower $\theta_0$ down to $\sigma(1-\alpha)+\tau(1-\phi_0)$ and increase $\theta_1$ up to $\sigma(1-\alpha)+\tau(1-\phi_0)$, we maintain the capacity constraints and at the same time achieve the following:
\begin{itemize}
\item[(a)] We improve the immediate reward of the current time step--- because we replace advantaged agents with low success probabilities in the range of $[\theta_1, \sigma(1-\alpha)+\tau(1-\phi_0)]$ with disadvantaged agents with higher success probabilities in $[\sigma(1-\alpha)+\tau(1-\phi_0), \theta_0]$.
\item[(b)] We move to a state with a relatively smaller size of group $D$ (simply because we gave the opportunity to more disadvantaged agents). The value function is strictly decreasing in $\phi_0$. Therefore, this new next state has a higher value compared to the previous one. 
\end{itemize}
The fact that we can improve the value contradicts the optimality of $\theta_0, \theta_1$. Therefore, $\theta_0 > \theta_1$ cannot hold
\end{proof}

The following Proposition establishes that beyond $\phi^*_0$, the optimal policy uses strict affirmative action. (The proofs establishing the proposition can be found in Appendix~\ref{app:omitted_proofs}.)
\begin{proposition}[Strict Affirmative Action]\label{prop:strict-AA}
Consider a state $\phi_0 >\phi^*_0$, and let $V'(\phi_0)$ denote the derivative of the value function $V$ evaluated at $\phi_0$. If $V'(\phi_0) < 0$, then for all $\theta_0 \in \Pi^*_0(\phi_0)$ and all $\theta_1 \in \Pi^*_1(\phi_0)$, $\theta_0 < \theta_1$.
\end{proposition}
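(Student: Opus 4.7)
The plan is a proof by contradiction via a first-order perturbation of the Bellman objective. By Lemma~\ref{lem:AA-leq}, every optimal pair at $\phi_0$ already satisfies $\theta_0 \leq \theta_1$, so the only case left to rule out is equality.

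Suppose, toward contradiction, that some optimal pair has $\theta_0 = \theta_1$. By Lemma~\ref{lem:AA-line} (taken at the boundary of its characterization), the common value must equal $\theta^* := \sigma(1-\alpha) + \tau(1-\phi_0)$, i.e.\ the unique threshold that, when applied identically to both groups, exactly respects the capacity constraint \eqref{eq:capacity-constraint}. I would then consider the one-parameter family of feasible perturbations $\theta_0(\epsilon) = \theta^* - \epsilon$ and $\theta_1(\epsilon) = \theta^* + \tfrac{\phi_0}{1-\phi_0}\epsilon$ for small $\epsilon > 0$. Direct differentiation of \eqref{eq:period-t-payoff} and \eqref{eq:period-t-transition} at $\epsilon = 0$ yields
\[
\left.\tfrac{d R}{d \epsilon}\right|_{\epsilon = 0} = \tfrac{\phi_0(\theta_0 - \theta_1)}{\sigma}\bigg|_{\theta_0 = \theta_1} = 0, \qquad \left.\tfrac{d S}{d \epsilon}\right|_{\epsilon = 0} = -\tfrac{\phi_0 \theta^*}{\sigma} < 0.
\]
The vanishing of $dR/d\epsilon$ reflects the fact that the equal-threshold rule is already the greedy (myopic) maximizer of the current-period payoff under capacity, so only second-order effects remain. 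Using differentiability of $V$ from Appendix~\ref{app:value}, the total derivative of the Bellman objective along the perturbation is therefore $-\gamma\, V'\bigl(S(\phi_0,\theta^*)\bigr) \cdot \tfrac{\phi_0\theta^*}{\sigma}$. If $V'\bigl(S(\phi_0, \theta^*)\bigr) < 0$ strictly, this one-sided derivative is strictly positive, so a small $\epsilon > 0$ yields a feasible improvement over $(\theta^*,\theta^*)$, contradicting optimality.

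The main (and essentially only) obstacle is upgrading the hypothesis $V'(\phi_0) < 0$ to the strict inequality $V'\bigl(S(\phi_0,\theta^*)\bigr) < 0$ at the successor state, which lies in $[\phi_0^*, \phi_0]$ since $\theta^* \leq \sigma$ and the transition \eqref{eq:period-t-transition} is monotone in $\theta_0$. My plan is to combine the concavity and strict monotonicity of $V$ on $[\phi_0^*, 1]$ (both recorded in Appendix~\ref{app:value}): concavity immediately gives $V'\bigl(S(\phi_0,\theta^*)\bigr) \leq 0$, and to exclude the degenerate case $V'(S) = 0$ I would bootstrap, arguing that if $V'(S) = 0$ then applying the same perturbation analysis at state $S$ would force the vanishing of $V'$ to propagate along the entire optimal trajectory emanating from $S$, which is incompatible with the strict inequality $V'(\phi_0) < 0$ at the starting state. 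This propagation step is the delicate part and uses the closed-form expressions for $V$ and the envelope-theorem structure of the Bellman equation developed in the appendix; the rest of the argument is the clean first-order calculation above.
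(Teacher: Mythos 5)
Your proof follows essentially the same route as the paper's: both reduce to the equal-threshold case via Lemmas~\ref{lem:AA-leq} and~\ref{lem:AA-line}, observe that $\dd{\theta_0}R(\phi_0,\theta_0)$ vanishes at $\theta_0=\sigma(1-\alpha)+\tau(1-\phi_0)$ so that only the continuation term $\gamma\,\dd{\theta_0}S(\phi_0,\theta_0)\cdot V'(S(\phi_0,\theta_0))$ survives in the first-order condition, and derive a contradiction from its nonvanishing. The one point where you go beyond the paper is in trying to upgrade the hypothesis $V'(\phi_0)<0$ to $V'(S(\phi_0,\theta^*))<0$ at the successor state --- the paper simply asserts the latter --- but note that the inequality $V'(S)\leq 0$ comes from the monotone decrease of $V$ established in Appendix~\ref{app:value}, not from concavity (since $S(\phi_0,\theta^*)\leq\phi_0$ and a decreasing $V'$ would only give $V'(S)\geq V'(\phi_0)$, the wrong direction), and your proposed ``propagation'' argument for excluding $V'(S)=0$ remains the genuinely unproved step in both your write-up and the paper's.
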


\xhdr{Insights from the Analysis}
We end this section by making several observations about the optimal policy:
First, note that the optimal policy never shrinks the size of $D$ to a value less than $\phi^*_0$. For every initial state $\phi_0 \in [0,1]$, Figure~\ref{fig:convergence} shows the state one converges to when the optimal policy is simulated for $1000$ steps on $\cD$. In other words, affirmative action is optimal from a utilitarian point of view as long as group $D$ is sufficiently large. 
Second, the precise derivation of $\phi_0^*$, as specified in (\ref{eq:phi_star}), allows us to gain new insights into how the interaction between the parameters of our model can give rise to or avert affirmative action. Figure~\ref{fig:star} depicts the status of persistent affirmative action (i.e., $\phi_0^* \leq 0$) for $\alpha, \tau, \gamma$ in settings where $\sigma = 1-\tau$. (The derivation behind the plots can be found in Appendix~\ref{app:star}.) 
Notice that persistent affirmative action is promoted by large values of $\alpha$ and small values of $\tau$, since these make it easier to include 
high-performing members of group $D$ without a large difference in thresholds.
Persistent affirmative action is also promoted by larger values of $\gamma$,
indicating greater concern for the rewards in future generations. Note, however, that a finite level of patience often suffices for persistent affirmative action to be optimal.
%
%
\hl{
Finally, a frequent objection to affirmative action polices is their potential for reverse discrimination (see, e.g., Hopwood v. Texas, 78 F.3d 932 (5th Cir. 1996)). Translating these concerns into the terminology of our model, one may object that ``if the highest-ability member of $D$ has ability below the lowest-ability member of $A$, then \emph{any} affirmative action in this scenario will violate the desert principle''. Note, however, that in our model, abilities for both $A$ and $D$ group members are uniform in the $[0,1]$ interval. So  the optimal policy will never favor a lower-ability member of $D$ to a higher ability member of $A$. 
}



\section{Computational Analysis for $p_D < p_A$}
The focus of our theoretical analysis was on the settings in which
$p_A \leq p_D$. Next, we computationally investigate the optimal policy
for cases where $p_A > p_D$.

Recall that when $p_A$ and $p_D$ are non-zero, the circumstance of an
offspring is not deterministically specified by that of their parent.
Instead, $p_A$ and $p_D$ specify the offspring's probability of
spontaneous movement to the background distribution of circumstances
in society, that is, $(\phi_0,\phi_1)$. When $p_A \leq p_D$, both the
spontaneous movement dynamics and the allocation of opportunities work
toward reducing the relative size of group $D$ over time. This fact
allowed us to utilize backward induction to characterize the optimal
policy theoretically. 
When $p_A > p_D$, however, the spontaneous movement work in the
opposite direction of opportunity allocation: with no opportunity
allocated, the spontaneous movement dynamics gradually shift the
entire population to group $D$. In such settings, the role of the
allocation policy consists of combatting the natural flow of the
population to $D$.

Although in settings with $p_A > p_D$, it is significantly more
challenging to characterize the optimal allocation policy, we can
still approximately compute this policy via discretization followed by
solution methods for finite decision problems.  More
specifically, we discretize the state- and action-space of the
decision process, $\cD$, then compute the optimal policy of the
resulting finite decision process using methods such as
policy iteration. In what follows, we report the result of our
simulations when the state- and action-space are approximated by 1000
equidistant states and actions (i.e., $\tilde{\Phi} = (0, 0.001,
0.002, \cdots, 1)$ and $\tilde{\Theta} = \sigma (0, 0.001, 0.002,
\cdots, 1)$) and the reward and transition functions are approximated
accordingly.

    \begin{figure}
    \centering
        \includegraphics[width=0.4\textwidth]{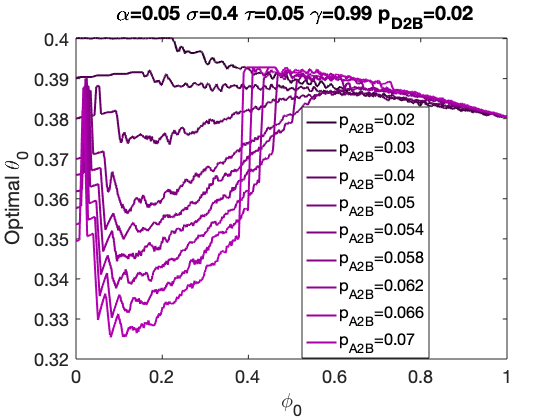}
        \caption{The optimal policy $\theta_0$ at every state $0 \leq \phi_0 \leq 1$ for various $p_A$ values.}
        \label{fig:p_A2B_policy}
    \end{figure}

    \begin{figure}[h]
    \centering
        \includegraphics[width=0.4\textwidth]{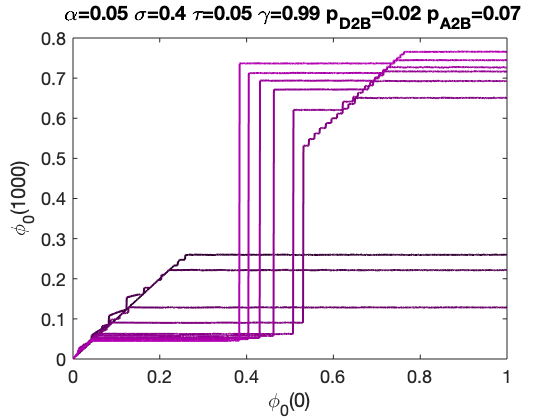}
        \caption{The absorbing state for every initial state $0 \leq \phi_0 \leq 1$. \textbf{The optimal policy is non-monotone and exhibits sharp drops when better absorbing states become cost-effective.}}
        \label{fig:p_A2B_absorbing}
    \end{figure}

Figure~\ref{fig:p_A2B_policy} shows how the optimal policy varies with $p_A$ for a fixed value of $p_D$. Compared to the optimal policy for settings of $p_A \leq p_D$, we observe two significant differences here:
\begin{itemize}
\item The optimal policy can be \textbf{non-monotone} in $\phi_0$.
\item There exists a \textbf{sharp drop} in the optimal threshold at some intermediate state $\phi_0 \in [0,1]$.
\end{itemize}
Next, we provide some justifications for both the intuitive and 
potentially counter-intuitive aspects of these phenomena.

For every initial state $\phi_0 \in [0,1]$, Figure~\ref{fig:p_A2B_absorbing} illustrates the state one converges to (i.e., the absorbing state) if the optimal policy is simulated for 1000 steps. Note that in all cases, the optimal policy successfully reduces the absorbing state to a more desirable value less than 1 (recall that with no intervention, we always converge to $\phi_0=1$ through the spontaneous movements). 

As illustrated in Figure~\ref{fig:p_A2B_absorbing}, the sharp drop in the optimal threshold coincides with an abrupt change in the absorbing state. When the initial state is close to 1, the allocation policy must employ extensive affirmative action to reduce and maintain a better absorbing state. This feat, however, comes at the cost of immediate reward. The optimal policy, therefore, settles for reaching and obtaining a slightly more desirable absorbing state (a value between 0.6 and 0.8 in Figure~\ref{fig:p_A2B_absorbing}). 
For smaller initial states, the cost of employing extensive affirmative action goes down, and at some point, it becomes viable to reach and maintain a much better absorbing state (a value between 0 to 0.3 in Figure~\ref{fig:p_A2B_absorbing}). The sharp drop in the optimal threshold happens precisely at the state where the benefit of extensive affirmative action outweighs its cost. 

In summary, when $p_A > p_D$, we need persistent (albeit non-monotone)
affirmative action to reach and maintain more desirable absorbing
states. The optimal extent of affirmative action crucially depends on
the initial state of the population: If the disadvantaged group is
large to begin with, the optimal policy has to forgo extensive
affirmative action to maintain sufficient short-term return. If the
advantaged group initially has a sufficient mass, extensive
affirmative action becomes optimal, and in the long-run, manages to
significantly increases the fraction of advantaged individuals in the
population. These findings are in sharp contrast with settings of $p_A
\leq p_D$. In those settings, affirmative action is optimal under
the much more straightforward condition that the disadvantaged group
exceeds a $\phi_0^*$ fraction of the population, for a constant
$\phi_0^*$ that we can specify precisely, 
and it ceases to be optimal as soon as society gains at
least $1 - \phi_0^*$ mass in the advantaged group.

\section{Discussion and Future Directions}\label{sec:conclusion}
In this paper, we developed and analyzed a model for allocating
opportunities in a society with limited intergenerational mobility.
These opportunities produce benefits in the present generation, but
they can also raise the socioeconomic status of the recipients
and their descendants.
This creates a trade-off: whether to maximize the current 
value achievable from the opportunities or to increase the value achievable
in future generations.
We have shown how to resolve this trade-off by solving a continuous
optimization problem over multiple generations, and we have seen
how optimal solutions to this problem can exhibit a form of
socioeconomic affirmative action, 
favoring individuals
of low socioeconomic status over slightly higher-performing individuals
of high socioeconomic status.
Characterizing the conditions under which this type of 
affirmative action occurs in the model provides insights into the
interaction between the amount of opportunity available,
the magnitude of the gap between different socioeconomic classes, 
and the ``patience''' of the society in evaluating the benefits
achievable from future generations.

\hl{
\xhdr{Insights and Implications}
Our work provides a purely utilitarian account of a society--with no intrinsic interest in reducing inequality-- which nonetheless chooses to employ affirmative action along socioeconomic lines. In that sense, our work responds to concerns around affirmative action hurting social utility. In addition, our analysis present new insights on the shape and extent of effective socioeconomic affirmative action policies across multiple generations. Our findings offer several important insights to the Fair-ML literature: (a) If temporal dynamics are taken into account, there are important cases where fairness interventions can be fully aligned with performance-maximizing goals. (b) Effective fairness interventions should often adapt to population changes over time. (c) For a comprehensive assessment of fairness for an allocation policy/algorithm, we may need to look beyond its immediate impact on its direct subjects. Our work, for instance, characterizes optimal allocations when decisions made for individuals today impact their future generations.
}

\hl{
\xhdr{Toward a Broader Class of Models Leading to Affirmative Action} 
Our work's main focus was on \emph{intergenerational dynamics} and their impact on the effectiveness of \emph{socioeconomic affirmative action policies}. But we hope that our work also serves as a stepping stone toward a broader class of models characterizing conditions under which affirmative action comes out of optimal policy derivations. In particular, a generalized version of our decision process can capture \emph{race-dependent} dynamics of movement between various socioeconomic levels. Race-dependent dynamics are motivated by the observation that advantaged Black people are more likely to be downwardly mobile than their advantaged white counterparts~\citep{chetty2020race}. A more general version of our decision process would consist of race-dependent $\sigma$'s and $\tau$'s. Analyzing the resulting dynamics can shed light on the tradeoffs between race/ethnicity-based and socioeconomic affirmative action policies. We leave this analysis as a crucial direction for future work.
}

There are a number of interesting further research directions 
suggested by the results in the paper, and several of them address the limits of the model noted in the introduction.
In particular, it would be interesting to consider extensions
of the model in which basic parameters of the society varied over
time rather than being fixed and could only be estimated with delayed feedback (e.g., after several years); the ability distributions were non-uniform; the dynamics would amplify the interaction between ability and socioeconomic advantage; or there were multiple
opportunities being allocated concurrently.
It would be interesting to incorporate strategic considerations as
in other work on intergenerational mobility.
And finally, there is a natural opportunity to try integrating
the models here with work on the nature of statistical discrimination
and the design of interventions to alleviate it.

\hl{
\xhdr{On the Scope and Limitations of Mathematical Models}
Our work addresses one potential role that affirmative action policies can have on future generations and mitigating socioeconomic inequality. Our work follows a long tradition in the mathematical social sciences, where a stylized model is proposed to capture certain aspects of a complex societal issue; the model is then rigorously analyzed with the hope that the formal analysis provides new insights about alternative policy choices and their counterfactual effects. These insights can in turn inform policymakers at a qualitative level. 
We conclude this article by acknowledging that all mathematical models are by definition highly simplified representations of the phenomena at hand, and as such it is important to understand and interpret them keeping their limitations and scope of applicability in mind. We have grounded our work in a broad literature from economics so as to draw on the insights that earlier modelers have brought to this setting. But we emphasize that theoretical models and their implications should never be taken as exact representations of the way complex societal processes operate and evolve. As such, it is important to not draw policy interpretations on the basis of such models alone. 
}

\section*{Acknowledgments}
This work was supported in part by a Simons Investigator Award,
a Vannevar Bush Faculty Fellowship,
a MURI grant, AFOSR grant FA9550-19-1-0183, and
grants from the ARO and the MacArthur Foundation.
We are grateful to Lawrence E. Blume, Alexandra Chouldechova, Zachary C. Lipton, Rediet Abebe, Manish Raghavan, Kate Donahue, the AI, Policy, and Practice (AIPP) group at Cornell, and the FEAT reading group at Carnegie Mellon for invaluable discussions. Finally, we would like to thank the anonymous reviewers of our work for their insightful and constructive feedback.

\bibliographystyle{named}
\bibliography{ig_biblio}

\pagebreak
\appendix

\section{Expanded Related Work}\label{sec:related}

\subsection{Long-term Implications of Fair-ML}
Several recent papers study the long-term impact of decision-making models and fairness interventions on society and individuals. \citet{liu2018delayed} and \citet{kannan2018downstream} study how a utility-maximizing \emph{decision-maker} may respond to the predictions made by the model. For instance, the decision-maker may interpret and use the predictions in a certain way, or update the model entirely. 
%
\cite{dong2018strategic,hu2018disparate,milli2018social} address \emph{strategic classification}---a setting in which decision subjects are assumed to respond \emph{strategically} and potentially \emph{untruthfully} to the choice of the classification model, and the goal is to design classifiers that are robust to strategic manipulation. 

\citep{hu2018short} and \citep{mouzannar2019fair} take inspiration from existing models of statistical discrimination and affirmative action. \citet{hu2018short} study the impact of enforcing statistical parity on hiring decisions made in a temporary labor market that precedes a permanent labor market. They show that under certain conditions, statistical parity can result in an equilibrium that Pareto-dominates the one that would emerge in an unconstrained labor market.

\citet{mouzannar2019fair} model the dynamics of how a population reacts to a selection rule by changing their qualifications. The qualification of an individual is assumed to be estimated via some classification function and this estimate is assumed to be the same as their true qualification/label of the individual. They model a selection rule by its selection rates across the qualified and unqualified members of two socially salient groups. In other words, the rule is modeled by four non-negative numbers. They assume that there exist continuously differentiable functions $f_1$ and $f_2$ that map the selection rates in each group to the percentage of qualified individuals in that group in the next round. These functions model how qualifications change over time in response to the selection rule. Within this model,  \citet{mouzannar2019fair} study two types of myopic utility-maximizing policies: an affirmative action policy which forces the selection rates to be equal across the two groups, and an unconstrained policy that simply picks the qualified in each group. They then provide several conditions under which each of these policies lead to social equality (i.e., groups that are indistinguishable in terms of their qualifications). As for affirmative action policies, under-acceptance of qualified individuals in one group (to guarantee equal selection rates) is shown to be inefficient (both in terms of the decision-maker's utility and average qualifications). Over-acceptance of the unqualified, on the other hand, can lead to a more qualified population (although it may fail to guarantee social equality).

\subsection{Intergenerational Income Mobility}
A large economic literature addresses the relationship between income inequality and intergenerational mobility. At a high-level, the literature suggests that higher inequality results in lower income mobility across generations (see e.g.,~\citep{maoz1999intergenerational,corak2013income}). The precise measurements of inequality and mobility significantly influence the strength of this effect (see, e.g.,~\citep{solon1992intergenerational,piketty2000theories}). The literature recognizes two main mechanisms through which a parent may affect its offspring income: (1) by transmitting endowments (e.g., innate abilities, connections, etc.) to the next generation; (2) by investing in the human capital of the next generation~\citep{becker1979equilibrium,solon1999intergenerational}. 
In the existing theoretical models of income mobility, utility-maximizing parents decide how much of their income/human capital to consume and how much of it to invest in their offspring (depending on their level of altruism)~\citep{becker1986human,becker1979equilibrium,loury1981intergenerational,solon1999intergenerational}. This decision, combined with the social institutions and the technology that converts parental investment to the next generation's human capital, determine the fate of the next generation.

For example, \citet{loury1981intergenerational} provides a dynamic model of how the earnings distributions of the next generation depends on the parent's earnings and his/her investment decision for his/her offspring. The earnings of the offspring is determined by their innate ability/endowment $\alpha$\footnote{Similar to ours,  \citeauthor{loury1981intergenerational}'s model the ability of each individual is drawn randomly from some fixed distribution. The ability of the offspring is only observed after the parent makes their investment decision.} and how much the parent invests in their training $e$ through a function $h(\alpha, e)$. Unlike our model, \citeauthor{loury1981intergenerational} assumes the the parents are expected-utility maximizing when making their investment decisions for their offspring. In particular, he assumes the utility $u(.,.)$ of a parent with earnings $y$ is a function of his/her own consumption $c$ and the expected utility of the offspring as the result of the investment he/she makes in their training, which is $y-c$. This recursive definition of utility specifies how the utility of one generation depends on that of all their subsequent generations (not just the immediate offspring). \citeauthor{loury1981intergenerational} defines the indirect utility associated with an earning $y$ to be the function the parent uses to estimate their offspring's utility. If this function is the same as the one obtained by solving the parent's utility maximization problem, the indirect utility function is called \emph{consistent}. He goes on to show that under certain (weak) assumptions, there is a unique consistent indirect utility function. He also defines an \emph{equilibrium} earnings distribution as one that if it characterizes the earnings distribution for one generation, it continues to do so for all subsequent generations. He then provides several comparative statics results for various policies (e.g., ``education specific tax policies''). For example, he shows that under certain conditions, egalitarian policies that redistribute earnings of the next generation have insurance effects that make every member of society today better off.

%

Other factors that have been shown to causally affect income mobility are neighborhood~\citep{chetty2018impacts,chetty2018impacts},
parental education~\citep{torche2011college},
and family background characteristics (e.g., welfare receipt, health, attitudes and social behavior~\citep{black2010recent}).

\subsection{Affirmative Action Policies}
A rich body of work in economics investigates sources of statistical discrimination\footnote{According to \citep{fang2011theories} ``Statistical discrimination generally refers to the phenomenon of a decision-maker using observable characteristics of individuals as a proxy for unobservable, but outcome-relevant, characteristics.''} and the role of affirmative action policies in redressing it. For an excellent survey of this literature, see~\citep{fang2011theories}.

In contrast to taste-based theories of discrimination---which attribute group inequality to racial or gender preference against members of certain groups---statistical discrimination theories cast group inequality as a consequence of interaction between two \emph{rational} parties: 
\begin{enumerate}
\item A utility maximizing decision-maker (e.g., an employer) who has imperfect information about a decision subject's characteristics and uses his/her group membership as a signal for his/her outcome-relevant unobservable characteristics (e.g., employee's productivity);
\item Several groups of individuals (e.g., racial or gender groups) that are ex-ante identical in terms of the distribution of qualifications. Individuals are utility maximizing and best-respond to the decision maker's strategy by adjusting their investment in qualifications.
\end{enumerate}
In a seminal work, \citet{arrow1973theory} argues that differences between groups can be explained as a form of \emph{coordination failure}: In equilibrium, the decision maker holds asymmetric beliefs about group qualifications, and this serves as a \emph{self-fulfilling stereotype}. Because of this belief, members of the disadvantaged group don't have enough incentive to invest in skills and qualifications---precisely because they know that the decision maker will treat them unfavorably). This in turn rationalizes the decision maker's belief---members of the disadvantaged group indeed end up being less qualified (on average) than the advantage group. 

Outcome-based policies, such as affirmative action quotas or the application of disparate impact tests, have long been proposed and implemented as a \emph{temporary} remedy to eliminate group-level inequalities. Such policies may seem particularly effective when self-fulfilling stereotypes are the primary cause of group inequalities.  Imposing quota constraints can lead the players to coordinate on a symmetric outcome, that is, and equilibrium in which the decision maker holds symmetric beliefs about different groups. This, however, is not the only possible consequence of imposing quota constraints. \citet{coate1993will} have shown that quota constraints may reduce the disadvantaged group's incentives to invest in skills, and subsequently, result in an even worse asymmetric the equilibrium. \citeauthor{coate1993will} call this phenomenon \emph{patronization}. (This is a potential consequence of algorithmic fairness-enhancing interventions.)

Advocates of affirmative action have often argued that the larger representation of minorities in higher social positions can generate \emph{role models} that can positively influence future generations of minorities in their investment decisions. \citet{chung2000role} formalizes these arguments, by allowing for groups to differ in their costs of investment.

While the existing literature on affirmative action largely focuses on \textit{race-based} interventions and policies, some scholars have advocated for \emph{socio-economic} or \emph{class-based} affirmative action as a race-neutral alternative (see e.g., \citep{kahlenberg1996class,carnevale2013socioeconomic}). Race-neutral alternatives become particularly important to understand when race-sensitive interventions are constitutionally challenged (for example, see the case of Fisher v. University of Texas at Austin (2013)). 
The extent to which substituting socio-economic status for race in
college admissions is effective in improving minority enrollment is contested (see, e.g., ~\citep{kane1998racial,reardon2006implications,gaertner2013considering,reardon2017can}). However, socio-economic affirmative action can certainly facilitate access to college for economically disadvantaged students. The resulting socioeconomic diversity is a desirable outcome in and of itself~\citep{reardon2017can} and in essence, it does not have to compete with or replace race-sensitive admission policies.

{
\subsection{Comparison with \citep{durlauf2008affirmative}}\label{app:durlauf}
\citep{durlauf2008affirmative} is one of the closest papers to our work---in terms of the motivation and modeling choices. In his concluding remarks, Durlauf poses the key question our work sets out to answer: how do multigenerational considerations impact the optimal allocation policy? Next, we provide a brief summary of \citep{durlauf2008affirmative} and compare our work with it in terms of modeling choices and results.

\citeauthor{durlauf2008affirmative} focuses on a condition under which efficiency and equality considerations are aligned, and that is when a diverse body of students on college campuses improves the human capital development for all of them. 
\citet{durlauf2008affirmative} provides a model to compare the equality and efficiency of affirmative action policies with those of meritocratic policies in the context of admission rules to public universities. An admission rule maps the initial human capital of students (and possibly their demographic information, such as race) to their admission decisions. If a student is admitted, their human capital is then improved by the quality of the college they attend and the average human capital of their classmates in that college. The state wishes to employ an admission policy that maximizes the aggregate human capital. He argues that while the meritocratic policy may be efficient under certain conditions, the same holds for affirmative action policies (e.g., when ``diversity improves education for all students.'').  

In many respects, our model is similar to \citep{durlauf2008affirmative}, although we depart from his model and analysis in a few key areas. Most importantly, we model the dynamics of human-capital development across \emph{multiple generations} and derive the \emph{optimal allocation policy}.

Similar to our work, Durlauf studies a multi-generational model of a society, wherein each generation, a new member is born into every family/dynasty, and he/she replaces the current adult member of the family in the next generation. Even though our model is not expressed in terms of \emph{overlapping generations}, one can equivalently cast it in those terms.

The \emph{initial human capital} in Durlauf's model corresponds to our notion of \emph{success probability}. \emph{Adult human capital} in his model is determined by college attendance, and it roughly maps to our notion of \emph{success} (i.e., whether the individual succeeds if given the opportunity.)

In Durlauf's model, students always prefer classmates with higher human capital and colleges with higher fundamental quality. Unlike his model, we do not capture the \emph{spillover effects} of students attending the same college on each other. We also don't allow for various levels of college quality. (It may be worth noting that while in his general model, Durlauf allows for various college qualities, in his analysis he focuses on at most two quality levels---one college of high quality, the other of low quality.)

In both models, an admission rule maps a student's human capital and group membership into a binary outcome indicating whether the student is admitted. In our work, groups corresponds to \emph{socio-economic tiers}, whereas in Durlauf's, they correspond to \emph{racial groups}. In both models, the admission rule may vary across time and generations.

Other simplifying assumptions in Durlaufs model are:
\begin{itemize}
\item The level of state expenditures on education is fixed across generations. Similarly, alpha is fixed in our model.
\item The only output of universities is human capital. Similarly, we only care about the percentage of the population who succeed.
\item The initial human capital is accurately measured for every student. Similarly, we assume ability and success probability are perfectly observable.
\end{itemize}

As for the objective function, Durlauf assumes the policymaker wants to maximize the level of human capital among \emph{the next generation} of adult citizens. He restricts attention to two generations only, but he notes that ``When one moves to this dynamic perspective, one also needs to consider an \emph{inter-temporal} objective function; in parallel to my simplest specification, which is based on the level of human capital at one point in time, it is appropriate to consider a \emph{weighted average} of the levels across time. A standard way to make this move involves working with [...] a \emph{discounted sum} of adult human capital across two generations.'' The latter is precisely the objective function we aim to optimize. Unlike our work, Durlauf does not solve for the optimal policy. He merely compares the meritocratic policy with affirmative action in terms of the average human capital of the next generation. (Similar to our work, he defines a meritocratic admissions policy to mean that each student is admitted to college exclusively on the basis of their initial human capital. An affirmative action policy in his view is one that takes group membership into account as well.)
 
Both models take the perspective of a policymaker who needs to choose between affirmative action and meritocratic rules and aim to understand how efficiency and equity interact. Similar to our findings, he finds that depending on the parameters of the model, affirmative action can be more efficient than myopic meritocratic admission rules. In our model, this happens because the human capital of a student is directly impacted by that of his/her parent. In Durlauf's work, in contrast, this can happen because of the spillover effects of students on each other and the impact of college quality on developing adult human capital. In more concrete terms, Durlauf's model can capture the competing claims that "diversity improves education for all students on campus", or "stronger students going to the same college leads to higher social capital for all of them". Our model does not capture such \emph{spillover} effects. We instead focus on the \emph{intertemporal} or intergenerational effects of admission rules. As mentioned earlier, Durlauf emphasizes the importance of intergenerational factors in Section 6.2 of "Affirmative action, meritocracy, and efficiency". 
}

\subsection{MDPs with Continuous State and Action Spaces}\label{app:continuous}
Finding the optimal policy for a continuous state- and action-space MDP is often a difficult task~\citep{marecki2006fast}.
Two main categories of algorithms have
been proposed to (approximately) calculate the optimal policy: 
(1) A typical approach is to \textbf{discretize} the state and action space, then solve the resulting MDP using standard methods, such policy or value iteration. This approach suffers from the well-known ``curse of dimensionality''.
(2) Another approach \textbf{approximates} the optimal value function with a parametric form, then sets out to fit those parameters such that they satisfy the Bellman equation (see, e.g., ~\citep{li2005lazy,hauskrecht2004linear}.

\section{Limitations and Interpretations}\label{sec:limitations}
Our model is designed to incorporate the basic points we mentioned in the introduction in as simplified a fashion as possible; as such, it is important
to note some of its key limitations.
First, it is intended to model the effect of a single opportunity, and it treats other forms of mobility probabilistically in the background.
It also assumes that its fundamental parameters ($\alpha, \sigma, \tau, \gamma$) as constant over all generations.
It treats an individual's group membership ($A$ and $D$) and
ability as a complete description of their performance, rather than including any dependence on the group membership of the individual's parent. (That is, an individual in group $A$ performs the same in the model regardless of whether their parent belonged to group $A$ or $D$.)
All of these would be interesting restrictions to relax in an
extension of the model.

Much of the past theoretical work on intergenerational mobility focuses
on an issue that we do not consider here: the strategic considerations
faced by parents as they decide how much to consume in the present
generation and how much to pass on to their children.
Our interest instead has been in the optimization problem faced by
a social planner in allocating opportunities, treating the behavior of the agents as fixed and simple. Here too, it would be interesting to explore models that address these issues in combination.

Finally, because our focus is on intergenerational mobility in
a socioeconomic sense, we do not model discrimination based on race, ethnicity, or gender, and the role of race-based and gender-based affirmative action in combatting these effects.
The model is instead concerned with
\emph{socio-economic} or \emph{class-based}~\citep{malamud1995class,kahlenberg1996class} affirmative action.
That said, the ingredients here could be combined with models of statistical or taste-based discrimination on these attributes to better understand their interaction.

The simplicity of our model, however, does allow us to make a correspondingly fundamental point: that even a purely payoff-maximizing society can discover affirmative action policies from first principles as
it seeks to optimize the allocation of opportunities over multiple 
generations.
Moreover, the optimal allocation policy is deeply connected
to dynamic programming over the generations;
the society is essentially attempting to ``steer'' the balance
of group $A$ and group $D$ over time, making sure not to turn things
too abruptly (giving up present benefit) or too gradually (giving up
future benefit).

\section{Properties of the Value Function}\label{app:value}
In this section, we show that the value function $V(.)$ for the decision process $\cD$ is unique, continuous, differentiable, and monotonically decreasing. We begin by offering an alternative formulation of the decision process that has the exact same optimal policy and value function, but is more conducive to recursive analysis. Next, we prove several properties of the state space $\Phi$ and the reward function $R'$ for this alternative decision process. Then we apply previously-established theorems from dynamic programming and recursive analysis~\citep{stokey1989recursive} to establish the aforementioned properties of $V(.)$.

\paragraph{A Decision Process Equivalent to $\cD$}
Given the decision process $\cD =(\Phi, \Theta, S, R)$ (as defined in Section~\ref{sec:DP}), we first provide an alternative formulation, called $\cD'$, that fits the standard representation of decision processes in the Dynamic Programming literature. This standard formulation allows us to import tools and theorems from Dynamic Programming with minor modifications. 

We construct $\cD'$ from $\cD$ by re-defining the action space---not in terms of thresholds, but in terms of the states reachable through an admissible threshold from a given state. The state transitions in this formulation become trivial, but we need to re-write the reward function in terms of the new parameters (i.e., current and the next state). 

Formally, given $\cD$, we define $\cD' =(\Phi, \Gamma, I, R')$ as follows: 
\begin{itemize}
\item The correspondence $\Gamma: \Phi \rightarrow \Phi$ specifies the states reachable by one admissible $\theta_0$ from any given state $\phi_0 \in \Phi$. More precisely
$$\Gamma(\phi_0)=\{\omega \in \Phi \vert \exists \theta_0 \in \Theta(\phi_0) \text{ s.t. }\omega = S(\phi_0, \theta_0)\}.$$
\item $I: \Phi_0 \times \Phi_0 \rightarrow \Phi_0$ simply returns its second argument, that is, for all $\phi_0 \in [0,1]$, $I(\phi_0,\omega) = \omega$.
\item To recast the reward function in terms of $(\phi_0,\omega)$, we first write $\theta_0$ as a function of $\phi_0, \omega$:
{\scriptsize
\begin{eqnarray*}
&& \omega = \phi_0 - \frac{\phi_0}{2\sigma}(\sigma^2 - \theta^2_0) \\
&\Leftrightarrow & 2\sigma \frac{\omega}{\phi_0} + \sigma^2 - 2\sigma = \theta^2_0\\
&\Leftrightarrow & \theta_0 = \sqrt{2\sigma\frac{\omega}{\phi_0} + \sigma^2 - 2\sigma}
\end{eqnarray*}
}
Given the above change of variables, we can write:
{\scriptsize
$$R'(\phi_0, \omega_0) = 
\begin{cases}
R \left(\phi_0, \sqrt{2\sigma\frac{\omega}{\phi_0} + \sigma^2 - 2\sigma} \right) \quad \text{ if } \phi_0 > 0\\
R(0,\sigma) \quad \text{ if } \phi_0 = 0
\end{cases}
$$
}
\end{itemize}

\begin{property}\label{4.3}
$\Phi$ is a convex subset of $\reals$, and the correspondence $\Gamma$ is nonempty, compact-valued, and continuous.
\end{property}
\begin{proof}
$\Phi = [0,1]$, which is clearly a convex subset of $\reals$. The correspondence $\Gamma$ can be characterized as follows:
{\tiny
\begin{eqnarray*}
&& \Gamma(\phi_0) \\
&=&\{\omega \in \Phi \vert \exists \theta_0 \in \Theta(\phi_0) \text{ s.t. }\omega = S(\phi_0, \theta_0)\} \\
&=& \left\{ \omega \in \Phi \vert \omega = \phi_0 - \frac{\phi_0}{2\sigma} (\sigma^2 - \theta_0^2) \text{ where } \sigma \left( 1 - \frac{\alpha}{\phi_0} \right) \leq \theta_0 \leq \frac{\sigma(1-\alpha)}{\phi_0} \text{ and } 0 \leq \theta_0 \leq 1 \right\} \\
&=& \begin{cases}
               \omega \in \left[\phi_0\left(1-\frac{1}{2\sigma}\right) , \phi_0\right] \quad \text{ if } \phi_0 \leq \alpha\\
               \omega \in \left[\phi_0-\sigma\alpha \left(1+\frac{\alpha}{2\phi_0}\right) , \phi_0\right] \quad \text{ if } \alpha < \phi_0 \leq 1-\alpha\\
               \omega \in \left[\phi_0-\sigma\alpha \left(1+\frac{\alpha}{2\phi_0}\right)  , \phi_0-\frac{\phi_0}{2\sigma} \left( 1 - \frac{1-\alpha}{\phi_0} \right)\left( 1 + \frac{1-\alpha}{\phi_0} \right)\right] \quad \text{ if } \phi_0 \geq 1-\alpha
            \end{cases}
            \\
&=& \left[ \max \left\{ \phi_0\left(1-\frac{1}{2\sigma}\right), \phi_0-\sigma\alpha \left(1+\frac{\alpha}{2\phi_0}\right) \right\}  , \min\left\{ \phi_0, \phi_0-\frac{\phi_0}{2\sigma} \left( 1 - \frac{1-\alpha}{\phi_0} \right)\left( 1 + \frac{1-\alpha}{\phi_0} \right) \right\}\right]            
\end{eqnarray*}
}
Given the above definition, it is trivial to verify that $\Gamma$ is indeed nonempty, compact-valued, and continuous.
\end{proof}

\begin{property}\label{4.4}
The reward function $R'$ is bounded and continuous.
\end{property}
\begin{proof}
The reward function $R$ specifies the fraction of the population who succeed if given the opportunity. So clearly $0 \leq R(.,.) \leq 1$ is bounded. As a result $R'$ is also bounded. 
To establish continuity of $R'$, we first establish the continuity of $R$. It is trivial to see that $R(.,.)$ (defined in (\ref{eq:reward})) is continuous at any $\phi_0<1$. At $\phi_0 = 1$, we have

{\tiny
\begin{eqnarray*}
&& \lim_{\phi_0 \rightarrow 1} \frac{\phi_0}{2\sigma} (\sigma^2 - \theta_0^2)
+ \frac{1-\phi_0}{2\sigma} \left((\sigma+\tau)^2 - \left( \frac{\sigma(1-\alpha) + (1-\phi_0)\tau - \phi_0 \theta_0}{1-\phi_0}\right)^2\right)\\
&=& \frac{1}{2\sigma} (\sigma^2 - \theta_0^2) - \lim_{\phi_0 \rightarrow 1} 
\frac{1-\phi_0}{2\sigma} \left( \frac{\sigma(1-\alpha) + (1-\phi_0)\tau - \phi_0 \theta_0}{1-\phi_0}\right)^2\\
&=& \frac{1}{2\sigma} (\sigma^2 - \theta_0^2) - \lim_{\phi_0 \rightarrow 1} 
\frac{1}{2\sigma}  \frac{\left(\sigma(1-\alpha) + (1-\phi_0)\tau - \phi_0 \theta_0\right)^2}{1-\phi_0}\\
\text{($\Gamma(1) = \{\sigma(1-\alpha)\}$) } &=& \frac{1}{2\sigma} (\sigma^2 - \theta_0^2) - \lim_{\phi_0 \rightarrow 1} 
\frac{1}{2\sigma}  \frac{\left(\sigma(1-\alpha) + (1-\phi_0)\tau - \phi_0 \sigma(1-\alpha)\right)^2}{1-\phi_0}\\
\text{(L'Hospital's rule) }&=& \frac{1}{2\sigma} (\sigma^2 - \theta_0^2) - \lim_{\phi_0 \rightarrow 1} 
\frac{1}{\sigma}  \frac{ (\tau+\sigma(1-\alpha))\left(\sigma(1-\alpha) + (1-\phi_0)\tau - \phi_0 \sigma(1-\alpha)\right)}{1}\\
&=& \frac{1}{2\sigma} (\sigma^2 - \theta_0^2) - 0\\
 &=& \frac{1}{2\sigma} (\sigma^2 - \theta_0^2) \\
 &=& R(1, \theta_0).
\end{eqnarray*}
}
So $R$ is continuous at $\phi_0 = 1$, as well.
Finally, note that $\sqrt{2\sigma\frac{\omega}{\phi_0} + \sigma^2 - 2\sigma}$ is continuous function of $(\phi_0, \omega) \in \Gamma(\phi_0)$, so $R'$ is also continuous. 
\end{proof}

Let $C(\Phi)$ be the space of bounded continuous functions $f:\Phi_0 \rightarrow \reals$ with the sup norm $\Vert f \Vert = \max_{\phi_0 \in \Phi} \vert f(\phi_0)\vert$.
We define the operator $T$ on the space $C(\Phi)$ as follows:
$$(Tf)(\phi_0) = \max_{\omega \in \Gamma(\phi_0)} R'(\phi_0,\omega) + \gamma f(\omega).$$

\begin{theorem}[Adapted version of Theorem 4.6 in \citep{stokey1989recursive}] Let $\Phi$, $\Gamma$, and $R'$ satisfy Properties~\ref{4.3} and \ref{4.4}. Then the operator $T$ maps $C(\Phi)$ into itself, $T:C(\Phi) \rightarrow C(\Phi)$, and $T$ has a unique fixed point $V \in C(\Phi)$. Moreover, given $V$, the optimal policy correspondence $\Pi^*: \Phi \rightarrow \Phi$ is compact-valued and upper hemi continuous (u.h.c).
\end{theorem}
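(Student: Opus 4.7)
The plan is to follow the standard Stokey--Lucas--Prescott template in three steps: (i) verify that $T$ maps $C(\Phi)$ into itself by invoking Berge's Theorem of the Maximum; (ii) show $T$ is a contraction via Blackwell's sufficient conditions and apply Banach's fixed point theorem to obtain uniqueness; (iii) re-apply Berge's theorem to the Bellman operator evaluated at $V$ to extract the structural properties of $\Pi^*$.

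For step (i), I would take an arbitrary $f \in C(\Phi)$ and consider the joint map $(\phi_0, \omega) \mapsto R'(\phi_0, \omega) + \gamma f(\omega)$. By Property~\ref{4.4}, $R'$ is continuous and bounded; by assumption, $f$ is continuous and bounded; and by Property~\ref{4.3}, $\Gamma$ is nonempty, compact-valued, and continuous. Berge's Maximum Theorem then gives that $(Tf)(\phi_0) = \max_{\omega \in \Gamma(\phi_0)}[R'(\phi_0,\omega) + \gamma f(\omega)]$ is well-defined and continuous on $\Phi$. Boundedness of $Tf$ follows from $|Tf(\phi_0)| \leq \|R'\|_\infty + \gamma \|f\|_\infty < \infty$. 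Hence $T: C(\Phi) \to C(\Phi)$.

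For step (ii), I would verify Blackwell's two conditions. \emph{Monotonicity:} if $f \leq g$ pointwise, then for every $\phi_0$ and every $\omega \in \Gamma(\phi_0)$, $R'(\phi_0,\omega) + \gamma f(\omega) \leq R'(\phi_0,\omega) + \gamma g(\omega)$, so $Tf \leq Tg$. \emph{Discounting:} for any constant $a \geq 0$, $T(f+a)(\phi_0) = \max_{\omega \in \Gamma(\phi_0)}[R'(\phi_0,\omega) + \gamma f(\omega) + \gamma a] = (Tf)(\phi_0) + \gamma a$. These imply $\|Tf - Tg\|_\infty \leq \gamma \|f-g\|_\infty$; since $C(\Phi)$ equipped with the sup norm is a Banach space and $0 < \gamma < 1$, the Banach fixed point theorem furnishes a unique $V \in C(\Phi)$ with $TV = V$.

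For step (iii), having fixed $V$, the correspondence
\begin{equation*}
\Pi^*(\phi_0) \;=\; \arg\max_{\omega \in \Gamma(\phi_0)}\bigl[R'(\phi_0,\omega) + \gamma V(\omega)\bigr]
\end{equation*}
is again the argmax of a jointly continuous objective over a nonempty, compact-valued, continuous correspondence. The conclusion of Berge's Maximum Theorem then yields that $\Pi^*$ is nonempty, compact-valued, and upper hemi-continuous. The main obstacle here is essentially only bookkeeping: all substantive content is absorbed into the Maximum Theorem and Blackwell's conditions, and Properties~\ref{4.3}--\ref{4.4} were deliberately stated in the exact form those theorems require. The only minor care needed is to handle the slight non-standardness of the formulation (the fact that $\cD'$ replaces thresholds with successor states), but this has already been done by construction via the equivalence between $\cD$ and $\cD'$ spelled out earlier in the appendix.
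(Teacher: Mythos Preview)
Your proof is correct, and it is precisely the standard Stokey--Lucas--Prescott argument: Berge's Maximum Theorem to show $T$ preserves $C(\Phi)$, Blackwell's conditions to get the contraction, Banach to get the unique fixed point, and Berge again for the argmax correspondence. The paper does not actually give its own proof of this theorem; it simply states it as an adapted version of Theorem 4.6 in \citep{stokey1989recursive} and moves on, having set up Properties~\ref{4.3} and~\ref{4.4} exactly so that the cited theorem applies. So your write-up is not so much a different route as a spelled-out version of the proof the paper defers to the reference.
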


According to \citet{cotter2006non}, if the reward function $R'$ is differentiable, the value function $V$ is differentiable on any interior point that is an optimal ``next state'' for some current state. More precisely,

\begin{theorem}[Adapted version of Theorem 2 in \citep{cotter2006non}]
Suppose $\omega \in \Pi^*(\phi_0) \cap (0,1)$ for some $\phi_0 \in [0,1]$. If $R'$ is continuously differentiable, then $V$ is differentiable at $\omega$, with $\dd{\phi_0}V \vert_{\omega}=\dd{\phi_0}R' \vert_{(\omega, \omega')}$ for any $\omega' \in \Pi^*(\omega)$.
\end{theorem}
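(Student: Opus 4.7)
The plan is to adapt the classical Benveniste--Scheinkman envelope argument, but since $V$ is not assumed to be concave, I will need the two-sided bounding trick that motivates Cotter--Park's extension: the lower support comes from feasibility of a fixed continuation from nearby states, while the upper support comes precisely from the hypothesis that $\omega$ is itself an optimal continuation from some parent state $\phi_0$.

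First I would fix $\omega' \in \Pi^{*}(\omega)$ and construct the lower support. Because $\omega \in (0,1)$ is interior and, by Property~\ref{4.3}, $\Gamma$ is a continuous compact-valued (interval-valued) correspondence, there is an open neighbourhood $N$ of $\omega$ and a continuous selection $\eta \mapsto \psi(\eta) \in \Gamma(\eta)$ with $\psi(\omega) = \omega'$. In fact, since the $\Gamma(\phi_0)$ computed in Property~\ref{4.3} are nondegenerate intervals whose endpoints are $C^1$ in $\phi_0$, I can choose $\psi(\eta) = \omega'$ constantly for $\eta$ in a small enough neighbourhood of $\omega$ whenever $\omega'$ is interior to $\Gamma(\omega)$, and otherwise shift $\omega'$ by an $O(\eta - \omega)$ perturbation whose derivative I can control. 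Taking $\psi(\eta) \equiv \omega'$ in the generic case, I set
\[
W(\eta) \;=\; R'(\eta,\omega') + \gamma V(\omega').
\]
By the Bellman equation $V(\eta) \ge W(\eta)$ on $N$, with equality at $\eta = \omega$ since $\omega' \in \Pi^{*}(\omega)$, and $W$ is $C^1$ by hypothesis with $W'(\omega) = \partial R'/\partial\phi_0\big|_{(\omega,\omega')}$.

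Next I would construct a matching upper support using the hypothesis $\omega \in \Pi^{*}(\phi_0)$. By continuity of $\Gamma$, and interiority of $\omega$ in $[0,1]$, there is a small neighbourhood $N'$ of $\omega$ contained in $\Gamma(\phi_0)$. For $\eta \in N'$, optimality of $\omega$ at $\phi_0$ gives
\[
R'(\phi_0,\omega) + \gamma V(\omega) \;\ge\; R'(\phi_0,\eta) + \gamma V(\eta),
\]
which rearranges to $V(\eta) \le U(\eta)$ where
\[
U(\eta) \;=\; V(\omega) + \tfrac{1}{\gamma}\bigl[R'(\phi_0,\omega) - R'(\phi_0,\eta)\bigr],
\]
with equality at $\eta=\omega$. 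The function $U$ is $C^1$ with $U'(\omega) = -\tfrac{1}{\gamma}\,\partial R'/\partial\eta\big|_{(\phi_0,\omega)}$.

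Finally, I would combine the two bounds $W(\eta) \le V(\eta) \le U(\eta)$ (each tight at $\omega$) to pin down the Dini derivatives of $V$ at $\omega$. Taking right- and left-hand difference quotients and using differentiability of $W$ and $U$ at $\omega$ yields
\[
W'(\omega) \;\le\; \underline{D}^{\pm} V(\omega) \;\le\; \overline{D}^{\pm} V(\omega) \;\le\; U'(\omega).
\]
To close the argument I must show $W'(\omega) = U'(\omega)$. This is the key first-order/envelope identity: at the optimum $\omega$ of $\eta \mapsto R'(\phi_0,\eta) + \gamma V(\eta)$, any linear minorant $L$ of this objective that touches at $\omega$ must have $L'(\omega)=0$; applying this to $\eta \mapsto R'(\phi_0,\eta) + \gamma W(\eta) - [R'(\phi_0,\omega) + \gamma V(\omega)]$, which is $\le 0$ near $\omega$ and equals $0$ at $\omega$, forces $\partial R'/\partial\eta\big|_{(\phi_0,\omega)} + \gamma\, W'(\omega) = 0$, i.e.\ exactly $W'(\omega) = U'(\omega)$. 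Consequently all four Dini derivatives of $V$ coincide and equal $W'(\omega)$, so $V$ is differentiable at $\omega$ with
\[
V'(\omega) \;=\; \tfrac{\partial R'}{\partial \phi_0}\Big|_{(\omega,\omega')},
\]
independent of the chosen $\omega' \in \Pi^{*}(\omega)$.

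The main obstacle will be the feasibility step needed to build $W$ when $\omega'$ happens to lie on the boundary of $\Gamma(\omega)$: there one cannot simply hold $\omega'$ fixed, and one must instead use a continuous selection $\psi(\eta)$ and show its derivative contribution vanishes at $\omega$. Using the explicit interval formula for $\Gamma(\phi_0)$ derived in Property~\ref{4.3}, whose endpoints are smooth functions of $\phi_0$, one obtains an explicit $\psi$ and verifies that the envelope identity still holds. Everything else follows from the cited Bellman fixed-point theorem, the differentiability of $R'$, and standard Dini-derivative squeezing.
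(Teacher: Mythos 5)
The paper does not actually prove this statement: it is imported as an ``adapted version'' of Theorem~2 of Cotter and Park, and the paper's only contribution toward it is the verification (Property~\ref{4.9}) that $R'$ is continuously differentiable so that the citation applies. So there is no in-paper proof to compare against; what you have written is a reconstruction of the cited result. Your reconstruction does capture the correct mechanism --- the two-sided support sandwich, with the lower support $W(\eta)=R'(\eta,\omega')+\gamma V(\omega')$ coming from feasibility of the \emph{fixed} continuation $\omega'$ from states $\eta$ near $\omega$, the upper support $U$ coming from the hypothesis that $\omega$ is itself an optimal continuation from $\phi_0$, and the two slopes matched by the interior first-order condition for $\eta\mapsto R'(\phi_0,\eta)+\gamma W(\eta)$ at its maximizer $\omega$. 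This is exactly the idea that lets one dispense with concavity (where Benveniste--Scheinkman would fail), and the Dini-derivative squeeze is executed correctly.

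However, two feasibility steps are asserted rather than established, and they are where the real content of the adaptation lies. First, your upper support needs a \emph{two-sided} neighbourhood of $\omega$ inside $\Gamma(\phi_0)$; you justify this ``by continuity of $\Gamma$ and interiority of $\omega$ in $[0,1]$,'' but $\omega\in(0,1)$ does not imply $\omega\in\operatorname{int}\Gamma(\phi_0)$. In this very model $\Gamma(\phi_0)$ is a proper subinterval of $[0,1]$ whose upper endpoint equals $\phi_0$ whenever $\phi_0\le 1-\alpha$, so the stay-put successor $\omega=\phi_0$ (i.e.\ $\theta_0=\sigma$, which \emph{is} optimal below the tipping point) lies on the boundary of the feasible set; there both $V(\eta)\le U(\eta)$ and the first-order condition $g'(\omega)=0$ become one-sided and the sandwich does not close. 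Second, your fallback for the lower support when $\omega'$ sits on the boundary of $\Gamma(\omega)$ --- replacing the constant continuation by a selection $\psi(\eta)$ --- makes $W(\eta)=R'(\eta,\psi(\eta))+\gamma V(\psi(\eta))$ depend on $V$ near $\omega'$, and differentiating that term needs differentiability (or at least Lipschitz control) of $V$ at $\omega'$, which is unavailable and outright circular when $\omega'=\omega$. You flag this as ``the main obstacle'' but do not resolve it. As written, the argument is complete only under the extra hypotheses $\omega\in\operatorname{int}\Gamma(\phi_0)$ and $\omega'$ feasible from a full neighbourhood of $\omega$; handling the boundary cases is precisely what the cited Cotter--Park machinery supplies, so the proposal should either add those hypotheses or defer those cases explicitly to the citation.
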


It only remains to show that:
\begin{property}\label{4.9}
$R'$ is continuously differentiable on the interior of its domain.
\end{property}
\begin{proof}
To establish continuous differentiablity of $R'$ with respect to $\omega$, note that:
{\scriptsize
$$\dd{\omega} R'(\phi_0, \omega) = \dd{\theta_0} R(\phi_0, \theta_0) \dd{\omega} \theta_0.$$
}
and both terms on the right hand side of the above equation are continuous.
{\scriptsize
$$\dd{\theta_0} R(\phi_0, \theta_0) = \frac{\phi_0}{\sigma} \left( \frac{\sigma(1-\alpha) + (1-\phi_0)\tau - \theta_0}{1-\phi_0} \right) $$
$$\dd{\omega} \theta_0  = \frac{\sigma}{\sqrt{2\sigma\omega\phi_0 + (\sigma^2 - 2\sigma)\phi_0^2}}$$
}
Therefore, $\dd{\omega} R'(\phi_0, \omega)$ is trivially continuous at any $(\phi_0,\omega) \in (0,1)^2$.

To establish continuous differentiability of $R'$ with respect to $\phi_0$, note that:
{\scriptsize
$$\dd{\phi_0} R'(\phi_0, \omega) = \dd{\phi_0} R(\phi_0, \theta_0)  + \dd{\theta_0} R(\phi_0, \theta_0) \dd{\phi_0} \theta_0.$$}
It is easy to see that all three terms in the right hand side of the above are continuous:
{\scriptsize
\begin{eqnarray*}
\dd{\phi_0} R(\phi_0, \theta_0)  &=& \frac{1}{2\sigma} (\sigma^2 - \theta_0^2) - \frac{1}{2\sigma} \left((\sigma+\tau)^2 - \left( \frac{\sigma(1-\alpha) + (1-\phi_0)\tau - \phi_0 \theta_0}{1-\phi_0}\right)^2\right)\\
&&- \frac{1}{\sigma}\left( \frac{\sigma(1-\alpha) + (1-\phi_0)\tau - \phi_0 \theta_0}{1-\phi_0}\right) \left( \frac{\sigma(1-\alpha) - \theta_0}{1-\phi_0} \right)
\end{eqnarray*}
}
{\scriptsize
$$\dd{\phi_0} \theta_0  = \frac{\sigma\omega}{\phi_0^2 \sqrt{2\sigma\omega\phi_0 + (\sigma^2 - 2\sigma)\phi_0^2}}$$
}
Therefore, $\dd{\phi_0} R'(\phi_0, \omega)$ is continuous at any $(\phi_0,\omega)$ in the interior of $R'$s domain.
\end{proof}

\begin{proposition}
$V(\phi_0)$ is monotonically decreasing at all $\phi_0 \in (0,1)$.
\end{proposition}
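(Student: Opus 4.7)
The plan is to prove strict decrease via a coupling-of-trajectories argument that compares the optimal trajectory starting at $\phi_0'$ with one starting at $\phi_0$ that uses the same sequence of thresholds. Fix any $0 < \phi_0 < \phi_0' < 1$. Let $\{\theta_0^{(t)}\}_{t \geq 0}$ be the action sequence induced by an optimal stationary policy from $\phi_0'$, producing states $\phi_0'(0) = \phi_0'$, $\phi_0'(t+1) = S(\phi_0'(t), \theta_0^{(t)})$, and satisfying $V(\phi_0') = \sum_{t\geq 0} \gamma^t R(\phi_0'(t), \theta_0^{(t)})$. I then apply the same threshold sequence from the smaller starting state: $\phi_0(0) = \phi_0$, $\phi_0(t+1) = S(\phi_0(t), \theta_0^{(t)})$. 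Since $V(\phi_0)$ upper-bounds the discounted payoff of any admissible (possibly non-stationary) policy, it suffices to show that the payoff of this coupled trajectory strictly exceeds $V(\phi_0')$.

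The coupling rests on three monotonicity facts about $\cD$. \emph{(i) Admissibility is monotone}: from (\ref{eq:action_lb})--(\ref{eq:action:ub}) the lower endpoint of $\Theta_\phi$ is non-increasing and the upper endpoint non-decreasing in $\phi$, so $\phi \leq \phi'$ implies $\Theta_{\phi'} \subseteq \Theta_\phi$. \emph{(ii) The transition is strictly increasing in $\phi$ at every admissible $\theta_0$}: $S(\phi, \theta_0) = \phi\bigl(1 - (\sigma^2 - \theta_0^2)/(2\sigma)\bigr)$ is linear in $\phi$ with strictly positive slope. \emph{(iii) The one-step reward is strictly decreasing in $\phi_0$ at every admissible $\theta_0$}:
\[
\dd{\phi_0} R(\phi_0, \theta_0) \;=\; \frac{-\bigl(\theta_1(\phi_0,\theta_0) - \tau - \theta_0\bigr)^2 - 2\tau(\sigma - \theta_0)}{2\sigma} \;<\; 0.
\]
To derive (iii) I would substitute $\theta_1$ from (\ref{eq:theta_1}) into (\ref{eq:reward}), differentiate in $\phi_0$ at fixed $\theta_0$, and use the rearranged capacity constraint (\ref{eq:capacity-constraint}) in the form $\sigma(1-\alpha) - \theta_0 = (1-\phi_0)(\theta_1 - \tau - \theta_0)$ to collapse the result into the displayed sum-of-squares form. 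Both terms in the numerator are non-positive, and they can vanish simultaneously only if $\theta_0 = \sigma$ and $\theta_1 = \sigma+\tau$, which forces $\alpha = 0$; so for any $\alpha > 0$ the inequality is strict.

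A straightforward induction on $t$ then ties the three facts together: $\phi_0(t) < \phi_0'(t)$ combined with (i) makes $\theta_0^{(t)}$ admissible at $\phi_0(t)$, so the coupled policy is feasible at time $t$; (ii) propagates the inequality to $\phi_0(t+1) < \phi_0'(t+1)$; and (iii) gives $R(\phi_0(t), \theta_0^{(t)}) > R(\phi_0'(t), \theta_0^{(t)})$ at every $t \geq 0$. Multiplying by $\gamma^t > 0$ and summing,
\[
V(\phi_0) \;\geq\; \sum_{t=0}^\infty \gamma^t R(\phi_0(t),\theta_0^{(t)}) \;>\; \sum_{t=0}^\infty \gamma^t R(\phi_0'(t),\theta_0^{(t)}) \;=\; V(\phi_0'),
\]
where the strict inequality is already forced by the $t=0$ term.

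The principal obstacle is the computation behind fact~(iii). Because $\theta_1$ depends nontrivially on $\phi_0$ through the capacity constraint, naively differentiating $R$ in $\phi_0$ produces a mixture of terms of opposite signs whose sum is not transparently negative. Recognizing that the capacity identity rearranges the derivative into a clean sum of two non-positive terms is the crucial trick that makes the sign immediate, with the only would-be degenerate case coinciding with the vacuous $\alpha = 0$ setting.
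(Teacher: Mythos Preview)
Your coupling argument is correct and delivers strict monotonicity directly. One small slip in fact~(i): the endpoint directions are reversed—the lower endpoint $\max\{0,\sigma(1-\alpha/\phi)\}$ is non-\emph{decreasing} and the upper endpoint $\min\{\sigma,\sigma(1-\alpha)/\phi\}$ is non-\emph{increasing} in $\phi$; that is precisely what yields $\Theta_{\phi'}\subseteq\Theta_{\phi}$ for $\phi\leq\phi'$, so your conclusion stands and the rest of the argument goes through.

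Your route is genuinely different from the paper's. The paper splits at the tipping point $\phi_0^*$: below it, monotonicity is read off the closed-form value function~(\ref{eq:value}); above it, a \emph{one-step} Bellman comparison is used—take the optimal action $\theta_0$ at $\phi_0$, land at some $\phi_0'<\phi_0$, and observe that any intermediate $\phi_0''\in(\phi_0',\phi_0)$ can reach the same $\phi_0'$ with a strictly higher threshold $\theta_0''>\theta_0$, which is asserted to yield a higher immediate reward, whence $V(\phi_0'')\geq R(\phi_0'',\theta_0'')+\gamma V(\phi_0')\geq V(\phi_0)$. You instead couple entire trajectories under the \emph{same} action sequence and push all the work into the calculus identity $2\sigma\,\partial R/\partial\phi_0=-(\theta_1-\tau-\theta_0)^2-2\tau(\sigma-\theta_0)$, which is sharper than anything the paper computes for this step and makes the sign (and the degenerate $\alpha=0$ case) transparent. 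This buys you independence from the $\phi_0^*$ case split and the closed-form region, strict rather than weak decrease on all of $(0,1)$, and no reliance on the differentiability of $V$ established in Appendix~\ref{app:value}. The paper's argument is shorter per step but leans on those prior structural results and leaves the ``higher threshold at smaller state gives higher reward'' comparison asserted rather than computed.
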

\begin{proof}
For any $\phi_0 \leq \phi^*_0$, the statement is easy to verify given the closed form expression for the value function in (\ref{eq:value}).
For any $\phi_0 >\phi^*_0$ we show that $V$ is decreasing in an open neighborhood on the left side of $\phi_0$. Since $V$ is differentiable at $\phi_0$, this implies $V'(\phi_0) \leq 0$. 

Let $\phi'_0$ be the state we get to if we apply the optimal threshold $\theta_0$ at $\phi_0$. We have that $\phi'_0 < \phi_0$ (note that at $\phi_0 >\phi^*_0$, $\theta_0 < \sigma$, which implies $\phi'_0 < \phi_0$). Now for any state $\phi'_0 < \phi''_0 < \phi_0$, we can show that $V(\phi''_0)>V(\phi_0)$. This is simply because we can reach $\phi'_0$ from $\phi''_0$ using a threshold $\theta''_0 > \theta_0$. To see this, note that:
{\scriptsize
\begin{eqnarray*}
&& S(\phi_0, \theta_0) = \phi'_0  = S(\phi''_0, \theta''_0) \\
&\Leftrightarrow &\phi_0 - \frac{\phi_0}{2\sigma} (\sigma^2 - \theta_0^2) = \phi''_0 - \frac{\phi''_0}{2\sigma} (\sigma^2 - {\theta''_0}^2)\\
&\Leftrightarrow & \phi_0 (1 - \frac{\sigma}{2} + \frac{\theta_0^2}{2\sigma}) = \phi''_0 (1 - \frac{\sigma}{2} + \frac{{\theta''_0}^2}{2\sigma})
\end{eqnarray*}
}
Since $\phi_0 > \phi''_0$, it must be the case that ${\theta''_0} > \theta_0$ for the above equation to hold.

Next, observe that compared to applying $\theta_0$ at $\phi_0$, using the higher threshold of $\theta''_0$ at $\phi''_0$ leads to a higher immediate reward and the same next state value at $\phi''_0$. From this observation, we can conclude that the value of $\phi''_0$ is higher than that of $\phi_0$, because:
{\scriptsize
\begin{eqnarray*}
V(\phi''_0) &\geq & r(\phi''_0, \theta''_0) + \gamma V(\phi'_0) \\
&\geq & r(\phi_0, \theta_0) + \gamma V(\phi'_0) \\
&= & V(\phi_0).
\end{eqnarray*}
}
\end{proof}

\begin{figure*}[h!]
    \centering
    \begin{subfigure}[b]{0.31\textwidth}
        \includegraphics[width=\textwidth]{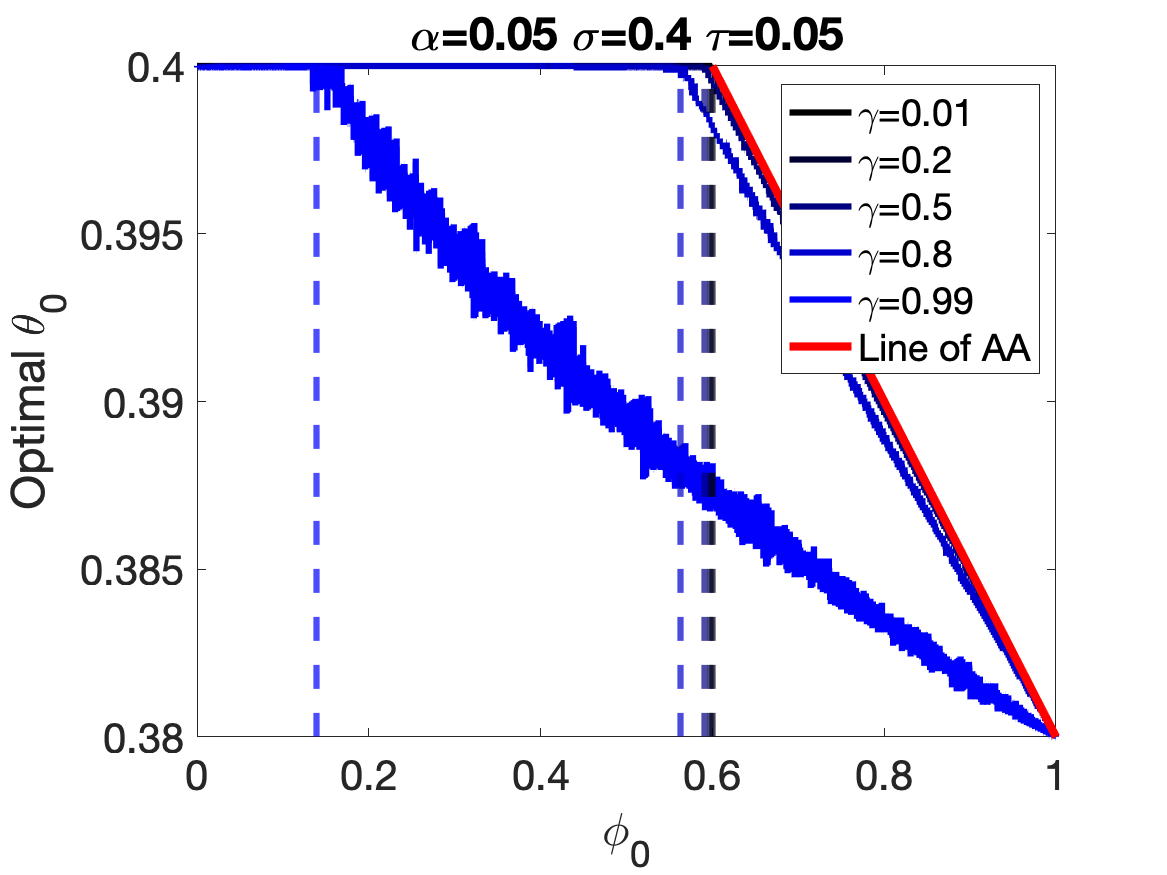}
    \end{subfigure}
    \begin{subfigure}[b]{0.31\textwidth}
        \includegraphics[width=\textwidth]{Figures/Policy_Gamma_alpha_15_sigma_40_tau_10.png}
    \end{subfigure}
    \begin{subfigure}[b]{0.31\textwidth}
        \includegraphics[width=\textwidth]{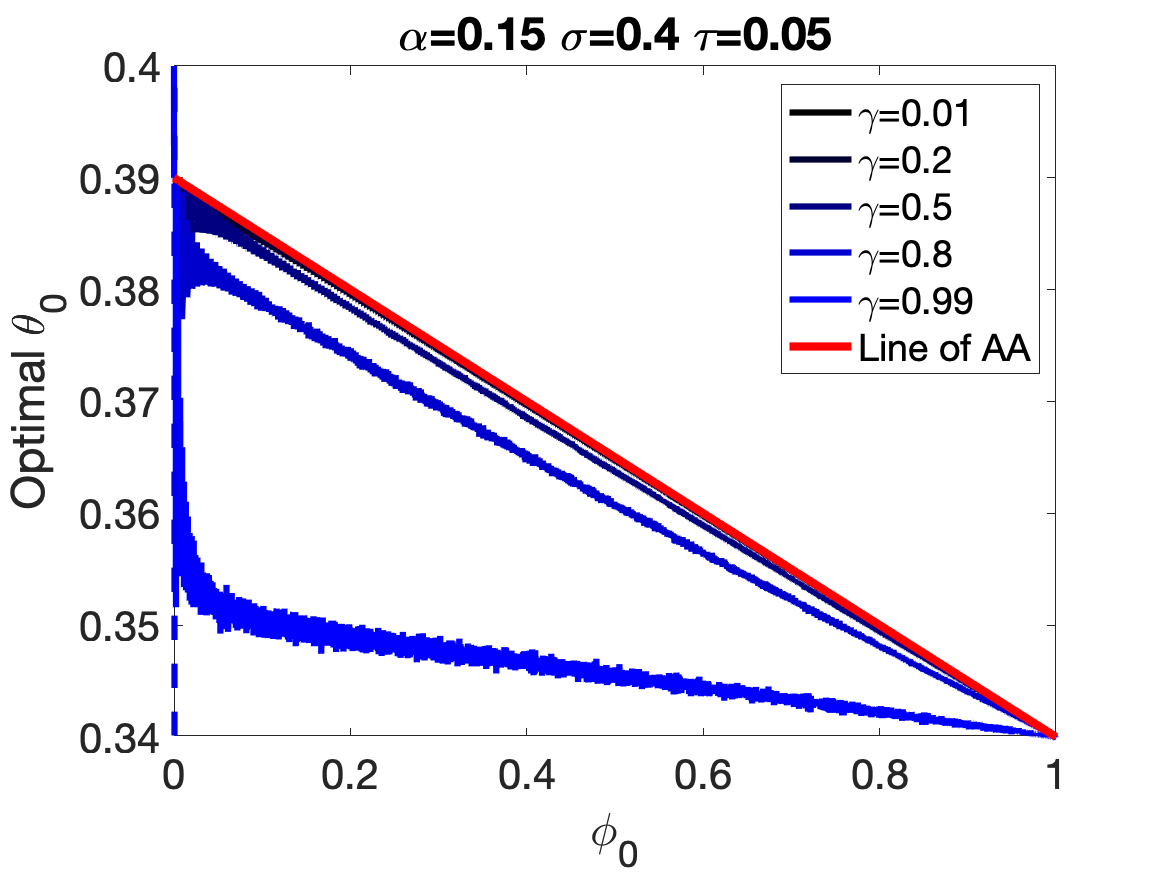}
    \end{subfigure}
   \caption{The optimal policy $\theta_0$ at every state $0 \leq \phi_0 \leq 1$ for various settings of $\alpha, \sigma, \tau$, and $\gamma$. Note that in all cases, \textbf{the optimal threshold is monotonically decreasing with $\phi_0$}. Moreover, there exists a point below which $\sigma$ is the only optimal threshold, and above which $\sigma$ is no longer optimal. Observe that this point coincides with the tipping point, $\phi^*_0$, established in Theorem~\ref{thm:main} (depicted by dashed lines). The dotted red line illustrates the line of affirmative action, derived in Lemma~\ref{lem:AA-line}. Notice that \textbf{beyond $\phi^*_0$, the optimal policy is always below the line of affirmative action}.}\label{fig:policy}
\end{figure*}
\begin{figure*}[h!]
    \centering
    \begin{subfigure}[b]{0.31\textwidth}
        \includegraphics[width=\textwidth]{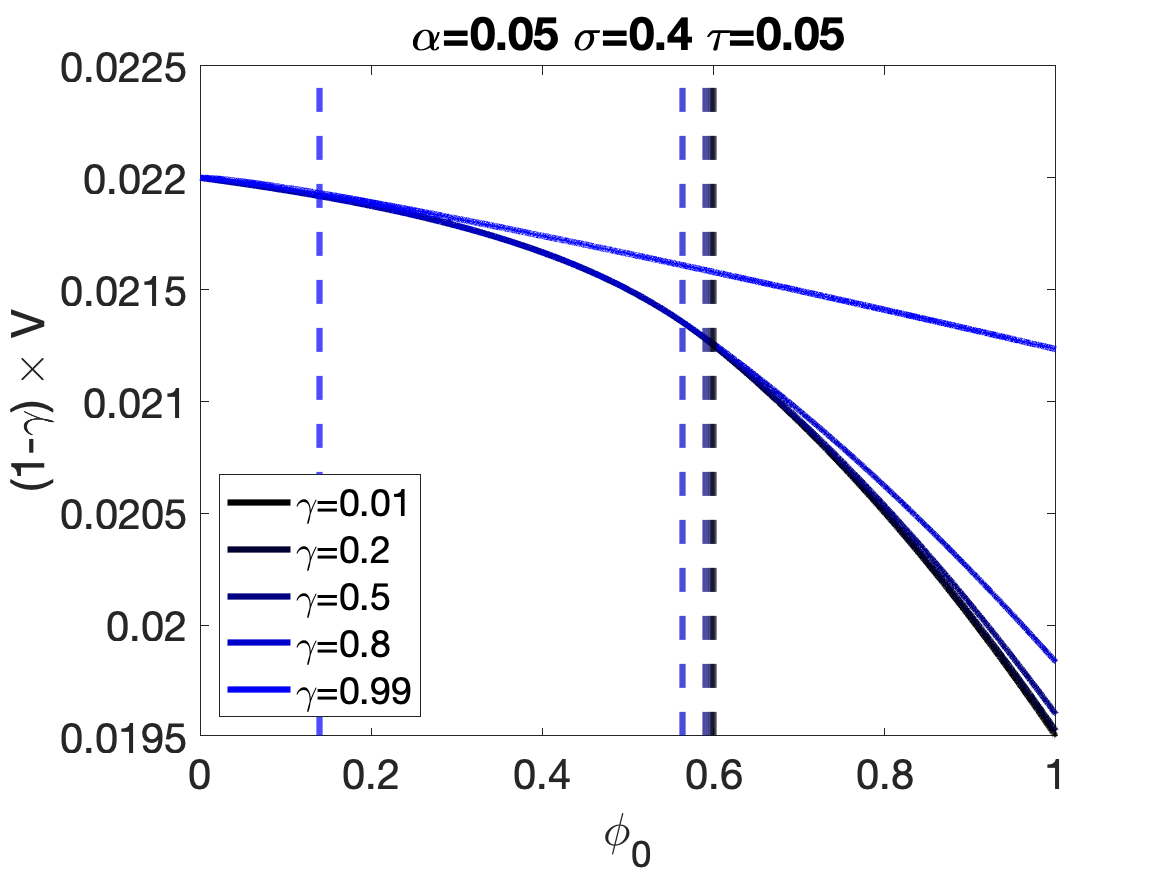}
    \end{subfigure}
    \begin{subfigure}[b]{0.31\textwidth}
        \includegraphics[width=\textwidth]{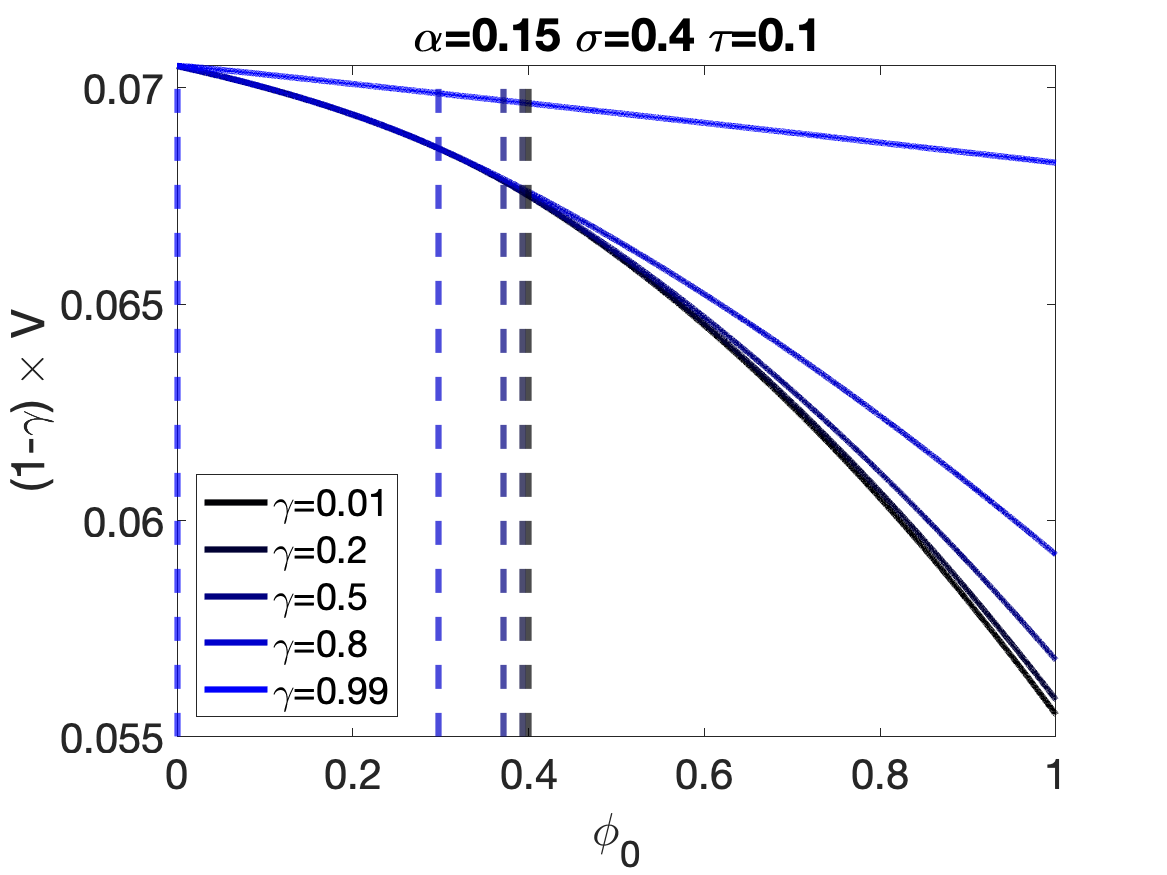}
    \end{subfigure}
    \begin{subfigure}[b]{0.31\textwidth}
        \includegraphics[width=\textwidth]{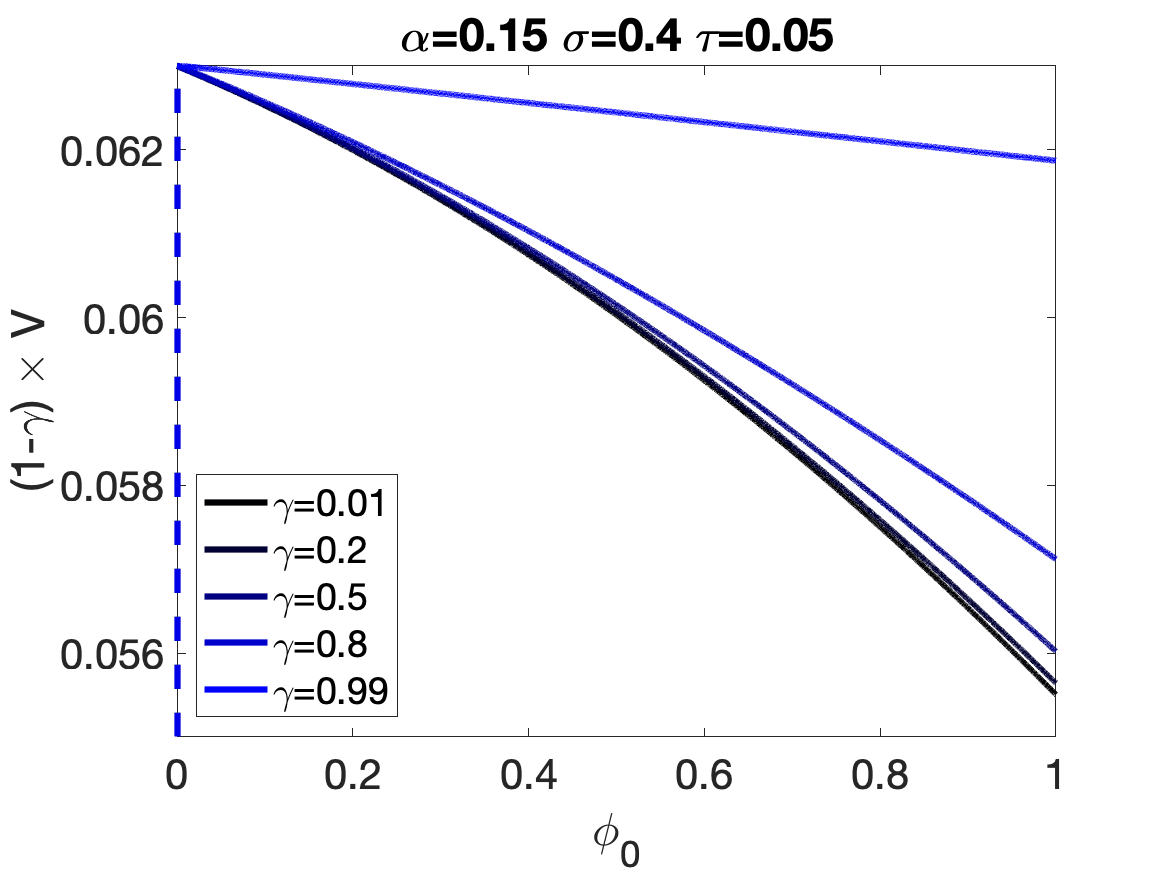}
    \end{subfigure}
   \caption{The scaled value function at every state $0 \leq \phi_0 \leq 1$ for various settings of $\alpha, \sigma, \tau$, and $\gamma$. The dashed lines specify the tipping point, $\phi^*_0$. Note that in all cases, \textbf{the value function is continuous, concave, and decreasing}.}\label{fig:value}
\end{figure*}

\begin{figure*}[h!]
    \centering
    \begin{subfigure}[b]{0.31\textwidth}
        \includegraphics[width=\textwidth]{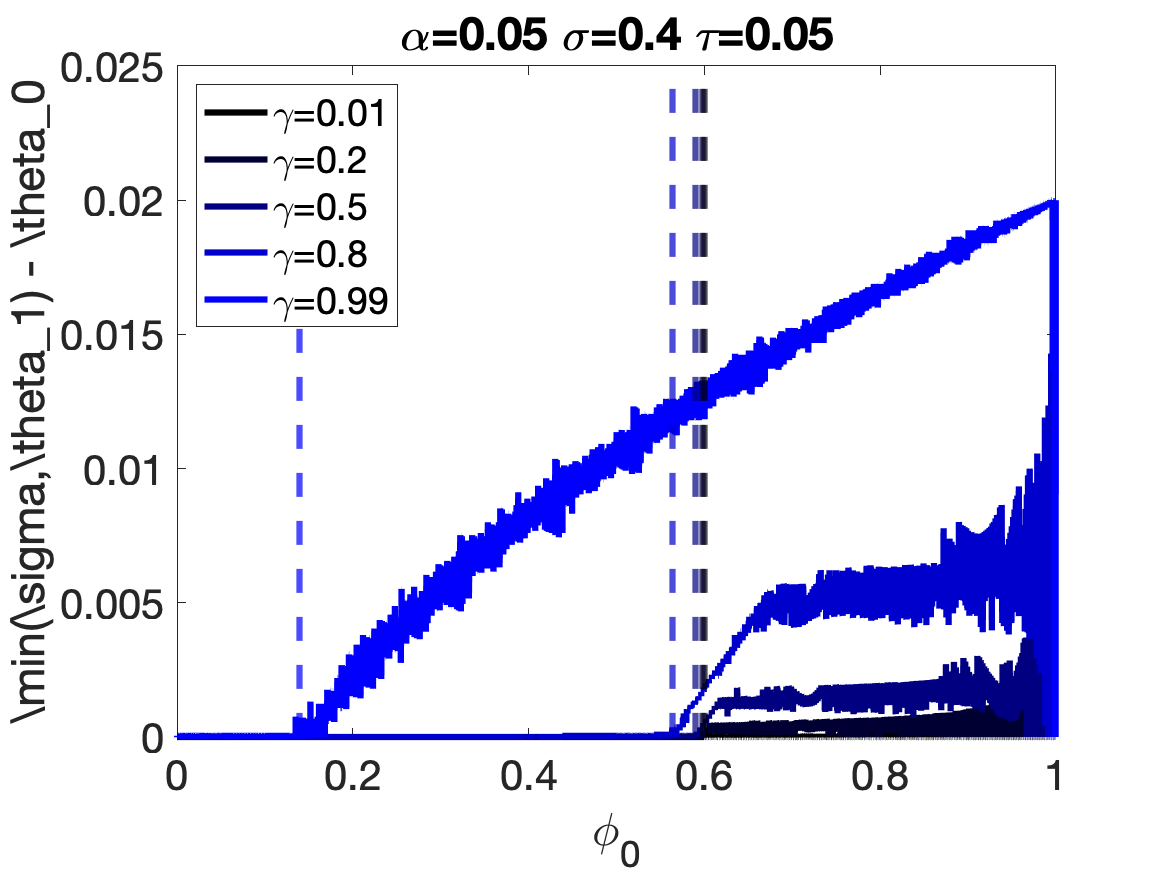}
    \end{subfigure}
    \begin{subfigure}[b]{0.31\textwidth}
        \includegraphics[width=\textwidth]{Figures/AA_Gamma_alpha_15_sigma_40_tau_10.png}
    \end{subfigure}
    \begin{subfigure}[b]{0.31\textwidth}
        \includegraphics[width=\textwidth]{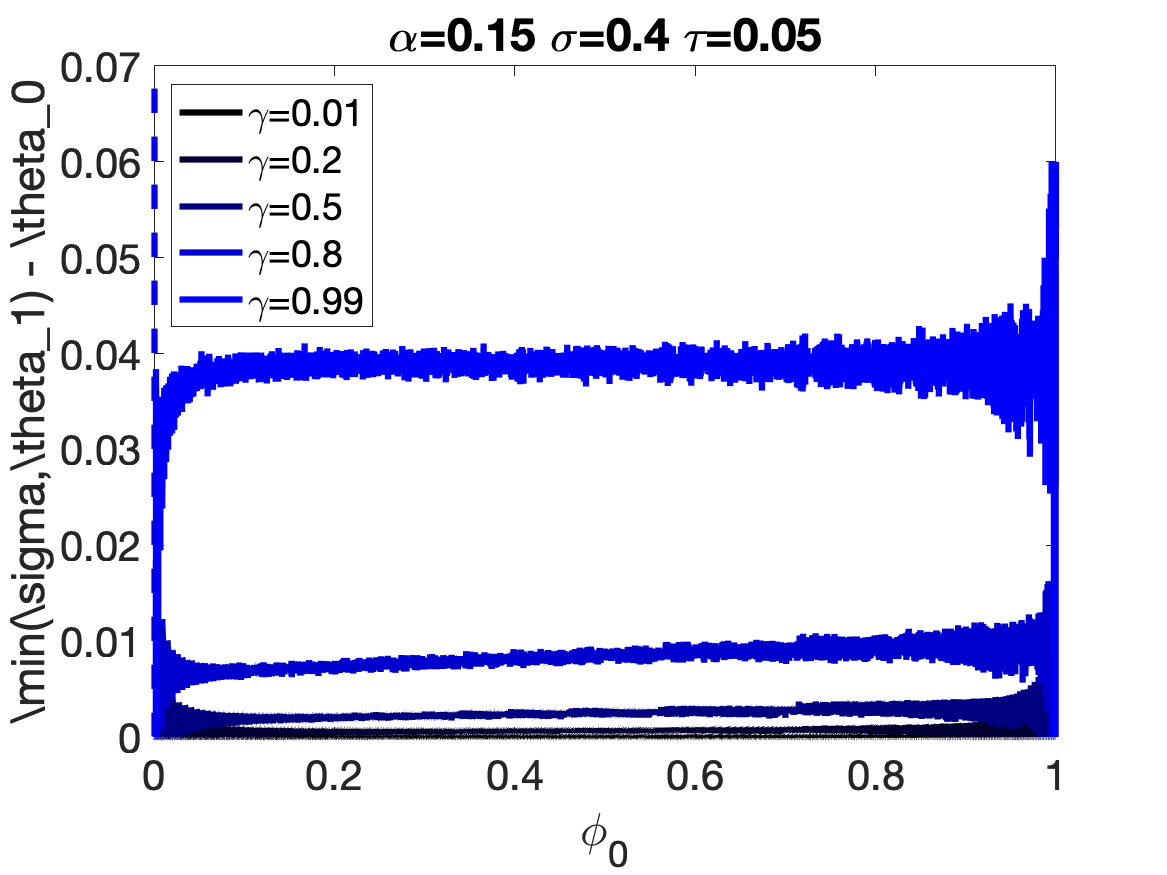}
    \end{subfigure}
   \caption{The difference between $\theta_0$ and $\theta_1$ at every state $0 \leq \phi_0 \leq 1$  for various settings of $\alpha, \sigma, \tau$, and $\gamma$. The dashed lines specify the tipping point, $\phi^*_0$. Note that strict affirmative action is only employed beyond $\phi^*_0$. Also note that \textbf{the extent of affirmative action is monotonically increasing in $\phi_0$}.}\label{fig:AA}
\end{figure*}

\begin{figure*}[h!]
    \centering
    \begin{subfigure}[b]{0.31\textwidth}
        \includegraphics[width=\textwidth]{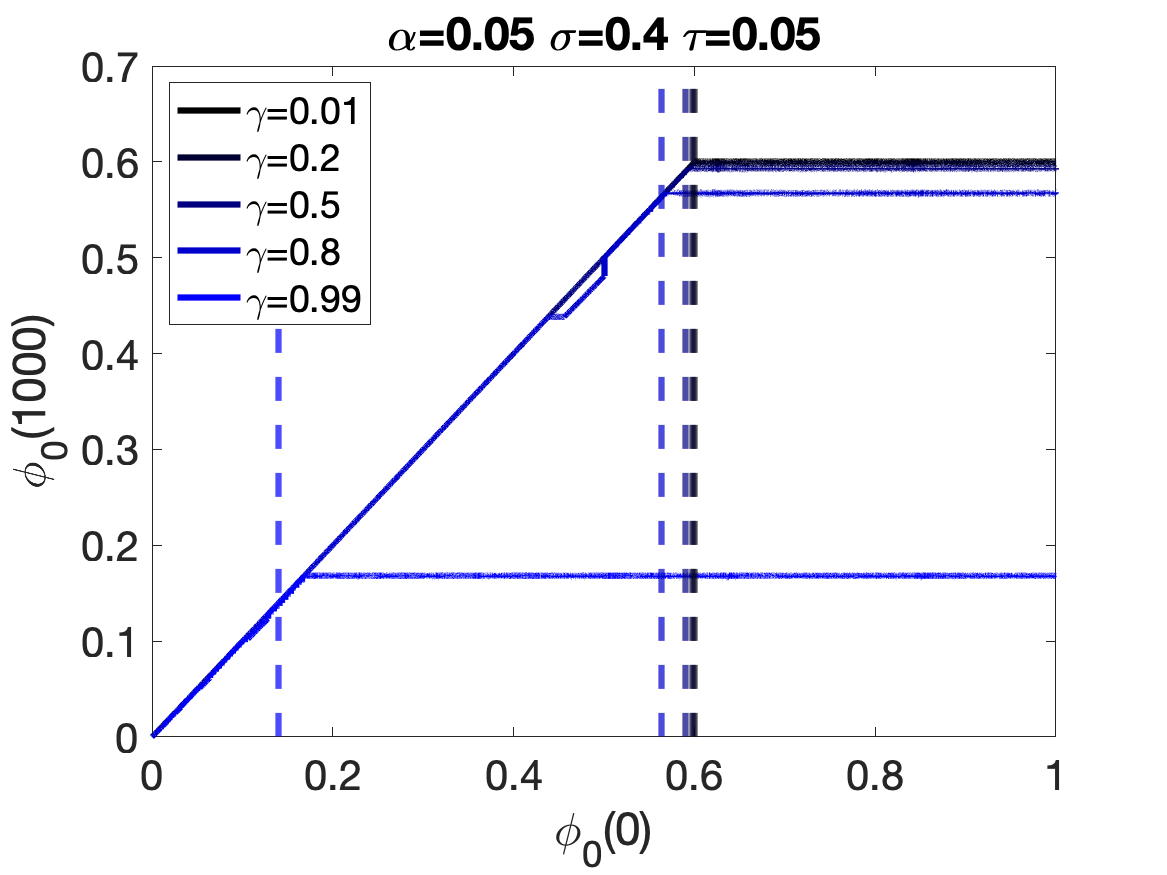}
    \end{subfigure}
    \begin{subfigure}[b]{0.31\textwidth}
        \includegraphics[width=\textwidth]{Figures/Convergence_Gamma_alpha_15_sigma_40_tau_10.png}
    \end{subfigure}
    \begin{subfigure}[b]{0.31\textwidth}
        \includegraphics[width=\textwidth]{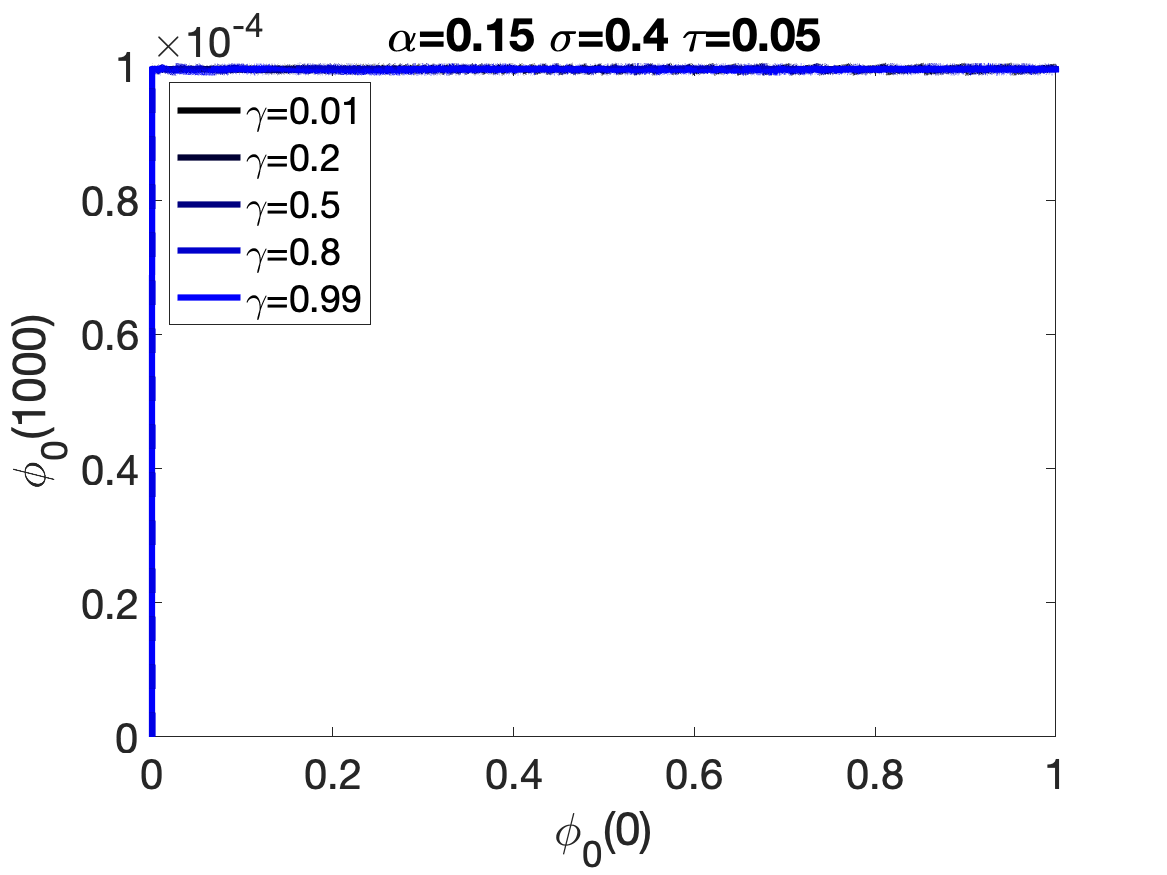}
    \end{subfigure}
   \caption{The state the optimal policy converges to with the initial state $\phi_0$. The dashed lines specify the tipping point, $\phi^*_0$. Note that \textbf{the optimal policy never shrinks the size of group $D$ to a value less than $\phi^*_0$}.}\label{fig:convergence}
\end{figure*}

\subsection{Omitted Proofs}\label{app:omitted_proofs}

\xhdr{Proof of Lemma~\ref{lem:phi_0_star}}
\begin{proof}
We utilize \emph{backward induction} to pinpoint the largest state at which $\sigma$ is an optimal threshold.
First, observe that if $\phi_0$ if sufficiently small, $\sigma \in \Pi^*_0(\phi_0)$. To see this, note that at least for $\phi_0=0$, $\sigma \in \Pi^*_0(\phi_0)$. Second, note that if $\sigma \in \Pi^*_0(\phi_0)$), Bellman optimality (\ref{eq:bellman}) must hold:
\begin{eqnarray*}
V(\phi_0) &=&  R(\phi_0, \sigma) + \gamma V(S(\phi_0, \sigma)) \\
&=&  R(\phi_0, \sigma) + \gamma V(\phi_0),
\end{eqnarray*}
where in the second line we utilized the fact that $S(\phi_0, \sigma) = \phi_0$ for all $\phi_0 \in [0,1]$. This fact can be readily verified through (\ref{eq:state}).
Rearranging the above equation, we obtain:
$$V(\phi_0) = \frac{1}{1-\gamma}R(\phi_0, \sigma).$$
Replacing $R$ above with its definition (\ref{eq:reward}) and plugging $\sigma$ in place of $\theta_0$, we obtain:
{\scriptsize
\begin{eqnarray}\label{eq:value}
V(\phi_0) &=& \frac{1}{1-\gamma}\frac{1-\phi_0}{2\sigma} \left((\sigma+\tau)^2 - \left( \frac{\sigma(1-\alpha) + (1-\phi_0)\tau - \phi_0 \sigma}{1-\phi_0}\right)^2\right) \nonumber\\
&=& \frac{1}{1-\gamma}\frac{1-\phi_0}{2\sigma} \left((\sigma+\tau)^2 - \left( \frac{-\sigma\alpha + (1-\phi_0)\tau  + (1- \phi_0) \sigma}{1-\phi_0}\right)^2\right)\nonumber\\
&=& \frac{1}{1-\gamma}\frac{1-\phi_0}{2\sigma} \left((\sigma+\tau)^2 - \left(  \sigma + \tau - \frac{\sigma\alpha}{1-\phi_0}\right)^2\right)\nonumber\\
&=& \frac{1}{1-\gamma}\frac{1-\phi_0}{2\sigma} \frac{\sigma\alpha}{1-\phi_0} \left(2(\sigma+\tau) - \frac{\sigma\alpha}{1-\phi_0}\right)\nonumber\\
&=& \frac{\alpha}{(1-\gamma)} \left((\sigma+\tau) - \frac{\sigma\alpha}{2(1-\phi_0)}\right)
\end{eqnarray}
}
A corollary of the above is that
$$V(0) = \frac{\alpha}{(1-\gamma)} \left((\sigma+\tau) - \frac{\sigma\alpha}{2}\right).$$

Next, we derive the largest $\phi_0$ at which $\sigma$ remains an optimal threshold. Let's denote this point by $\tilde{\phi}_0$.
From Bellman optimality, we know that an action is optimal if and only if it is optimal with respect to the (optimal) value function. So taking (\ref{eq:value}) as the value function $V$ up to $\tilde{\phi}_0$, any optimal threshold at $\phi_0 \leq \tilde{\phi}_0$ must maximize $R(\phi_0, \theta_0) + \gamma V(S(\phi_0, \theta_0))$. Since $R(\phi_0, \theta_0) + \gamma V(S(\phi_0, \theta_0))$ is differentiable and $\tilde{\phi}_0$ is the largest state at which $\sigma$ is an optimal policy, $\sigma$ must satisfy the first order condition at $\tilde{\phi}_0$---that is, the partial derivative of $R(\phi_0, \theta_0) + \gamma V(S(\phi_0, \theta_0))$ (w.r.t. $\theta_0$) must be 0 at $(\tilde{\phi}_0, \sigma)$. 

We can derive the derivative of $R(\phi_0, \theta_0) + \gamma V(S(\phi_0, \theta_0))$ as follows:
{\scriptsize
\begin{equation}\label{eq:partial}
\dd{\theta_0} \left\{ R(\phi_0, \theta_0) + \gamma V(S(\phi_0, \theta_0)) \right\} =  \dd{\theta_0}  R(\phi_0, \theta_0) + \gamma \dd{\theta_0}\  S(\phi_0, \theta_0) \dd{\phi_0} V(S(\phi_0, \theta_0)).
\end{equation}
}
We can calculate each term in (\ref{eq:partial}) as follows:
{\scriptsize
\begin{eqnarray}\label{eq:partial_reward}
\dd{\theta_0} R(\phi_0, \theta_0) &=& -\frac{\phi_0\theta_0}{\sigma} + 
\frac{\phi_0}{\sigma} \left( \frac{\sigma(1-\alpha) + (1-\phi_0)\tau - \phi_0 \theta_0}{1-\phi_0}\right) \nonumber\\
&=&  \frac{\phi_0}{\sigma} \left( \frac{\sigma(1-\alpha) + (1-\phi_0)\tau - \phi_0 \theta_0}{1-\phi_0} - \theta_0\right) \nonumber\\
&=&  \frac{\phi_0}{\sigma} \left( \frac{\sigma(1-\alpha) + (1-\phi_0)\tau - \theta_0}{1-\phi_0} \right).
\end{eqnarray}
}
\begin{equation}\label{eq:partial_state}
\dd{\theta_0} S(\phi_0, \theta_0) = \frac{\phi_0\theta_0}{\sigma}.
\end{equation}
\begin{equation}\label{eq:partial_value}
\dd{\phi_0} V(\phi_0) = -\frac{\sigma \alpha^2}{2(1-\gamma)(1-\phi_0)^2}.
\end{equation}
Plugging (\ref{eq:state}), (\ref{eq:partial_value}), (\ref{eq:partial_reward}), and (\ref{eq:partial_state}) into (\ref{eq:partial}), we obtain that:
{\scriptsize
\begin{eqnarray}\label{eq:derivative}
&& \dd{\theta_0} \left\{ R(\phi_0, \theta_0) + \gamma V(S(\phi_0, \theta_0)) \right\} = \nonumber\\
&& \frac{\phi_0}{\sigma} \left( \frac{\sigma(1-\alpha) + (1-\phi_0)\tau - \theta_0}{1-\phi_0} \right) 
- \gamma \frac{\phi_0\theta_0}{\sigma}
\frac{\sigma \alpha^2}{2(1-\gamma)(1-\phi_0 + \frac{\phi_0}{2\sigma} (\sigma^2 - \theta_0^2))^2}.
\end{eqnarray}
}

As mentioned earlier, at $(\tilde{\phi}_0, \sigma)$, the derivative (\ref{eq:derivative}) must amount to 0. Therefore, to find $\tilde{\phi}_0$, we must solve the following equation (obtained by replacing $\theta_0$ in (\ref{eq:derivative}) with $\sigma$):
{\scriptsize
 \begin{eqnarray*}
&& 0 = \frac{\phi_0}{\sigma} \left( \frac{\sigma(1-\alpha) + (1-\phi_0)\tau - \sigma}{1-\phi_0} \right) 
- \gamma \phi_0
\frac{\sigma \alpha^2}{2(1-\gamma)(1-\phi_0)^2}\\
 &\Rightarrow & 0 =  (-\alpha\sigma + (1-\phi_0)\tau)(1-\phi_0) - \gamma\frac{\sigma^2 \alpha^2}{2(1-\gamma)}\\
&\Rightarrow & 0 = \tau (1-\phi_0)^2 - \alpha\sigma (1-\phi_0)
- \frac{\gamma\sigma^2 \alpha^2}{2(1-\gamma)} \\
%
%
\end{eqnarray*}
}
(Note that in the second line of the derivation above, we multiplied both sides by $\sigma(1-\phi_0)^2/\phi_0)$.)
Solving the above quadratic equation for $(1-\phi_0)$, we have:
\begin{equation*}
\tilde{\phi}_0 \in 1 - \frac{\alpha \sigma}{2\tau} \left( 1 \pm \sqrt{1 + \frac{2\tau\gamma}{1-\gamma}} \right)
\end{equation*}
To obtain the tightest bound, we pick the smaller value among the above two possibilities:
\begin{equation*}
\tilde{\phi}_0 = 1 - \frac{\alpha \sigma}{2\tau} \left( 1 + \sqrt{1 + \frac{2\tau\gamma}{1-\gamma}} \right)
\end{equation*}
Note that $\tilde{\phi}_0$ must always be between $0$ and $(1-\alpha)$ (due to the budget constraints illustrated in Figure~\ref{fig:actions}). So the above derivation only goes through if  $\tilde{\phi}_0 \leq (1-\alpha)$. Therefore, we have:
$
\phi^*_0 =  \max \left\{ 0, \min \left\{ 1-\alpha , \tilde{\phi}_0 \right\} \right\}.
$
\end{proof}

\xhdr{Proof of Proposition~\ref{prop:strict-AA}}
\begin{proof}
Note that according to Lemma~\ref{lem:AA-leq}, for all $\theta_0 \in \Pi^*_0(\phi_0)$ and all $\theta_1 \in \Pi^*_1(\phi_0)$, $\theta_0 \leq \theta_1$. It only remains to show that the inequality is strict. That is, for all $\theta_0 \in \Pi^*_0(\phi_0)$ and all $\theta_1 \in \Pi^*_1(\phi_0)$, $\theta_0 \neq \theta_1$. 

Suppose not and there exists $\theta_0 \in \Pi^*_0(\phi_0)$ and $\theta_1 \in \Pi^*_1(\phi_0)$, such that $\theta_0 = \theta_1$. According to Lemma~\ref{lem:AA-line}, this implies that $\theta_0 = \sigma(1-\alpha)+\tau(1-\phi_0)$. Next we show that $\theta_0 = \sigma(1-\alpha)+\tau(1-\phi_0)$ cannot be an optimal threshold at $\phi_0$.

Recall that 
$$\Pi^*_0(\phi_0) = \arg\max_{\theta_0} R(\phi_0, \theta_0) + \gamma V(S(\phi_0, \theta_0)).$$
So if $\sigma(1-\alpha)+\tau(1-\phi_0) \in \Pi^*_0(\phi_0)$, it must satisfy the following first-order condition:
{\scriptsize
\begin{eqnarray*}
\dd{\theta_0} \left( R(\phi_0, \theta_0) + \gamma V(S(\phi_0, \theta_0)) \right) =  \dd{\theta_0}  R(\phi_0, \theta_0) + \gamma \dd{\theta_0}\  S(\phi_0, \theta_0) \dd{\phi_0} V(S(\phi_0, \theta_0)) = 0
\end{eqnarray*}
}
But note that $\dd{\theta_0} R(\phi_0, \theta_0)$ at $\theta_0 = \sigma(1-\alpha)+\tau(1-\phi_0)$ is 0:
{\scriptsize
\begin{eqnarray*}
\dd{\theta_0} R(\phi_0, \theta_0) &=&  \frac{\phi_0}{\sigma} \left( \frac{\sigma(1-\alpha) + (1-\phi_0)\tau - \theta_0}{1-\phi_0} \right) \\
&=&  \frac{\phi_0}{\sigma} \left( \frac{\sigma(1-\alpha) + (1-\phi_0)\tau - \left(\sigma(1-\alpha)+\tau(1-\phi_0)\right)}{1-\phi_0} \right) \\
&=&  \frac{\phi_0}{\sigma} \left( \frac{0}{1-\phi_0} \right) = 0
\end{eqnarray*}
}
So $\theta_0 = \sigma(1-\alpha)+\tau(1-\phi_0)$ can only be optimal if 
$$\dd{\theta_0}\  S(\phi_0, \theta_0) \dd{\phi_0} V(S(\phi_0, \theta_0)) = 0.$$ But this equality cannot hold because
$\dd{\theta_0} S(\phi_0, \theta_0) = \frac{\phi_0\theta_0}{\sigma} > 0,$
and $\dd{\phi_0} V(S(\phi_0, \theta_0)) < 0$. So $\theta_0 = \sigma(1-\alpha)+\tau(1-\phi_0)$ cannot be the optimal threshold at $\phi_0$.
\end{proof}

\subsection{Derivation of $\gamma^*$, $\tau^*$, and $\alpha^*$}\label{app:star}
The precise derivation of $\phi_0^*$ in Equation \ref{eq:phi_star} allows us to gain insight into how the interaction between the primitive parameters of our model can promote or avert affirmative action. In Figure~\ref{fig:star}, we focus on the interactions among $\alpha, \tau, \gamma$ (for simplicity assuming that $\sigma = 1-\tau$) and illustrate the regimes of persistent affirmative action (i.e., $\phi_0^* \leq 0$). We define and investigate the following quantities:
\begin{itemize}
\item Given $\tau$ and $\alpha$, $\gamma^*$ specifies the minimum discount factor required for $\phi_0^* \leq 0$.
\item Given $\gamma$ and $\alpha$, $\tau^*$ specifies the maximum level of $\tau$ that can maintain $\phi_0^* \leq 0$.
\item Given $\tau$ and $\gamma$, $\alpha^*$ specifies the minimum level opportunities required for $\phi_0^* \leq 0$.
\end{itemize}
Next, we derive the above quantities using Equation \ref{eq:phi_star}. In what follows, we assume $\tau > \alpha \sigma$, because we have
\begin{eqnarray*}
1 - \frac{\alpha \sigma}{2\tau} \left( 1 + \sqrt{1 + \frac{2\tau\gamma}{1-\gamma}}\right)
& \leq &  1 - \frac{\alpha \sigma}{2\tau} \left( 1 + 1\right)\\
& \leq &  1 - \frac{\alpha \sigma}{\tau}
\end{eqnarray*}
and if $\tau < \alpha \sigma$, then $1 - \frac{\alpha \sigma}{\tau} < 1- 1 = 0$ and $\phi^*_0 = 0$.

\paragraph{Derivation of $\tau^*$}
Assuming that $\alpha > 0$, $\tau, \gamma \in (0,1)$, $\sigma = 1-\tau$ and $2\tau > \sigma \alpha$, 
{\scriptsize
\begin{eqnarray*}
&& 1 - \frac{\alpha \sigma}{2\tau} \left( 1 + \sqrt{1 + \frac{2\tau\gamma}{1-\gamma}}\right) \leq 0 \nonumber\\
&\Leftrightarrow & 1 \leq \frac{\alpha \sigma}{2\tau} \left( 1 + \sqrt{1 + \frac{2\tau\gamma}{1-\gamma}} \right) \nonumber\\
\text{(assuming $\alpha> 0$ and $0 < \tau < 1$)} &\Leftrightarrow & \frac{2\tau}{\alpha \sigma} \leq  1 + \sqrt{1 + \frac{2\tau\gamma}{1-\gamma}}  \nonumber\\
&\Leftrightarrow & \frac{2\tau}{\alpha \sigma} - 1\leq   \sqrt{1 + \frac{2\tau\gamma}{1-\gamma}}  \nonumber\\
\text{(assuming $2\tau > \alpha (1-\tau)$)}  &\Leftrightarrow & \left( \frac{2\tau}{\alpha \sigma} - 1 \right)^2 \leq   1 + \frac{2\tau\gamma}{1-\gamma} \nonumber\\
&\Leftrightarrow & \frac{4\tau^2}{\alpha^2 \sigma^2} - \frac{4\tau}{\alpha \sigma} + 1 \leq   1 + \frac{2\tau\gamma}{1-\gamma} \nonumber\\
&\Leftrightarrow & \frac{4\tau^2}{\alpha^2 \sigma^2} - \frac{4\tau}{\alpha \sigma} \leq    \frac{2\tau\gamma}{1-\gamma}  \nonumber\\
\text{(divide by $\tau$ assuming $\tau >0$)}   &\Leftrightarrow & \frac{2\tau}{\alpha^2 \sigma^2} - \frac{2}{\alpha \sigma} \leq    \frac{\gamma}{1-\gamma}  \nonumber\\
\end{eqnarray*}
\begin{eqnarray}
&\Leftrightarrow & 2\tau - 2\alpha \sigma \leq    \frac{\gamma \alpha^2 \sigma^2}{1-\gamma}  \nonumber\\
&\Leftrightarrow & 2\tau - 2\alpha (1-\tau) \leq    \frac{\gamma \alpha^2}{1-\gamma} (1-\tau)^2  \nonumber\\
&\Leftrightarrow & 2 - 2 + 2\tau - 2\alpha (1-\tau) \leq    \frac{\gamma \alpha^2}{1-\gamma} (1-\tau)^2  \nonumber\\
&\Leftrightarrow & 2 - 2(1-\tau) - 2\alpha (1-\tau) \leq    \frac{\gamma \alpha^2}{1-\gamma} (1-\tau)^2  \nonumber\\
&\Leftrightarrow & 0 \leq   \frac{\gamma \alpha^2}{1-\gamma} (1-\tau)^2  + 2(1+\alpha)(1-\tau) - 2 \nonumber\\
&\Leftrightarrow & 0 \leq   \frac{\gamma \alpha^2}{2(1-\gamma)} (1-\tau)^2  + (1+\alpha)(1-\tau) - 1 \nonumber\\
&\Leftrightarrow & \tau \leq   1- \frac{-(1+\alpha) + \sqrt{ (1+\alpha)^2 + \frac{2\gamma \alpha^2}{(1-\gamma)}}}{\frac{\gamma \alpha^2}{(1-\gamma)}} \label{eq:tau_star}
\end{eqnarray}
}
where the last line is derived by obtaining the roots of the quadratic function $q(x) = \frac{\gamma \alpha^2}{2(1-\gamma)} x^2  + (1+\alpha)x - 1$, as follows:
{\scriptsize
\begin{equation*}
x^*_{1} = \frac{-(1+\alpha) - \sqrt{ (1+\alpha)^2 + \frac{2\gamma \alpha^2}{(1-\gamma)}}}{\frac{\gamma \alpha^2}{(1-\gamma)}} \text{  ,  } x^*_{2} = \frac{-(1+\alpha) + \sqrt{ (1+\alpha)^2 + \frac{2\gamma \alpha^2}{(1-\gamma)}}}{\frac{\gamma \alpha^2}{(1-\gamma)}}.
\end{equation*}
}
Note that $\frac{\gamma \alpha^2}{2(1-\gamma)} > 0$, so $q(x) \geq 0$ if and only if $x < x^*_{1}$ or $x > x^*_{2}$. Note that $x^*_{1}<0$ so if we know $x$ is positive, then $q(x) \geq 0$ if and only if $x > x^*_{2}$. Replacing $x$ with $(1-\tau)$, gives us equation (\ref{eq:tau_star}).

\paragraph{Derivation of $\alpha^*$}
Assuming that $\tau \in (0,1)$ and $\gamma < 1$,
{\scriptsize
\begin{eqnarray*}
&& 1 - \frac{\alpha \sigma}{2\tau} \left( 1 + \sqrt{1 + \frac{2\tau\gamma}{1-\gamma}}\right) \leq 0\\
&\Leftrightarrow & 1 \leq \frac{\alpha \sigma}{2\tau} \left( 1 + \sqrt{1 + \frac{2\tau\gamma}{1-\gamma}} \right)\\
&\Leftrightarrow & \frac{2\tau}{\sigma} \left( 1 + \sqrt{1 + \frac{2\tau\gamma}{1-\gamma}} \right)^{-1} \leq  \alpha
\end{eqnarray*}
}

\paragraph{Derivation of $\gamma^*$}
Assuming that $\alpha, \tau, \sigma, \gamma \in (0,1)$ and $\tau > \sigma \alpha$, 
{\scriptsize
\begin{eqnarray*}
&& 1 - \frac{\alpha \sigma}{2\tau} \left( 1 + \sqrt{1 + \frac{2\tau\gamma}{1-\gamma}}\right) \leq 0\\
&\Leftrightarrow & 1 \leq \frac{\alpha \sigma}{2\tau} \left( 1 + \sqrt{1 + \frac{2\tau\gamma}{1-\gamma}} \right)\\
&\Leftrightarrow & \frac{2\tau}{\alpha \sigma} \leq  1 + \sqrt{1 + \frac{2\tau\gamma}{1-\gamma}} \\
&\Leftrightarrow & \frac{2\tau}{\alpha \sigma} - 1\leq   \sqrt{1 + \frac{2\tau\gamma}{1-\gamma}} \\
&\Leftrightarrow & \left( \frac{2\tau}{\alpha \sigma} - 1 \right)^2 \leq   1 + \frac{2\tau\gamma}{1-\gamma} \\
&\Leftrightarrow & \left( \frac{2\tau}{\alpha \sigma} - 1 \right)^2 - 1 \leq   \frac{2\tau\gamma}{1-\gamma} \\
&\Leftrightarrow & \frac{1}{2\tau}\left( \frac{2\tau}{\alpha \sigma} - 1 \right)^2 - \frac{1}{2\tau} \leq   \frac{\gamma}{1-\gamma} \\
&\Leftrightarrow & 1 - \left( \frac{1}{2\tau}\left( \frac{2\tau}{\alpha \sigma} - 1 \right)^2 - \frac{1}{2\tau} + 1 \right)^{-1} \leq   \gamma \\
\end{eqnarray*}
}

\end{document}